\setlist{nolistsep}
\newcommand{\LPList}{Link-and-Persist}
\newcommand{\LFList}{Link-Free} 
\newcommand{\SOFTList}{SOFT} 
\newcommand{\physdelList}{Physical-Delete}
\crefname{section}{\S}{\S\S}
\newcommand{\myparagraph}[1]{\noindent \textbf{#1.}}
\definecolor{codegreen}{rgb}{0,0.6,0}
\definecolor{codegray}{rgb}{0.5,0.5,0.5}
\definecolor{codepurple}{rgb}{0.58,0,0.82}
\definecolor{backcolour}{rgb}{0.95,0.95,0.92}
\definecolor{anti-flashwhite}{rgb}{0.95, 0.95, 0.96}
\lstdefinestyle{mystyle}{
  numberbychapter=false,
  mathescape=true,
  backgroundcolor=\color{anti-flashwhite},
  commentstyle=\color{codegray},
  keywordstyle=\color{magenta},
  numberstyle=\tiny\color{codegray},
  stringstyle=\color{codepurple},
  basicstyle=\ttfamily\footnotesize,
  breakatwhitespace=false,         
  breaklines=true,
  numberblanklines=false,
  captionpos=b,                    
  keepspaces=true,                 
  numbers=left,                    
  numbersep=5pt,                  
  showspaces=false,                
  showstringspaces=false,
  escapeinside={//}{\^^M},
  showtabs=false,                  
  tabsize=1,
  belowskip=-7mm, 
  keywordstyle=\color{codegreen},
  otherkeywords={for,each,true,false, class, struct, def}, morekeywords={type,subtype,break,continue,if,else,end,loop,while,do,done,exit, when,then,return,read,and,or,not,for,each,boolean,procedure,invoke,iteration,until,goto,wait}
}
\newtheorem{invariant}[theorem]{Invariant}
\crefname{invariant}{Invariant}{Invariants}
\definecolor{highlight}{rgb}{1.0, 0.0, 0.0} 
\begin{document}

\title{The Fence Complexity of Persistent Sets}

\author{Gaetano Coccimiglio$^1$ \and
Trevor Brown$^1$ \and
Srivatsan Ravi$^2$ \\
$^1$ University of Waterloo
$^2$ University of Southern California
}

\institute{}
\def\fullpaper{1}

\maketitle
\begin{abstract}
We study the psync complexity of concurrent sets in the non-volatile shared memory model. 
Flush instructions are used in non-volatile memory to force shared state to be written back to non-volatile memory and must typically be accompanied by the use of expensive fence instructions to enforce ordering among such flushes.
Collectively we refer to a flush and a fence as a psync.
The safety property of strict linearizability forces crashed operations to take effect before the crash or not take effect at all;
the weaker property of durable linearizability enforces this requirement only for operations that have completed prior to the crash event.
We consider lock-free implementations of list-based sets and prove two lower bounds. 
We prove that for any durable linearizable lock-free set there must exist an execution where some process must perform at least one redundant psync as part of an update operation. 
We introduce an extension to strict linearizability specialized for persistent sets that we call strict limited effect (SLE) linearizability.
SLE linearizability explicitly ensures that operations do not take effect after a crash which better reflects the original intentions of strict linearizability.
We show that it is impossible to implement SLE linearizable lock-free sets in which read-only (or search) operations do not flush or fence.
We undertake an empirical study of persistent sets that examines various algorithmic design techniques and the impact of flush instructions in practice. 
We present concurrent set algorithms that provide matching upper bounds and rigorously evaluate them against existing persistent sets to expose the impact of algorithmic design and safety properties on psync complexity in practice as well as the cost of recovering the data structure following a system crash.

\end{abstract}
%
%
%

\section{Introduction}

Byte-addressable Non-Volatile Memory (NVM) is now commercially available, thus accelerating the need for efficient \emph{persistent} concurrent data structure algorithms. 
We consider a model in which systems can experience full system crashes.
When a crash occurs the contents of volatile memory are lost but the contents of NVM remain persistent.
Following a crash a recovery procedure is used to bring the data structure back to a consistent state using the contents of NVM.
In order to force shared state to be written back to NVM the programmer is sometimes required to explicitly \emph{flush} shared objects to NVM by using \textit{explicit flush} and \textit{persistence fence} primitives, the combination of which is referred to as a \textit{psync} \cite{zuriel2019efficient}. 
While concurrent sets have been extensively studied for volatile shared memory~\cite{HS08-book}, they are still relatively nascent in non-volatile shared memory. 
This paper presents a detailed study of the psync complexity of \emph{concurrent sets} in theory and practice.

\myparagraph{Algorithmic design choices for persistent sets}
The recent trend is to persist less data structure state to minimize the cost of writing to NVM.
For example, the \LFList\ and \SOFTList~\cite{zuriel2019efficient} persistent list-based sets do not persist any \textit{pointers} in the data structure.
Instead they persist the \textit{keys} along with some other metadata used after a crash to determine if the key is in the data structure.
This requires at most a single psync for update operations;
however, not persisting the structure results in a more complicated recovery procedure.

A manuscript by Israelevitz and nine other authors presented a seminal in depth study of the performance characteristics of real NVM hardware~\cite{izraelevitz2019basic}.
Their results may have played a role in motivating the trend to persist as little as possible and reduce the number of fences.
In particular, they found (Figure~8 of~\cite{izraelevitz2019basic}) that the latency to write 256 bytes and then perform a psync is at least 3.5x the latency to write 256 bytes and perform a flush but no persistence fence.
Moreover, they found that NVM's write bandwidth could be a severe bottleneck, as a write-only benchmark (Figure~9 of~\cite{izraelevitz2019basic}) showed that NVM write bandwidth \textit{scaled negatively} as the number of threads increased past four, and was approximately \textit{9x lower} than volatile write bandwidth with 24 threads.
A similar study of real NVM hardware was presented by Peng et al. \cite{peng2019system}.

While these results are compelling, it is unclear whether these latencies and bandwidth limitations are a problem for concurrent sets in practice.
As it turns out, the push for \emph{persistence-free} operations and synchronization mechanisms that minimize the amount of data persisted, and/or the number of \emph{psyncs}, has many consequences, and the balance may favour incurring increased psyncs in some cases.

\myparagraph{Contributions}
Concurrent data structures in volatile shared memory typically satisfy the \emph{linearizability} safety property, NVM data structures must consider the state of the persistent object following a full system crash. 
The safety property of \emph{durable-linearizability} satisfies linearizability and following a crash, requires that the object state reflect a consistent operation subhistory that includes operations that had a response before the crash \cite{izraelevitz2016linearizability}. 
(i)
\ifx\fullpaper\undefined
We prove that for any durable-linearizable lock-free set there must exist an execution in which some process must perform at least one \emph{redundant} psync as part of an update operation (\cref{para:redundant}).
\else
We prove that for any durable-linearizable lock-free set there must exist an execution in which some process must perform at least one \emph{redundant} psync as part of an update operation (\cref{section:lb-redundant}).
\fi
Informally, a redundant psync is one that does not change the contents of NVM.
Our result is orthogonal to the lower bound of Cohen et al. who showed that the minimum number of psyncs per update for a durable-linearizable lock-free object is one \cite{cohen2018inherent}.
However, Cohen et al. did not consider redundant psyncs. 
We show that redundant psyncs cannot be completely avoided in all concurrent executions: there exists an execution where $n$ processes are concurrently performing update operations and $n-1$ processes perform a redundant psync.
(ii) 
Our first result also applies to \textit{SLE linearizability}, which we define to serve as a natural extension of the safety property of \textit{strict linearizability} specifically for persistent sets.
Originally defined by Aguilera and Frølund \cite{AF03}, strict linearizability forces crashed operations to be linearized before the crash or not at all.
Strict linearizability was not originally defined for models in which the system can recover following a crash.
To better capture the intentions of strict linearizability in the context of persistent concurrent sets, we introduce SLE linearizability to realize the intuition of Aguilera and Frølund for persistent concurrent sets.
SLE linearizability is defined to explicitly enforce \textit{limited effect} for persistent sets.

(iii) 
\ifx\fullpaper\undefined
We prove that it is impossible to implement SLE linearizable lock-free sets in which read-only operations neither flush nor execute persistence fences, but it is possible to implement strict linearizable and durable linearizable lock-free sets with persistence-free reads (\cref{para:sle}).
\else
We prove that it is impossible to implement SLE linearizable lock-free sets in which read-only operations neither flush nor execute persistence fences, but it is possible to implement strict linearizable and durable linearizable lock-free sets with persistence-free reads (\cref{section:lb-sle}).
\fi
(iv) 
We study the empirical costs of persistence fences in practice. 
To do this, we present matching upper bounds to our lower bound contributions (i) and (ii).
Specifically, we describe a new technique for implementing persistent concurrent sets with persistence-free read-only operations called the extended link-and-persist technique and we utilize this technique to implement several persistent sets (\cref{section:algos}).
(v) 
We evaluate our upper bound implementations against existing persistent sets in a systemic empirical study of persistent sets.
This study exposes the impact of algorithmic design and safety properties on persistence fence complexity in practice and the cost of recovering the data structure following a crash (\cref{section:eval}).

The relationship between performance, psync complexity, recovery complexity and the correctness condition is subtle, even for seemingly simple data types like sorted sets. 
In this paper, we delve into the details of algorithmic design choices in persistent data structures to begin to characterize their impact.

\ifx\fullpaper\undefined
\section{Lower Bounds}
\label{sec:lbs}
\myparagraph{Persistency Model and Safety Properties }
We assume a full system crash-recovery model (all processes crash together).
When a crash occurs all processes are returned to their initial states.
After a crash a recovery procedure is invoked, and only after that can new operations begin.

Modifications to base objects first take effect in the volatile shared memory.
Such modifications become persistent only once they are flushed to NVM. 
Base objects in volatile memory are flushed asynchronously by the processor (without the programmer's knowledge) to NVM arbitrarily.
We refer to this as a \textit{background flush}.
We consider \textit{background flushes} to be atomic.
The programmer can also \textit{explicitly} flush base objects to NVM by invoking \textit{flush} primitives, typically accompanied by \textit{persistence fence} primitives.
An \textit{explicit flush} is a primitive on a base object and is non-blocking, i.e., it may return \textit{before} the base object has been written to persistent memory.
An \textit{explicit flush} by process $p$ is guaranteed to have taken effect only after a subsequent \textit{persistence fence} by $p$.
An explicit flush combined with a persistence fence is referred to as a \textit{psync}.
We assume that psync events happen independently of RMW events and that psyncs do not change the configuration of volatile shared memory (other than updating the program counter).
Note that on Intel platforms a RMW implies a fence, however, a RMW does not imply a flush before that fence, and therefore does not imply a psync.

In this paper, we consider the \emph{set} type: an object of the set type stores a set of integer values, initially empty, and exports three operations: 
$\texttt{insert}(v)$, $\texttt{remove}(v)$, $\texttt{contains}(v)$ where $v \in \mathbb{Z}$.
A \emph{history} is a sequence of invocations and responses of operations on the set implementation. We say a history is well-formed if no process invokes a new operation before
the previous operation returns.
Histories $H$ and $H'$ are \emph{equivalent} if for every process
$p_i$, $H|i=H'|i$.

A history $H$ is durable linearizable, if it is well-formed and if $ops(H)$ is linearizable where $ops(H)$ is the subhistory of $H$ containing no crash events \cite{izraelevitz2016linearizability}.  

Aguilera and Frølund defined strict linearizability for a model in which individual processes can crash and did not allow for recovery \cite{AF03}.
Berryhill et al. adapted strict linearizability for a model that allows for recovery following a system crash \cite{golab15}.
A history $H$ is \emph{strict linearizable} with 
respect to an object type $\tau$ if there exists
a sequential history $S$ equivalent to a \emph{strict completion of $H$}, such that
(1) $\rightarrow_{H^c}\subseteq \rightarrow_S$ and
(2) \emph{$S$ is consistent with the sequential specification of $\tau$}.
A strict completion of $H$ is obtained from $H$ by inserting matching responses for a subset of pending operations after the operation’s invocation and before the next crash event (if any), and finally removing any remaining pending operations and crash events.

\myparagraph{Psync Complexity}
It is likely that an implementation of persistent object will have many similarities to a volatile object of the same abstract data type.
For this reason, when comparing implementations of persistent objects we are mostly interested in the overhead required to maintain a consistent state in persistent memory. 
Specifically, we consider psync complexity.

Programmers write data to persistent memory through the use of psyncs.
A psync is an expensive operation.
Cohen et al. \cite{cohen2018inherent} prove that update operations in a durable linearizable lock-free algorithm must perform at least one psync.
In some implementations of persistent objects, reads also must perform psyncs.
There is a clear focus in existing literature on minimizing the number of psyncs per data structure operation \cite{david2018log,zuriel2019efficient,friedman2020nvtraverse}. 
These factors suggest that psync complexity is a useful metric for comparing implementations of persistent objects.

\myparagraph{Lower Bounds for Persistent Sets} We now present the two main lower bounds in this paper, but the full proofs are only provided in the full version of the paper\cite{fullpersistentsetspaper} due to space constraints.

\myparagraph{\underline{Impossibility of persistence-free read-only searches}}
\label{para:sle}
The key goal of the original work of Aguilera and Frølund~\cite{AF03} was to enforce \textit{limited effect} by requiring operations to take effect before the crash or not at all. 
Limited effect requires that an operation takes effect within a limited amount of time after it is invoked.
The point at which an operation takes effect is typically referred to as its \textit{linearization point} and we say that the operation \textit{linearizes} at that point.
Rephrasing the intuition, when crashes can occur, limited effect requires that operations that were pending at the time of a crash linearize prior to the crash or not at all.

Strict linearizability is defined in terms of histories, which abstract away the real-time order of events.
As a result, strict linearizability does not allow one to argue anything about the ordering of linearization points of operations that were pending at the time of a crash relative to the crash event.
Thus, strict linearizability cannot and does not prevent operations from taking effect during the \emph{recovery procedure} or even after the recovery procedure (which can occur for example in implementations that utilize linearization helping~\cite{help15}).
Strict linearizability only requires that at the time of a crash, pending operations \textit{appear} to take effect prior to the crash.
Although we are not aware of a formal proof of this, we conjecture in the full system crash-recovery model, durable linearizable objects are strict linearizable for some suitable definition of the recovery procedure.
This is because we can always have the recovery procedure \textit{clean-up} the state of the object returning it to a state such that the resulting history of any possible extension will satisfy strict linearizability.
We note this conjecture as further motivation towards re-examining the way in which the definition of strict linearizability has been adapted for the full system crash-recovery model.

To this end, we define the concept of a \emph{key write} to capture the intentions of Aguilera and Frølund in the context of sets by defining \textit{Strict limited effect} (SLE) linearizability for sets as follows:
a history satisfies SLE linearizability iff the history satisfies strict linearizability and for all operations with a key write, if the operation is pending at the time of a crash, the key write of the operation must occur before the crash event.
In the strict completion of a history this is equivalent to requiring that the key write is always between the invocation and response of the operation. 
This is because the order of key writes relative to a crash event is fixed which means if the write occurs after the crash event then a strict completion of the history could insert a response for the operation only prior to the key write (at the crash) and this response cannot be reordered after the key write.
  
We show that it is impossible to implement a SLE linearizable lock-free set for which read-only searches do not perform any explicit flushes or persistence fences. 
\begin{theorem}
There exists no SLE linearizable lock-free set with \emph{persistence-free} read-only searches.
\label{theorem:persistence-free-search-impossiblity}
\end{theorem}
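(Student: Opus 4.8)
The plan is to argue by contradiction via an adversarial crash, exploiting the fact that a persistence-free search can observe an update's effect in volatile memory before that effect has been persisted. Assume toward a contradiction that a lock-free, SLE linearizable set with persistence-free searches exists, and consider an \texttt{insert}($v$) run solo from the empty set; lock-freedom guarantees it takes steps and, run to completion, returns. Along this solo execution I would track two landmarks: the \emph{visibility point}, the first configuration from which a solo \texttt{contains}($v$) by another process returns \texttt{true}, and the \emph{recoverability point}, the first configuration at which an immediate crash followed by recovery leaves $v$ in the set. By the SLE clause the operation's key write $W_k$ lies within its interval, and a short exchange argument shows that $W_k$ cannot precede the visibility point (a \texttt{contains} issued after $W_k$ but before $v$ is visible would return \texttt{false} while the insert has already taken effect), so $W_k$ sits at the visibility point.

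The core of the proof is to exhibit a configuration $C$ at which $v$ is simultaneously \emph{visible} to a persistence-free search yet \emph{not recoverable}. Granting such a $C$, I would assemble the fatal history: drive the solo \texttt{insert}($v$) to $C$ (leaving it pending), run \texttt{contains}($v$), which reads only volatile state and therefore returns \texttt{true}; crash at $C$ while no background flush has fired, so recovery yields a state in which $v$ is absent; then run \texttt{contains}($v$) again, which returns \texttt{false}. The first \texttt{contains} returning \texttt{true} forces \texttt{insert}($v$) to be linearized before it and hence before the crash, and this is admissible under SLE precisely because $W_k$ occurred before the crash. But once the insert is linearized before the crash, durability together with the sequential specification --- there being no intervening \texttt{remove}($v$) --- forces the second \texttt{contains} to return \texttt{true}, contradicting its actual \texttt{false}.

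The hard part, and the crux of the whole argument, is establishing that a visible-but-not-recoverable configuration must exist, i.e.\ ruling out every strategy by which the implementation makes $v$ recoverable no later than it becomes visible. I would argue that a freshly written value becomes persistent only through a subsequent flush-and-fence (background flushes being arbitrary, the adversary simply withholds them), so the single step that first makes $v$ visible persists nothing and cannot in the same step make $v$ recoverable. Hence either $v$ is visible strictly before it is recoverable --- which yields $C$ immediately --- or $v$ becomes recoverable before it is ever visible. The latter I would eliminate using SLE: crash at a configuration where $v$ is recoverable but no \texttt{contains} yet returns \texttt{true}; recovery then makes $v$ present although no pre-crash search observed it, so the insert's key write is either before the crash, in which case a \texttt{contains} issued after $W_k$ but before the crash must see $v$ yet returns \texttt{false}, or after the crash, directly violating the SLE requirement that a pending operation's key write precede the crash. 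Either branch is contradictory. This is exactly where SLE does the work that plain strict linearizability cannot: it pins the insert's effect to the fixed real-time position of $W_k$ and forbids deferring that effect into the recovery procedure, removing the freedom to reposition a pending operation's linearization point that would otherwise absorb the inconsistency and that keeps persistence-free reads compatible with strict (and durable) linearizability.
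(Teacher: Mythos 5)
Your proof is correct and rests on the same two load-bearing ideas as the paper's, but packages them differently. The paper first isolates a standalone lemma (its Lemma~9) stating that an SLE linearizable lock-free set cannot linearize a successful update after its critical persistence event (CPE), and then runs an indistinguishability argument on the concurrent search: since a psync has no visible side effect, the search cannot tell whether the update's CPE has occurred, so whichever answer it returns, one of two indistinguishable executions (CPE occurred / did not occur) crashes into a pair of identical searches with different responses and no intervening update. Your proof instead organizes everything around a dichotomy between the visibility point (the key write) and the recoverability point (the CPE): your elimination of ``recoverable strictly before visible'' is exactly the paper's Lemma~9 in different clothes (crash in the gap, and the recovered key has no key write before the crash), and your crash at the visible-but-not-recoverable configuration $C$ plays the role of the paper's Cases 1 and 2A. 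What your framing buys is that the paper's Case 2B --- the search returning \texttt{false} in the execution where the CPE \emph{did} occur --- never arises, because you define $C$ as the first configuration from which a solo search returns \texttt{true}, so the search's answer is pinned down by construction rather than argued by indistinguishability; this also sidesteps the paper's slightly awkward ``linearized at the CPE'' case. The one place you are more compressed than the paper is the elimination of the recoverable-before-visible branch: to turn ``$v$ is recovered but its key write never happened'' into a formal SLE violation you still need the two-way split on the strict completion that the paper makes explicit (dropping the pending insert leaves the post-crash presence of $v$ unexplained to any subsequent operation; inserting a response forces the key write into an interval that ends before the crash, where it does not occur), but both halves of that split are present in compressed form in your argument, so this is a matter of exposition rather than a gap.
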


\myparagraph{\underline{Redundant psync lower bound for durable linearizable sets}}
\label{para:redundant}
After modifying a base object only a single psync is required to ensure that it is written to persistent memory. 
Performing multiple psyncs on the same base object is therefore unnecessary and wasteful.
We refer to these unnecessary psyncs as \textit{redundant} psyncs.
We show that for any durably linearizable lock-free set there must exist an execution in which $n$ concurrent processes are invoking $n$ concurrent update operations and $n$-1 processes each perform at least one redundant psync.
At first glance one may think that this result is implied by the lower bound of Cohen et al. \cite{cohen2018inherent}.
Cohen et al. show that for any lock-free durable linearizable object, there exists an execution wherein every update operation performs at least one persistence fence.
Cohen et al. make no claims regarding redundant psyncs. 
Our result demonstrates that durable linearizable lock-free objects cannot completely avoid redundant psyncs. 
\begin{theorem}
In an $n$-process system, for every durable linearizable lock-free set implementation $I$, there exists an execution of $I$ wherein $n$ processes are concurrently performing update operations and $n$-1 processes perform a redundant psync. 
\label{theorem:redundant-psync}
\end{theorem}
\else
\section{Computational Model}
\label{section:model}

We present preliminaries for the standard \emph{volatile} shared memory model and then explain how we extend the model to \emph{non-volatile} (or \emph{persistent}) shared memory.

\myparagraph{Processes and shared memory}
We consider an asynchronous shared memory system in which a set of $\mathbb{N}$ processes communicate by applying \emph{operations} on shared \emph{objects}.
Each process $p_i;i\in \mathbb{N}$ has an unique identifier and an initial state.
An object is an instance of an \emph{abstract data type} which specifies a set of operations that provide the only means to
manipulate the object.
This paper considers the \emph{set} type: an object of the set type stores a set of integer values, initially empty, and exports three operations: 
$\texttt{insert}(v)$, $\texttt{remove}(v)$, $\texttt{contains}(v)$ where $v \in \mathbb{Z}$.
The update operations, $\texttt{insert}(v)$ and $\texttt{remove}(v)$, return a Boolean response, $\texttt{true}$ iff $v$ is absent (for $\texttt{insert}(v)$) or present (for $\texttt{remove}(v)$) in the list.  
After $\texttt{insert}(v)$ is complete, $v$ is present in the list, and after $\texttt{remove}(v)$ is complete, $v$ is absent from the list.
The $\texttt{contains}(v)$ returns a Boolean $\texttt{true}$ iff $v$ is present in the list.
Throughout the paper we refer to $\texttt{contains}$ operations as searches or read-only operations.

An \emph{implementation} of an object type (sometimes we say object) provides a specific data-representation by applying \emph{primitives} on a set of shared \emph{base objects} each of which has an initial value.
We assume that the primitives applied on base objects are \emph{deterministic}.
A primitive is a generic \emph{read-modify-write} (\emph{RMW}) procedure applied to a base object.

\myparagraph{Executions and configurations}
An \emph{event} of a process $p_i$ in the volatile shared memory model (sometimes we say \emph{admissible step} of $p_i$)
is an invocation or response of an operation performed by $p_i$ or a 
RMW primitive applied by $p_i$ to a base object
along with its response. 
A \emph{configuration} specifies the value of each base object and 
the state of each process.
The \emph{initial configuration} is the configuration in which all 
base objects have their initial values and all processes are in their initial states.

An \emph{execution fragment} is a (finite or infinite) sequence of events.
An \emph{execution} of an implementation $I$ is an execution
fragment where, starting from the initial configuration, each event is issued according to $I$ and each response of a RMW event on the base object $b$ matches the state of $b$ resulting from all preceding events on $b$.
A \emph{history} $H$ of an execution $E$ is the subsequence of $E$ consisting of all
invocations and responses of operations.

An execution $E\cdot E'$, denoting the concatenation of $E$ and $E'$,
is an \emph{extension} of $E$ and we say that $E'$ \emph{extends} $E$.
For every process identifier $k$, $E|k$ denotes the (possibly empty) subsequence of the execution $E$ restricted to events of process $p_k$.
An operation $\pi$ \emph{precedes} another operation $\pi'$ in an execution $E$, denoted $\pi \rightarrow_{E} \pi'$, if the response of $\pi$ occurs before the invocation of $\pi'$ in $E$.
Two operations are \emph{concurrent} if neither precedes
the other. 
An execution is \emph{sequential} if it has no concurrent 
operations. 
An operation $\pi_k\in E$ is \emph{complete in $E$} if
it returns a matching response in $E$.
Otherwise we say that it is \emph{incomplete} or \emph{pending} in $E$.
We say that an execution $E$ is \emph{complete} if every invoked operation is complete in $E$.
Note that all of the terminology defined above applies analogously to histories.
An implementation $I$ is \emph{lock-free} if it guarantees that in every execution $E$ of $I$ some process will always make progress by completing its operation within a finite number of its own steps.

\myparagraph{Well-formed executions} We assume that executions are \emph{well-formed}:
no process invokes a new operation before
the previous operation returns.

\myparagraph{Linearizability}
Histories $H$ and $H'$ are \emph{equivalent} if for every process
$p_i$, $H|i=H'|i$.
A complete history $H$ is \emph{linearizable} with 
respect to an object type $\tau$ if there exists
a sequential history $S$ equivalent to $H$ such that
(1) $\rightarrow_{H}\subseteq \rightarrow_S$ (the happens before order is preserved) and
(2) \emph{$S$ is consistent with the sequential specification of type $\tau$}.
A history $H$ is linearizable if it can be
\emph{completed} (by adding matching responses to a subset of incomplete operations in $H$ and removing the rest) to a linearizable history~\cite{HW90}.

\myparagraph{Decided Operation}
Consider an execution $E$ of a durable linearizable set.
The response of a pending update operation $\pi$ is \textbf{decided} in $E$ if for every possible crash-free extension of $E$ the response of $\pi$ is the same value $V$.
We say its \textbf{decided response} is $V$.

\myparagraph{Successful Operation}
Consider an execution $E$ of a durable linearizable set.
We say that an update operation is successful (resp., unsuccessful) if: (1) it has completed and its response is \texttt{true} (resp., \texttt{false}), or (2) if its decided response is \texttt{true} (resp., \texttt{false}).

\myparagraph{Linearization Help-Freedom}
We say that $f$ is a linearization function over a set of histories $\mathcal{H}$, if for every $H \in \mathcal{H}$, $f(H)$ is a linearization of $H$.
An execution of $E$ is linearization help-free there exists a linearization function $f$ over $E$ such that for any two operations $\pi_1, \pi_2 \in E$ and a single step $\gamma$, it holds that if $\pi_1$ is decided before $\pi_2$ in $E \cdot \gamma$ and $\pi_2$ is not decided before $\pi_1$ in $E$ then $\gamma$ is a step of $\pi_1$ by the process that invoked $\pi_1$.
An implementation $I$ is linearization help-free if all executions of $I$ are linearization help-free \cite{ben2020separation}.

\myparagraph{Persistence model} 
We assume a full system crash-recovery model (all processes crash together).
When a crash occurs all processes are returned to their initial states.
After a crash a recovery procedure is invoked, and only after that can new operations begin.

Modifications to base objects first take effect in the volatile shared memory.
Such modifications become persistent only once they are flushed to NVM. 
Base objects in volatile memory are flushed asynchronously by the processor (without the programmer's knowledge) to NVM arbitrarily.
We refer to this as a \textit{background flush}.
We consider \textit{background flushes} to be atomic.
The programmer can also \textit{explicitly} flush base objects to NVM by invoking \textit{flush} primitives, typically accompanied by \textit{persistence fence} primitives.
An \textit{explicit flush} is a primitive on a base object and is non-blocking, i.e., it may return \textit{before} the base object has been written to persistent memory.
An \textit{explicit flush} by process $p$ is guaranteed to have taken effect only after a subsequent \textit{persistence fence} by $p$.
An explicit flush combined with a persistence fence is referred to as a \textit{psync}.
We assume that psync events happen independently of RMW events and that psyncs do not change the configuration of volatile shared memory (other than updating the program counter).
Note that on Intel platforms a RMW implies a fence, however, a RMW does not imply a flush before that fence, and therefore does not imply a psync.

\myparagraph{Durable linearizability}
A history $H$ is durable linearizable, if it is well-formed and if $ops(H)$ is linearizable where $ops(H)$ is the subhistory of $H$ containing no crash events \cite{izraelevitz2016linearizability}.  

\myparagraph{Strict linearizability}
Aguilera and Frølund defined strict linearizability for a model in which individual processes can crash and did not allow for recovery \cite{AF03}.
Berryhill et al. adapted strict linearizability for a model that allows for recovery following a system crash \cite{golab15}.
A history $H$ is \emph{strict linearizable} with 
respect to an object type $\tau$ if there exists
a sequential history $S$ equivalent to a \emph{strict completion of $H$}, such that
(1) $\rightarrow_{H^c}\subseteq \rightarrow_S$ and
(2) \emph{$S$ is consistent with the sequential specification of $\tau$}.
A strict completion of $H$ is obtained from $H$ by inserting matching responses for a subset of pending operations after the operation’s invocation and before the next crash event (if any), and finally removing any remaining pending operations and crash events.

\section{Background}
\label{section:background}
In this section we will provide some necessary background information regarding metrics used to compare persistent objects and present some existing persistent sets.

\subsection{Complexity Measures}
\label{section:complexity-measures}
It is likely that an implementation of persistent object will have many similarities to a volatile object of the same abstract data type.
For this reason, when comparing implementations of persistent objects we are mostly interested in the overhead required to maintain a consistent state in persistent memory. 
Specifically, we consider psync complexity and recovery complexity.

\myparagraph{Psync Complexity}
Programmers write data to persistent memory through the use of psyncs.
A psync is an expensive operation.
Cohen et al. \cite{cohen2018inherent} prove that update operations in a durable linearizable lock-free algorithm must perform at least one psync.
In some implementations of persistent objects, reads also must perform psyncs.
There is a clear focus in existing literature on minimizing the number of psyncs per data structure operation \cite{david2018log,zuriel2019efficient,friedman2020nvtraverse}. 
These factors suggest that psync complexity is a useful metric for comparing implementations of persistent objects.

\myparagraph{Recovery Complexity}
After a crash, a recovery procedure is invoked to return the objects in persistent memory back to a consistent state.
Prior work has utilized a sequential recovery procedure \cite{zuriel2019efficient,david2018log,friedman2021mirror,correia2020persistent}. 
A sequential recovery procedure is not required for correctness but it motivates the desire for efficient recovery procedures.
No new data structure operations can be invoked until the recovery procedure has completed. 
Ideally we would like to minimize this period of downtime represented by the execution of recovery procedure. 
We use the asymptotic time complexity of the recovery procedure as another metric for comparing durable linearizable algorithms.

\subsection{Related Work}
In this section we will briefly describe some existing implementations of persistent sets. 
We focus on hand-crafted implementations since they generally perform better in practice compared to transforms or universal constructions.

\myparagraph{\textbf{Link-and-Persist (L\&P)}} 
David et al. describe a technique for implementing durable linearizable link-based data structures called the \LPList~technique \cite{david2018log}.
Using the \LPList~technique, whenever a link in the data structure is updated, a single bit mark is applied to the link which denotes that it has not been written to persistent memory.
The mark is removed after the link is written to persistent memory.
We refer to this mark as the \textit{persistence bit}.
This technique was also presented by Wang et al. in the same year \cite{wang2018easy}.
Wei et al. presented a more general technique which does not steal bits from data structure links \cite{wei2021flit}.

\myparagraph{\textbf{Link-Free (LF)}}
The \LFList~algorithm of Zuriel et al. does not persist data structure links \cite{zuriel2019efficient}. 
Instead, the \LFList~algorithm persists metadata added to every node.  

\myparagraph{\SOFTList}
Zuriel et al. designed a different algorithm called \SOFTList~(Sets with an Optimal Flushing Technique) offering persistence-free searches. 
The \SOFTList~algorithm does not persist data structure links and instead persists metadata added to each node. 
The major difference between the \LFList~algorithm and \SOFTList~is that \SOFTList~uses two different representations for every key in the data structure where only one representation is is explicitly flushed to persistent memory. 

\myparagraph{Transforms}
Friedman et al. presented a transform for converting a class of data structures which they call traversal data structures to durable linearizable data structures \cite{friedman2020nvtraverse}..
NVTraverse does not perform flushes during traversal of the data structure.
In a separate paper Friedman et al. presented Mirror which is automatic transform that converts a linearizable lock-free data structure to a durable linearizable lock-free data \cite{friedman2021mirror}
Mirror maintains two copies of the data structure, only one of which is persisted. 
Reads are executed on the transient data structure which is stored in DRAM. 

\myparagraph{Universal Constructions}
The Order Now, Linearize Later (ONLL) universal construction from Cohen et al. \cite{cohen2018inherent} transforms a deterministic object and produces a lock-free durably-linearizable implementation of the object.
With ONLL search operations do not perform psyncs however, it relies on a shared global queue and it must traverse logs from every process to recovery the data structure.
CX-PUC from Correia et al. is the first bounded wait-free persistent universal construction \cite{correia2020persistent}.
CX-PUC suffers from the fact that it must persist multiple replicas of the data structure.

\myparagraph{Transactional Memory}
There have been several works that proposed using transactional-memory (TM) with NVM to achieve persistent data structures.
Typically these approaches utilize some form of logging and generally perform better than universal constructions \cite{correia2020persistent,kolli2016high,avni2016persistent,coburn2011nv}
\section{Redundant psync lower bound for durable linearizable sets}
\label{section:lb-redundant}

\Cref{fig:lb2-executions} shows an example of an execution in which all but one process performs a redundant psync.
This execution is described in more detail in the proof of \cref{theorem:redundant-psync}.

\begin{figure}[!t]
\begin{center}
    \includegraphics[width=0.6\linewidth]{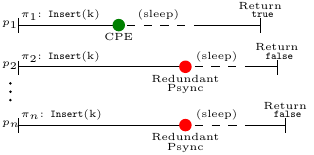}
\caption{An example of the execution described in \cref{theorem:redundant-psync}.
Operations $\pi_2-\pi_{n-1}$ rely on the durability of $\pi_1$ forcing.
Since none of these processes can know if the CPE of $\pi_1$ has occurred they are forced to perform a psync resulting each performing a redundant psync.} 
\label{fig:lb2-executions}
\end{center}
\end{figure}

\begin{definition}[Persistence Event]
Let $E$ be an execution of a durable linearizable set, we call any background flush, explicit flush or persistence fence a \textit{persistence event} in $E$. 
\end{definition}

\begin{definition}[Crash-Recovery Extension]
Consider an execution $E$ of a durable linearizable set.
Let $\bot$ denote a system crash event.
Let $E'$ be $E \cdot \bot \cdot E_R$ where $E_R$ is the sequential execution of the recovery procedure.
We refer to $E'$ as a \textbf{crash-recovery extension} of $E$.
\end{definition}

For a successful update operation $\pi$, we can identify a \textit{critical persistence event}~$e$.
Intuitively, if a crash event happens \textit{before} $e$, then the update $\pi$ will \textit{not be recovered}.
This means if we perform an identical update after recovery, it will \textit{succeed}.
On the other hand, if a crash happens \textit{after} $e$, then $\pi$ \textit{will be recovered}.
So, if we perform an identical update after recovery, it will \textit{fail}.

\begin{definition}[Critical Persistence Event (CPE)]
Consider an execution $E$ of a durable linearizable set and a pending operation $\pi$ in $E$ invoked by process $p$.
Let $r$ be the response of $\pi$ in a solo extension of $E$ wherein $p$ completes $\pi$ and no crash events occur.
Let $E'$ be the crash-recovery extension of $E$.
Consider a solo extension of $E'$ wherein $p$ invokes and completes a new operation $\pi'$ with the same arguments as $\pi$.
Let $r'$ be the response of $\pi'$.
For successful update operations, the persistence event $f$ in $E$ is the \textit{critical persistence event}, of $\pi$ if immediately before applying $f$, we have $r = r' = \texttt{true}$ and immediately after applying $f$ we have $r = \texttt{true}$ and $r' = \texttt{false}$.
\end{definition}

A CPE is defined for a successful update operation, and it represents the first point at which $\pi$ could, but is not guaranteed to, change the response of a different operation $\pi'$ where $\pi'$ is invoked after crashing and recovering. 
Crucially, $\pi$ and $\pi'$ are different operations, and $\pi'$ is the operation that completes in a solo extension after the crash. 
Note that immediately after recovering from a crash $\pi'$ has not yet started, so we cannot (and do not need to) argue that all extensions return the same response value.

Since the CPE is defined for successful update operations, if the CPE exists for an update $\pi$ in some execution $E$, then the CPE of $\pi$ exists at some point between when $\pi$ is decided and the response of $\pi$ (or a crash event). 
Recall that since explicit flushes are non-blocking, an explicit flush on a base object $b$ requires a persistence fence to guarantee that that $b$ is written to persistent memory.
It is possible that $b$ is written to persistent memory after an explicit flush on $b$ but prior any persistence fences.
In such a scenario the CPE of the operation can be the explicit flush on $b$.
If the CPE $\pi$ is a persistence fence by process $p$, then $p$ must have previously performed an explicit flush in $\pi$. 
We say that the base object $b$ is \textit{involved in the CPE} of $\pi$ if the CPE is a background flush or explicit flush on $b$ or if the CPE is a persistence fence where the corresponding explicit flush is on $b$.

Our CPE definition is similar to the \textit{Persist Point} defined by Izraelevitz et al. \cite{izraelevitz2016linearizability} or the \textit{Durability Point} defined by Friedman et al \cite{QueueFriedman18}.
The \textit{Persist Point} of an operation is a point after the linearization point of the operation.
We have already mentioned two examples of implementations where the linearization point is after the point when the operation is persisted \cite{zuriel2019efficient,cohen2018inherent}.
The definition of a \textit{Durability Point} is not defined in terms of sets and lacks some necessary details which our definition of the CPE clarifies.

\begin{definition}[Destructive Write]
A write to the base object $b$ is \textit{destructive} if it changes the value of $b$.
\end{definition}

\begin{definition}[Redundant-psync]
Consider an execution $E$ of a durable linearizable set. 
An explicit flush $f$ applied to the base object $b$, is redundant if $b$ was previously flushed by another explicit flush $f'$, and there does not exist a destructive write to $b$ between $f'$ and $f$ in $E$.
A persistence fence $f_p$ is redundant if there exists another persistence fence $f_p'$ prior to $f_p$ and there does not exist any non-redundant flush between $f_p'$ and $f_p$ in $E$.  
A psync which is a flush and a persistence fence, is redundant if the persistence fence redundant.
\end{definition}

\begin{theorem}
In an $n$-process system, for every durable linearizable lock-free set implementation $I$, there exists an execution of $I$ wherein $n$ processes are concurrently performing update operations and $n$-1 processes perform a redundant psync. 
\label{theorem:redundant-psync}
\end{theorem}
Izraelevitz et al. briefly mention that a helping mechanism for a non-blocking persistent object would include helping to persist operations \cite{izraelevitz2016linearizability}. 
Intuitively, durable linearizability requires that all operations that complete before a crash event are written to persistent memory.
If the response of update operation $\pi_1$ relies on the durability of a different operation $\pi_2$ then $\pi_1$  must ensure that the CPE of $\pi_2$ has occurred.
This is primarily a consequence of lock-free progress since $\pi_1$ cannot wait for other operations to ensure that the CPE of $\pi_2$ has occurred.
We can imagine an execution wherein $n$ processes all need to explicitly flush the same base object $b$ and perform a persistence fence but only one processes performs a destructive write on $b$.
Since only one destructive write was applied on $b$, only one explicit flush (and persistence fence) is necessary to guarantee that $b$ is written to persistent memory.
If all $n$ processes explicitly flush $b$ then all but one of the psyncs will be redundant. 
The simplest example of this execution would be $n$ processes concurrently executing $n$ identical update operations meaning those operations do not commute.
Suppose some process $p_1$ executing operation $\pi_1$ is the only process that successfully updates the set but sleeps after performing the CPE of the update operation.
The operations invoked by the other $n-1$ processes rely on the durability of $\pi_1$ thus, these $n-1$ process must also perform a psync to ensure that the CPE of $\pi_1$ has occurred since no process other than $p_1$ can determine if the CPE of $\pi_1$ has occurred. 
Suppose that each of the $n-1$ processes other than $p_1$ performs a psync on the base object involved in the CPE of $\pi_1$ then sleeps.
Then after every process has performed the psync all processes resume. 
In this case all $n-1$ processes other than $p_1$ have performed a redundant psync. 

We now describe the full proof construction.
\begin{proof}
Consider a durably linearizable lock-free set implementation $I$.
Assume for the purpose of showing a contradiction that no operation performs a redundant psync in every execution of $I$. 
Starting from an empty set, construct an execution $E$ of $I$ as follows:

Let $n$ processes each invoke identical \texttt{insert} operations such that process $p_i$ is performing the \texttt{insert} operation $\pi_i$ ($1 \leq i \leq n$).
As long as one of the processes continues to make progress, the decided response will be \texttt{true} for only one of the operations. 
Call this operation $\pi_s$.
Let $p_s$ run under no contention until immediately after the response of $\pi_s$ is decided.
Consider every other operation $\pi_i$, $\forall i \neq s$.
Since $\pi_i$ has the same arguments as $\pi_s$, if $p_i$ continues to make progress the decided response of $\pi_i$ will be \texttt{false}.
Let every $p_i$, $\forall i \neq s$, progress until immediately after $\pi_i$ is decided.

Consider the consequences of letting every $p_i$, $\forall i \neq s$ return from $\pi_i$ with a response value of \texttt{false} without confirming that the CPE of $\pi_s$ has occurred.
Suppose a crash event occurs immediately after the response of the operation that completed last.
Let $E'$ be the crash-recovery extension of $E$ and let $\pi'$ be identical to $\pi_s$.
Let $E''$ be $E' \cdot E_{\pi'}$ where $E_{\pi'}$ is the sequential execution beginning with the invocation of $\pi'$ and ending with the corresponding response.
Since the CPE of $\pi_s$ did not occur in $E$ and all $\pi_i$, $i \neq s$ failed in $E$, the response of $\pi'$ in $E''$ will be \texttt{true}.
Let $H''$ be the history of the crash-recovery extension of $E''$.
In this scenario all $\pi_i$, $i \neq s$ would have completed in $E$.
This means that $ops(H'')$ is not linearizable because the response of all $\pi_i$ reflects being linearized after $\pi'$ but $\pi'$ was invoked after the response of the $\pi_i$ that completed last.
This violates durable linearizability.
To avoid this problem, every process must confirm that the CPE of $\pi_s$ has occurred.

Let $b$ be the base object involved in the CPE of $\pi_s$.
Only process $p_s$ performed a destructive write on $b$.
This means that only one flush applied to $b$ will be non-redundant and therefore only one persistence fence following the flush will be redundant. 
Suppose one process continues until immediately after the CPE of $\pi_s$ then sleeps indefinitely.
If any other process flushes $b$ and performs a persistence fence then that process will have performed a redundant psync.
The CPE of $\pi_s$ does not change the volatile shared memory configuration.
This means that if the first process to progress past the CPE of $\pi_s$ sleeps indefinitely, no process will be able to determine if the CPE of $\pi_s$ has occurred.
Let $p_s$ continue until immediately after the CPE of $\pi_s$ then sleep indefinitely.
Every other process cannot complete until confirming that the CPE of $\pi_s$ has occurred.
However, since $p_s$ already explicitly flushed $b$ and performed a persistence fence any other process explicitly flushing $b$ and performing a persistence fence would violate our initial assumption but no process can make progress without flushing $b$ and performing a persistence fence. 
Thus we have reached a contradiction.
In this case $n$-1 processes $p_i$, $i \neq s$ must perform a redundant psync.
\end{proof}

\section{SLE linearizable sets and persistence-free reads
impossibility}
\label{section:lb-sle}

The key goal in Aguilera and Frølund was to enforce \textit{limited effect} by requiring operations to take effect before the crash or not at all. 
This is evident from the first paragraph of the abstract: "In such systems, an operation that crashes should either not happen or happen within some limited time frame—preferably before the process crashes. We define strict linearizability to achieve this semantics." 
The importance of limited effect is further emphasized in their own motivating example: “suppose that a military officer presses a button to launch a missile during war, but the missile does not come out. It might be catastrophic if the missile is suddenly launched years later after the war is over” \cite{AF03}. 
If the system can launch the missile during the execution of the recovery procedure (potentially many years later, when old systems are booted back up), this clearly contradicts the intentions of Aguilera and Frølund, however, in the full system crash-recovery model, strict linearizability admits this possibility.

Limited effect requires that an operation takes effect within a limited amount of time after it is invoked.
The point at which an operation takes effect is typically referred to as its \textit{linearization point} and we say that the operation \textit{linearizes} at that point.
Rephrasing the intuition, when crashes can occur, limited effect requires that operations that were pending at the time of a crash linearize prior to the crash or not at all.

\myparagraph{Strict linearizability does not satisfy limited effect in the full system crash-recovery model}
Strict linearizability is defined in terms of histories, which abstract away the real-time order of events.
As a result, strict linearizability does not allow one to argue anything about the ordering of linearization points of operations that were pending at the time of a crash relative to the crash event.
Thus, strict linearizability cannot and does not prevent operations from taking effect during the recovery procedure or even after the recovery procedure (which can occur for example in implementations that utilize linearization helping).
Strict linearizability only requires that at the time of a crash, pending operations \textit{appear} to take effect prior to the crash.
Although we are not aware of a formal proof of this, we conjecture in the full system crash-recovery model, durable linearizable objects are strict linearizable for some suitable definition of the recovery procedure.
This is because we can always have the recovery procedure \textit{clean-up} the state of the object returning it to a state such that the resulting history of any possible extension will satisfy strict linearizability.
We note this conjecture as further motivation towards re-examining the way in which the definition of strict linearizability has been adapted for the full system crash-recovery model.
The results in this paper do not rely on a proof of the conjecture.

\myparagraph{Limited effect matters}
It may be more accurate to say that strict linearizability enforces \textit{apparent limited effect}.
When the system can recover following a crash it is crucial that we distinguish between (1) requiring operations to actually \textit{take effect} before the crash and (2) requiring that operations simply \textit{appear to} take effect before the crash.

\myparagraph{Defining limited effect for sets}
In general arguing about an explicit point at which an operation takes effect is difficult. 
However, we are specifically interested in sets.
Update operations of a set are strongly non-commutative which is a useful property if we care about identifying an explicit point at which an update operation takes effect \cite{attiya2011laws}. 

In order to argue about the point at which an operation actually takes effect we need to identify specific event(s) at which operations can take effect.  
In our computational model, writes (or successful RMWs) are the only events with observable side effects since writes are the only events that change the states of shared objects.
Thus, for any implementation of a set, we argue that the linearization point of a successful update operation must be a write.
Specifically, the linearization point of a successful update operation is a \textit{key write} which we define as follows:
consider an execution $E$ of a durable linearizable set, ending with the write event $w$, on some base object $b$, as part of the successful update operation $\pi$, invoked by process $p$.
Let $E'$ be the prefix of $E$ ending just before $w$.
Consider a solo extension $EE$ of $E$ wherein some process $p'$ invokes and completes the operation $\pi'$ where $\pi'$ does not commute with $\pi$ and $\pi'$ is linearization help-free.
Similarly, consider the solo extension $EE'$ of $E'$ wherein process $p'$ invokes and completes the same operation $\pi'$ as previously defined.
The write event $w$ is the key write of $\pi$ iff the extensions $EE$ and $EE'$ exist and $\pi'$ has a different response in $EE$ and $EE'$. 
%

Intuitively, the key write is the event after which the operation can affect the response of some subsequent non-commutative, linearization help-free operation.
It is perhaps easier to imagine how a key write would exist for stacks or queues in which update operations typically contend on a single object (a top or head pointer) where it is clear that performing a write on that object will affect other operations. 

\myparagraph{Defining SLE linearizability}
We want to reason about when a key write occurs relative to a crash event. 
To do so we must lift the key write into histories. 
We utilize the concept of a key write to capture the intentions of Aguilera and Frølund in the context of sets by defining \textit{Strict limited effect} (SLE) linearizability for sets as follows:
an implementation $I$ of a set satisfies SLE linearizability iff all possible histories produced by $I$ satisfy SLE linearizability.
A history satisfies SLE linearizability iff the history satisfies strict linearizability and for all operations with a key write, if the operation is pending at the time of a crash, the key write of the operation must occur before the crash event.
In the strict completion of a history this is equivalent to requiring that the key write is always between the invocation and response of the operation. 
This is because the order of key writes relative to a crash event is fixed which means if the write occurs after the crash event then a strict completion of the history could insert a response for the operation only prior to the key write (at the crash) and this response cannot be reordered after the key write.
In this work we do not attempt to generalize SLE linearizability to other abstract types;
However, we note that it would not be difficult to generalize SLE linearizability for any type where we can define the concept of a key write for update operations.
For instance, \emph{stacks} and \emph{queues} are types where defining a key write is likely to be straightforward. 

We have already discussed some hand-crafted persistent sets which satisfy SLE linearizability.
In particular, the link-and-persist list of David et al. and the link-free list of Zuriel et al. both satisfy SLE linearizable.

\myparagraph{Comparing with other correctness conditions}
SLE linearizability allows one to argue about the order of linearization points relative to crashes for sets.
SLE linearizability is stronger than durable linearizability and successfully captures the intuition that motivated strict linearizability.
One might say that SLE linearizability is similar to \textit{detectable executions} defined by Friedman et al. \cite{QueueFriedman18} in that ensuring SLE linearizability will often require identifying operations that were pending at the time of a crash in order to revert any effects of those pending operations whose key write did not occur prior to the crash. 
If the implementation is capable of identifying operations that were pending at the time of a crash then it provides detectable execution. 

\subsection{SLE Linearizable sets and persistence-free reads impossibility result}

We show that it is impossible to implement a SLE linearizable lock-free set for which read-only searches do no perform any explicit flushes or persistence fences. 
This result primarily follows from the fact that some operations will rely on the durability of other operations but persistence events have no visible side effects meaning from the perspective of a persistence-free search, a configuration in which the CPE of a particular operation has occurred is not distinguishable from another configuration in which the CPE of said operation has occurred.
An example is depicted in \Cref{fig:lb1-executions}.

\begin{figure}[!t]
\begin{center}
    \includegraphics[width=0.6\linewidth]{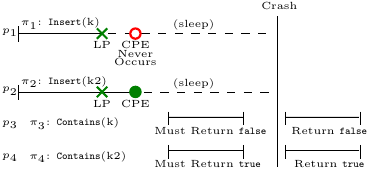}
\caption{A simplified example of the execution described in \cref{theorem:persistence-free-search-impossiblity}. 
Processes $p_2$ and $p_3$ cannot distinguish between configurations resulting after the last step of $p_1$ or $p_2$. 
$p_3$ and $p_4$ have no way to determine if the CPE of $\pi_1$ and $\pi_2$, respectively, have occurred.
$\pi_3$ and $\pi_4$ must return a specific value since an identical operation invoked after the crash has a fixed response but this response is not know a priori.}
\label{fig:lb1-executions}
\end{center}
\end{figure}

\begin{definition}[Persistence-Free Searches]
We say that a set offers persistence free searches if there is no execution of the set in which a process that invoked a read-only operation $\pi$ performs an explicit flush or persistence fence between the invocation and response of $\pi$. 
\end{definition}

\begin{lemma}
A SLE linearizable lock-free set with read-only searches cannot linearize successful update operations after the CPE of the update.
\label{lemma:sle-lin-constraint}
\end{lemma}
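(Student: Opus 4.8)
The plan is to argue by contradiction. Suppose $I$ is a SLE linearizable lock-free set with read-only searches in which some successful update operation $\pi$ -- say an $\texttt{insert}(v)$ invoked by a process $p$ -- linearizes strictly after its CPE in some execution. Since the preceding discussion argues that the linearization point of a successful update is its key write, this assumption is equivalent to saying that the key write $w$ of $\pi$ occurs after the CPE $f$ of $\pi$ in that execution. First I would isolate the prefix $E$ of this execution that ends immediately after $f$ but strictly before $w$, so that at the end of $E$ the operation $\pi$ is already recoverable yet has not yet performed its (crash-free) linearization step.

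Next I would crash exactly at the end of $E$ and run the recovery procedure, forming the crash-recovery extension $E'$. Because the crash occurs after $f$, the defining property of the CPE applies directly: in a solo extension of $E'$, lock-freedom guarantees that an identical operation $\pi'$ (another $\texttt{insert}(v)$ by $p$) completes, and by the CPE definition it returns $\texttt{false}$, witnessing that $v$ is present in the recovered set and hence that $\pi$ \emph{was recovered}. This is the crux of the construction: recovery forces $\pi$ to take effect even though its key write $w$ was never executed before the crash.

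Then I would examine the history $H$ of this crash-recovery execution together with the subsequent $\pi'$. In $H$, the operation $\pi$ is pending at the crash, so any strict completion must either drop $\pi$ or insert a matching response for it after its invocation and before the crash. Since $\pi'$ returns $\texttt{false}$, the sequential specification forces some $\texttt{insert}(v)$ with response $\texttt{true}$ to be linearized before $\pi'$; as $\pi$ is the only such operation, $\pi$ cannot be dropped and must be assigned response $\texttt{true}$ and ordered before $\pi'$ (here the read-only search, or equivalently the identical post-recovery update $\pi'$, is what observes $v$'s presence). Consequently the strict completion places the response of $\pi$ before the crash, while the key write $w$ of $\pi$ lies after the crash. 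Appealing to the SLE requirement -- that the key write of a pending operation always lie between its invocation and response in the strict completion, equivalently that it precede the crash -- I obtain a response ordered before the very linearization point of the same operation, contradicting the SLE linearizability of $I$.

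The main obstacle I anticipate is the careful handling of the fact that, in the crashed execution, the key write $w$ does not physically occur, so one cannot simply treat ``$w$ happens after the crash'' as an event of $H$. The argument must instead proceed through the strict-completion characterization of SLE linearizability: it is the \emph{impossibility} of reconciling a recovered (hence linearized, $\texttt{true}$-responding) $\pi$ with a key write that did not precede the crash that yields the contradiction. Making this precise requires justifying two points I expect to need the most care -- first, that the recovered state indeed forces $\pi$ to be retained and linearized with response $\texttt{true}$ rather than removed; and second, that ``$\pi$ has a key write'' in the sense demanded by the SLE definition follows from $\pi$ being a recovered successful update, so that the key-write constraint is not vacuously satisfied.
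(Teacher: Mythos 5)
Your proposal is correct and follows essentially the same route as the paper's proof: crash immediately after the CPE (so the key write $w$ is absent from the execution), use the CPE definition to force the post-recovery identical operation $\pi'$ to return \texttt{false}, and then observe that the strict completion can neither drop $\pi$ (no sequential history explains a lone $\pi'$ returning \texttt{false}) nor retain it (retaining it forces a \texttt{true} response ordered before $\pi'$, which requires the key write to precede the crash, contradicting $w \not\in E$ and the SLE key-write requirement). The paper presents these as two explicit cases on the construction of the strict completion, whereas you fold them into one argument, but the content is the same.
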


\begin{proof}
Let $E$ be an execution of a SLE linearizable lock-free set $I$ in which a single process $p$ runs alone, and invokes an update operation $\pi$.
Let the initial configuration $c$ be constructed such that $\pi$ will eventually return \texttt{true} assuming $p$ continues to make progress.
This means that $\pi$ has a key write $w$.
Assume that $\pi$ is linearized at some point after its CPE.
Let $p$ run until immediately after the CPE of $\pi$.
Suppose a crash event occurs.
This means that $w \not\in E$.
Let $E'$ be the crash-recovery extension of $E$.
Let $\pi'$ be an operation identical to $\pi$. 
Let $E''$ be the $E' \cdot E_{\pi'}$ where $E_{\pi'}$ is the sequential execution of the operation $\pi'$ by some process.
Since $\pi$ has a CPE in $E$, $\pi'$ will return \texttt{false} in $E''$.
$\pi$ is the only pending operation in $E'$.
Let $H$ be the history of $E'$ and $H^c$ be the strict completion of $H$.
For simplicity we omit the invocation and response of the recovery procedure from $H^c$.
There are two cases for the constructing $H^c$:

Case 1: The invocation of $\pi$ is removed meaning $H^c$ contains only the invocation and response of $\pi'$ where the response of $\pi'$ was \texttt{false}.
There is no possible sequential execution of $I$ beginning from $c$ in which only one operation, $\pi'$ is invoked and returns \texttt{false}.
This contradicts $H^c$ in which $\pi'$ returns \texttt{false}.

Case 2: A response for $\pi$ is inserted meaning $H^c$ contains the invocation and response of $\pi$ followed by the invocation and response of $\pi'$.
In a sequential execution of $I$ beginning from $c$, if $\pi$ is invoked first then $\pi$ will return \texttt{true} and $\pi'$ will return false.
This means that the key write of $\pi$ has occurred.
However, $w$, the key write of $\pi$, did not occur prior to the crash because $w \not\in E$.
Thus this is a contradiction.
Since both cases violate SLE linearizability, this proves that sets that satisfy SLE linearizability cannot linearize update operations after the CPE of the update.
\end{proof} 

Intuitively, Lemma 9 captures the fact that for lock-free sets with read-only searches an update operation cannot be linearized after its CPE because this could result in the key write of the operation being performed after a crash event.
Recall that the key write of a successful update is the same as the linearization point of the update.
The proof of Lemma 9 is by counterexample: consider an execution wherein a crash occurs immediately after the CPE of a successful update operation $\pi$.
Since $\pi$ performed its CPE this means that an identical update invoked after the crash-recovery extension will be unsuccessful which is only possible if the key write of $\pi$ was performed;
however, since the crash occurred prior to the linearization point of $\pi$ this means that that key write of $\pi$ must have occurred after the crash.

\begin{theorem}
There exists no SLE linearizable lock-free set with persistence-free read-only searches.
\label{theorem:persistence-free-search-impossiblity}
\end{theorem}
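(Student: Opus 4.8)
The plan is to derive the impossibility by combining Lemma~\ref{lemma:sle-lin-constraint} with a key-write-ordering argument that forces any search operation to depend on a persistence event. I would assume toward a contradiction that $I$ is an SLE linearizable lock-free set offering persistence-free searches, and then build an execution in which a search is trapped between two indistinguishable configurations exactly as suggested by \Cref{fig:lb1-executions}.

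First I would set up the core tension. By Lemma~\ref{lemma:sle-lin-constraint}, a successful update $\pi$ must be linearized at or before its CPE, and since the key write is the linearization point of a successful update, the key write $w$ of $\pi$ must occur no later than the CPE of $\pi$. This means there is a window of the execution during which $\pi$'s key write has already happened (so $\pi$ is effectively ``done'' from a linearization standpoint and any non-commutative operation would observe its effect) but $\pi$'s CPE may or may not have occurred. Crucially, the CPE is a persistence event, and persistence events have no observable side effect on the volatile configuration; so two configurations that differ only in whether the CPE of $\pi$ has fired are indistinguishable to any process that performs no flush or fence of its own.

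Next I would construct the witnessing execution. Let $p_1$ run a successful update $\pi_1$ solo until just past its key write, so that $\pi_1$ is linearized but its CPE has not yet occurred. Now invoke a read-only search $\pi_s$ (by a different process) whose correct response depends on the durability of $\pi_1$: concretely, choose $\pi_s$ to be a \texttt{contains} on the same key that $\pi_1$ updated, so that $\pi_s$ must return a value consistent with $\pi_1$ having taken effect. Because $\pi_s$ is persistence-free, from its local view the configuration in which the CPE of $\pi_1$ has already fired is identical to the one in which it has not. I would then fork the execution: in branch one, let $\pi_s$ complete and then crash immediately; in branch two, let the CPE of $\pi_1$ fire, let $\pi_s$ complete returning the same response (it cannot tell the difference), and then crash. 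After the crash-recovery extension, invoke an identical copy $\pi_1'$ of $\pi_1$ and appeal to the CPE definition: in branch one $\pi_1'$ succeeds (returns \texttt{true}) because $\pi_1$ was not recovered, whereas in branch two $\pi_1'$ fails because $\pi_1$ was recovered. Since $\pi_s$ returned the same value in both branches but the post-crash histories force incompatible linearizations of $\pi_s$ relative to $\pi_1$ and $\pi_1'$, one of the two resulting histories cannot be durable (hence not strictly, hence not SLE) linearizable, contradicting the assumption.

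The main obstacle I anticipate is making the indistinguishability argument fully rigorous against the lock-free progress guarantee: I must ensure the search $\pi_s$ is \emph{forced} to commit to a response inside the CPE-ambiguity window rather than being allowed to stall until the CPE resolves itself. This is where lock-freedom is essential — $\pi_s$ must complete within finitely many of its own steps regardless of whether other processes run, so it cannot wait out the uncertainty, and being persistence-free it cannot itself drive $\pi_1$'s CPE. I would also need to handle the subtlety that the CPE being a persistence event genuinely leaves the volatile state identical (this is granted by the model assumption that psyncs do not change volatile configuration), so that the two branches truly agree up to $\pi_s$'s response. Tying these together — persistence-freedom plus lock-freedom forcing a premature commit, and Lemma~\ref{lemma:sle-lin-constraint} forbidding the linearization from being deferred past the CPE — yields the contradiction and completes the proof.
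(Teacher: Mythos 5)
Your proposal is correct and follows essentially the same route as the paper's proof: invoke Lemma~\ref{lemma:sle-lin-constraint} to pin the linearization (key write) at or before the CPE, then fork the execution on whether the CPE has fired, use the invisibility of persistence events plus lock-freedom to force the persistence-free search to commit to the same response in both indistinguishable branches, and derive a contradiction after a crash. The only difference is cosmetic: you use a post-crash identical \emph{update} as the witness (which matches the CPE definition directly), whereas the paper uses a post-crash identical \emph{search}; both close the argument in the same way.
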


\begin{proof}
Consider a SLE linearizable lock-free set implementation $I$.
Assume for the purpose of showing a contradiction that every search is persistence-free in every execution of $I$. 
Starting from an empty set, construct an execution $E$ of $I$ as follows:
Let process $p_u$ invoke an \texttt{insert} operation to insert the key $k$ then let $p_u$ progress until immediately prior to the CPE of $\pi_u$ then sleep.
By lemma \ref{lemma:sle-lin-constraint} the linearization point of $\pi_u$ must be at or before its CPE.
Consider a search operation $\pi_c$ looking for the key $k$.
This means that $\pi_c$ does not commute with $\pi_u$.

Case 1: $\pi_u$ is linearized prior to its CPE.
Suppose process $p_c$ invokes the search operation $\pi_c$.
Since the next event of $\pi_u$ would be its CPE and it was linearized prior to its CPE this means that the linearization point of $\pi_u$ occurred before the invocation of $\pi_c$.
Let $p_c$ complete $\pi_c$ in $E$.
Since $\pi_u$ was linearized the response of $\pi_c$ must return \texttt{true}.
Now suppose that a crash event occurs.
Let $E'$ be the crash-recovery extension of $E$ and let $\pi_c'$ be a search operation identical to $\pi_c$.
Let $E''$ be a solo extension of $E'$ by any process in which $\pi_c'$ is invoked and completes.
Since the CPE of $\pi_u$ never occurred in $E$ (or $E'$) this means that $\pi_c'$ will return false in $E''$.
Let $H_1$ be the history of $E''$ and let $H_1^c$ be the strict completion of $H_1$.
There is no sequential execution of $I$ that could produce a history equivalent to $H_1^c$ for any construction of $H_1^c$.
Consider a sequential execution $E_s$ wherein the only operations invoked are two identical search operations and both searches complete in $E_s$.
Since no updates were linearized between the searches, the response must be the same for both search operations in $E_s$.
This is true for any execution $E_x \cdot E_s$.
Since $H_1^c$ has no equivalent sequential history SLE linearizability is violated.

Case 2: $\pi_u$ is linearized at its CPE.
Since the CPE of an operation is a persistence event and persistence events have no effect on volatile shared memory $\pi_c$ cannot determine if the CPE of $\pi_u$ has occurred.
If a process invokes the search operation $\pi_c$ defined in case 1, assuming $p_c$ continues to progress $\pi_c$ must eventually return.
The following cases describe two different indistinguishable executions and demonstrate why $\pi_c$ cannot arbitrarily return \texttt{true} or arbitrarily return \texttt{false}.

Case 2A: To avoid confusion with case 1 we will let the execution $E_A$ be equivalent to $E$.
Let $p_c$ invoke and complete $\pi_c$ in $E_A$.
$\pi_c$ observes the effects of $\pi_u$ and assumes that the CPE of $\pi_u$ has occurred so it returns \texttt{true}.
Now suppose a crash event occurs. 
Let $E_A'$ be the crash-recovery extension of $E_A$ and let $E_A''$ be the solo extension of $E_A'$ by any process in which $\pi_c'$ is invoked and completes.
The response of $\pi_c'$ in $E_A''$ will be \texttt{false} since the CPE of $\pi_u$ did not occur.
This is now equivalent to case 1. 
The strict completion of the resulting history has no equivalent sequential history so SLE linearizability is violated.

Case 2B: Let $E_B$ be the extension of $E$ where $p_u$ progresses until immediately after the CPE of $\pi_u$ then sleeps indefinitely.
Now let $p_c$ invoke and complete $\pi_c$.
The only difference between $E_A$ and $E_B$ is the fact that the CPE of $\pi_u$ occurs in $E_B$.
Since these executions are indistinguishable and the correct response of $\pi_c$ in case 2A should have been \texttt{false} we will let the response of $\pi_c$ in $E_B$ be \texttt{false}.
Now suppose a crash event occurs. 
Let $E_B'$ be the crash-recovery extension of $E_B$ and let $E_B''$ be the solo extension of $E_B'$ by any process in which $\pi_c'$ is invoked and completes.
The response of $\pi_c'$ in $E_B''$ will be \texttt{true} since the CPE of $\pi_u$ did not occur.
As in the previous cases the strict completion of the resulting history has no equivalent sequential execution since the update was linearized before the searches were invoked both searches should have the same response.

The executions described in case 1 and case 2 violate SLE linearizability. 
In all cases, SLE linearizability would not be violated if $\pi_c$ completed the CPE of $\pi_u$ and returned a value reflecting that the update was linearized.
However, if $\pi_c$ does complete the CPE of $\pi_u$ this requires performing at least one persistence event which would contradict our assumption.
Alternatively $p_c$ could wait for some other process to complete the CPE of $\pi_c$ and somehow signal the fact that the CPE has completed however this contradicts lock-freedom.
It is trivial to construct equivalent executions for these cases where the update operation is a \texttt{remove} and the search is looking for the key removed by the update. 
Thus there exists no SLE linearizable lock-free set with persistence-free search operations.
\end{proof}

\Cref{theorem:persistence-free-search-impossiblity} follows from Lemma 9 combined with the fact that persistence-fences and background flushes have no visible side effects.
For any lock-free SLE linearizable set we can construct an execution in which an \texttt{Insert} operation is concurrent with a persistence-free search looking for the key being inserted and the search is forced to perform a psync.
If the search observes the effects of the update it must be able to complete due to lock-freedom.
However, if it returns \texttt{true} but the CPE of the update never occurs before a crash then the resulting history is not SLE linearizable.
The search cannot arbitrarily assume that the CPE has not occurred since the CPE could be a background flush that occurred before the invocation of the search so if the search completes and arbitrarily returns \texttt{false} then the resulting history would still not be SLE linearizable.
These cases are indistinguishable since the in both cases the search must be able to determine if the CPE of the update has occurred but this is not possible since persistence events have no visible side effects. 
As a result the search must either perform a psync or wait a possibly infinite amount of time. 
To allow for persistence-free reads, one must either sacrifice SLE linearizability or allow blocking.

\fi

\section{Upper Bounds}
\label{section:algos}
\myparagraph{Briding the gap between theory and practice}
The lower bounds presented in the previous section offer insights into the theoretical limits of persistent sets for both durable linearizability and SLE linearizability.
While these lower bounds demonstrate a clear separation between durable and SLE linearizability, it is unclear whether or not we can observe any meaningful separation in practice.
In order to answer this question we would like to compare durable and SLE linearizable variants of the same persistent set implementation.
To this end, we extended the \LPList~technique \cite{david2018log} to allow for persistence-free searches and use our extension to implement several persistent linked-list.
We also add persistence helping to \SOFTList~ \cite{zuriel2019efficient}. We explain both in detail next. 

\myparagraph{Notable persistent set implementations}
We briefly describe above mentioned existing implementations of persistent sets. 
We only focus on hand-crafted implementations since they generally perform better in practice compared to transforms or universal constructions~\cite{friedman2020nvtraverse,friedman2021mirror}.

David et al. describe a technique for implementing durable linearizable link-based data structures called the \LPList~technique \cite{david2018log}.
Using the \LPList~technique, whenever a link in the data structure is updated, a single bit mark is applied to the link which denotes that it has not been written to persistent memory.
The mark is removed after the link is written to persistent memory.
We refer to this mark as the \textit{persistence bit}.
This technique was also presented by Wang et al. in the same year \cite{wang2018easy}.
Wei et al. presented a more general technique which does not steal bits from data structure links \cite{wei2021flit}.

The \LFList~algorithm of Zuriel et al. does not persist data structure links \cite{zuriel2019efficient}. 
Instead, the \LFList~algorithm persists metadata added to every node. 
 
Zuriel et al. designed a different algorithm called \SOFTList~(Sets with an Optimal Flushing Technique) offering persistence-free searches. 
The \SOFTList~algorithm does not persist data structure links and instead persists metadata added to each node. 
The major difference between the \LFList~algorithm and \SOFTList~is that \SOFTList~uses two different representations for every key in the data structure where only one representation is is explicitly flushed to persistent memory. 

\myparagraph{Recovery complexity}
After a crash, a recovery procedure is invoked to return the objects in persistent memory back to a consistent state.
Prior work has utilized a sequential recovery procedure \cite{zuriel2019efficient,david2018log,friedman2021mirror,correia2020persistent}. 
A sequential recovery procedure is not required for correctness but it motivates the desire for efficient recovery procedures.
No new data structure operations can be invoked until the recovery procedure has completed. 
Ideally we would like to minimize this period of downtime represented by the execution of recovery procedure. 
For the upper bounds in the this section, we use the asymptotic time complexity of the recovery procedure as another metric for comparing durable linearizable algorithms.

\myparagraph{Extended Link-and-Persist}
We choose to extend the \LPList~technique of David et al. because it is quite simple and it represents the state of the art for hand-crafted algorithms that persist the links of a data structure.
Moreover, unlike the algorithms in \cite{zuriel2019efficient}, the \LPList~technique can be used to implement persistent sets without compromising recovery complexity.
We build on the \LPList~technique by extending it to allow for persistence-free searches and improved practical performance.
Cohen et al noted that persistence-free searches rely on the ability to linearize successful update operations at some point after the CPE of the operation \cite{cohen2018inherent}. 
In our case, this means that searches must be able to determine if the pointer is not persistent because of an \texttt{Insert} operation or a \texttt{Remove} operation.
This is not possible with the original \LPList~technique.
We address this with two changes.

First, we require that a successful update operation, $\pi_u$, is linearized after its \emph{Critical Persistence Event} (or CPE).
Intuitively, the CPE represents the point after which the update will be recovered if a crash occurs.
Specifically, if a volatile data structure would linearize $\pi_u$ at the success of a RMW on a pointer $v$ then we require that $\pi_u$ is linearized at the success of the RMW that sets the persistence bit in $v$.
If a search traverses a pointer, $v$, marked as not persistent the search can always be linearized prior to the concurrent update which modified $v$. 

Secondly, since successful updates are linearized after their CPE, if the response of search operation depends on data that is linked into the data structure by a pointer marked as not persistent then the search must be able to access the last persistent value of that pointer.
To achieve this, we add a pointer field to every node which we call the \textit{old field}.
A node will have both an \textit{old field} and a pointer to its successor (\textit{next pointer}) which effectively doubles the size of every data structure link.
The \textit{old field} will point to the last persistent value of the successor pointer while the successor pointer is marked as not persistent.
In practice, the \textit{old field} must be initialized to \texttt{null} then updated to a non-\texttt{null} value when the corresponding successor pointer is modified to a new value that needs to be persisted.
Note that modifications like flagging or marking do not always need to be persisted; this depends on the whether or not the update can complete while the flagged or marked pointers are still reachable via a traversal from the root of the data structure.
The easiest way to correctly update the \textit{old field} is to update the successor pointer and the \textit{old field} atomically using a hardware implementation of double-wide compare-and-swap (DWCAS) namely the cmpxchg16b instruction on Intel.
Alternatively, a regular single-word compare-and-swap (SWCAS) can be used but this requires adding extra volatile memory synchronization to ensure correctness.
For some data structures such as linked-lists using only SWCAS might also require adding an extra psync to updates. 
To allow searches to distinguish between pointers that are marked as not persistent because of a \texttt{remove} versus those that are not persistent because of an \texttt{insert} we require that the \textit{old field} is always updated to a non-\texttt{null} value whenever a \texttt{remove} operation unlinks a node.
\texttt{Insert} operations that modify the data structure must flag either the \textit{old field} or the corresponding successor to indicate that the pointer marked as not persistent was last updated by an insert.
When using SWCAS to update the \textit{old field} this flag must be on the successor pointer.

With our extension if the response of a search operation depends on data linked into the data structure via a pointer marked as not persistent it can be linearized prior to the concurrent update operation that modified the pointer and it can use the information in the \textit{old field} to determine the correct response which does not require performing any psyncs.
If the search finds that the update was an insert it simply returns \texttt{false}.
If the update was a remove but the search was able to find the value that it was looking for then it can return \texttt{true} since that key will be in persistent memory. 
If the update was a remove but the search was not able to find the value that it was looking for then it can check the if the node pointed to by the \texttt{old field} contains the value.
As with the original, our extension still requires that an operation $\pi$ will ensure that the CPE of any other operation which $\pi$ depends on has occurred. 
$\pi$ must also ensure that its own CPE has occurred before it returns. 
Another requirement which was not explicitly stated by David et al. is that operations must ensure that any data that a data structure link can point to is written to persistent memory before the link is updated to point to that data.

Our extension can be used to implement several link-based sets including trees and hash tables.
Data structures implemented using our extension provide durable linearizability, however the use of persistence-free searches is optional.
If the data structure does not utilize persistence-free searches then it would provide SLE linearizability (requiring only a change in the correctness proof).

\myparagraph{Augmenting LF and SOFT}
\SOFTList~ represents the state of the art for hand-crafted algorithms that do not persist the links of a data structure.
The \SOFTList~ algorithm provides durable linearizability. 
For comparison, we added persistence helping for all operations of a persistent linked-list implemented using \SOFTList~ (thereby removing persistence-free searches) to achieve a SLE linearizable variant.
We refer to this variant as SOFT-SLE.
We also modified the implementation of the \LFList~ algorithm. 
While the original \LFList~ algorithm does not explicitly persist data structure links, it still allocates the links from persistent memory.
We can achieve better performance by allocating the links from volatile memory.
To emphasize the difference we refer to this as LF-V2.

\subsection{Our Persistent List Implementations}
In order to compare our extension to existing work we provide several implementations of persistent linked-lists which utilize our extended-link-and-persist approach.
We choose to implement and study linked lists because they generally do not require complicated volatile synchronization. 

\begin{lstlisting}[frame=bt, label={fig:pf-search-code}, float=t!, caption={Pseudocode for the persistence-free contains function of our Physical-Delete (PD) list. The volatile synchronization is based on the list of Fomitchev and Ruppert.}]
def PersistenceFreeContains(key) :
    p = head, pNext = p.next, curr = UnmarkPtr(pNext)
    while true :
        if curr.key $\leq$ key : break
        p = curr, pNext = p.next
        curr = UnmarkPtr(pNext)
    hasKey = curr.key==key
    if IsDurable(ptNext) : return hasKey //\label{line:physical-del-pf-contains-ret1}
    old1 = p.old, pNext2 = p.next, old2 = p.old //\label{line:physical-del-pf-contains-reads-start}
    pDiff = pNext$\neq$pNext2, oldDiff = old1$\neq$old2 //\label{line:physical-del-pf-contains-reads-end}
    if pDiff or oldDiff or old1==null : return hasKey //\label{line:physical-del-pf-contains-ret2}
    if IsIFlagged(old1) : return false //\label{line:physical-del-pf-contains-ret3}
    if hasKey : return true //\label{line:physical-del-pf-contains-ret4}
    return UnmarkPtr(old1).key==key//\label{line:physical-del-pf-contains-ret5}
\end{lstlisting}

We refer to our implementations as PD (Physical-Delete), PD-S (SWCAS implementation of PD), LD (Logical-Delete) and LD-S (SWCAS implementation of LD).
The names refer to the synchronization approach and primitive.
Our implementations use two different methods for achieving synchronization in volatile memory.
Specifically we use one based on the Harris list \cite{harris-set} and another based on the work of Fomitchev and Ruppert \cite{fomitchev2004lock}.
The former takes a lazy approach to deletion that relies on marking for logical deletion and helping.
As a result, marked pointers must be written to persistent memory which requires an extra psync.
The latter does not take a lazy approach to deletions but still relies on helping and requires extra volatile memory synchronization through the use of marking and flagging. 
Fortunately, we do not need to persist marked or flagged pointers with this approach.
\Cref{fig:list-diagram} shows an example of an update operation in the PD list implementation. 
We also implement separate variants using 2 different synchronization primitives, DWCAS and SWCAS.
\Cref{table:durable-lists-updates} summarizes some of the details of these approaches. 
We assume that the size of the \textit{key} and \textit{value} fields allow a single node to fit on one cache line meaning a \textit{flush} on any field of the node guarantees that all fields are written to persistent memory.
The assumption that the data we want to persist fits on a single cache line is common.
David et al., Zuriel et al. and several others have relied on similar assumptions \cite{zuriel2019efficient,david2018log,correia2020persistent,ramalhete2021efficient}.
It is possible that our persistent list could be modified to allow for the case where nodes do not fit onto a single cache line by adopting a strategy similar to \cite{cohen2017efficient}.

\begin{figure}[!t]
\begin{center}
    \includegraphics[width=0.5\linewidth]{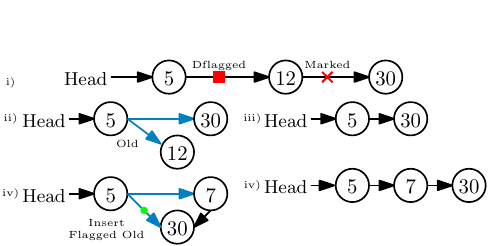}
\caption{Steps to execute an \texttt{insert(7)} operation in our PD list implementation. Blue pointers indicate non-durable pointers (with persistence bits set to 0). i) Initially we have three nodes. The node containing 5 has a pending delete flag (Dflagged) and the node containing 12 is marked for deletion. We traverse to find a key $\geq$ 7. ii) Help finish the pending \texttt{Remove} via DWCAS to unlink marked node and set old pointer. iii) Flush and set persistence bit via DWCAS (clearing old pointer). iv) Via DWCAS insert 7 and set old pointer. The old pointer is flagged to indicate a pending insert. v) Flush and set persistence bit via DWCAS.} 
\label{fig:list-diagram}
\end{center}
\end{figure}

\myparagraph{Search Variants}
As part of our persistent list, we implement 4 variants of the \texttt{contains} operation: persist all, asynchronous persist all, persist last and persistence free.
We focus on the latter two since the others are naive approaches that perform many redundant psyncs.

\myparagraph{Persist Last (PL)} 
If the pointer into the terminal node of the traversal is marked as not persistent then write it to persistent memory and set its persistence bit via a CAS.
This variant is the most similar to the searches in the linked list implemented using the original \LPList~technique.

\myparagraph{Persistence Free (PF)} 
If the pointer into the terminal node of the traversal performed by the search is marked as not persistent then use the information in the \textit{old field} of the node's predecessor to determine the correct return value without performing any persistence events.
Since we do not need to set the durability bit of any link, this variant does not perform any writes and never performs a psync.
Algorithm \ref{fig:pf-search-code} shows the pseudocode for the persistence-free search of the PD list. 
For simplicity we abbreviate some of the bitwise operations with named functions.
Specifically, \textit{UnmarkPtr} which removes any marks or flags, \textit{IsDurable} which checks if the pointer is marked as persistent and \textit{IsIflagged} which checks if the pointer was flagged by an \texttt{insert}.

\begin{theorem}
The PD, PD-S, LD, and LD-S lists are durable linearizable and lock-free. 
\label{list-correctness}
\end{theorem}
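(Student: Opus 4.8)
The plan is to prove the two properties separately, and to decompose durable linearizability into (i) linearizability of the crash-free subhistory $ops(H)$ and (ii) durability of every completed operation. For (i) I would lean on the correctness of the underlying volatile algorithms: the LD and LD-S lists inherit their volatile synchronization from the Harris list and the PD and PD-S lists from Fomitchev and Ruppert, both of which are known to be linearizable lock-free sets. The only departure is the relocation of linearization points dictated by the extended link-and-persist technique: a successful update is linearized at its CPE (the RMW that sets the persistence bit) rather than at the earlier RMW that installs the new link, and a persistence-free search that traverses a pointer marked not-persistent is linearized \emph{before} the concurrent update that installed it. I would first verify that these relocated points still respect the happens-before order $\rightarrow_{ops(H)}$, then argue consistency with the sequential specification of the set type.

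The crux of step (i) is the search case, and here I would argue by cases on why a traversed pointer is not persistent. If the successor (or the \emph{old field}) is flagged to mark a pending insert, the search linearizes before the insert and returns \texttt{false} (Algorithm~\ref{fig:pf-search-code}, line~\ref{line:physical-del-pf-contains-ret3}); the sequential specification is respected because at that linearization point the key is not yet present. If the pointer is not persistent because of a remove, the \emph{old field} points to the last persistent successor, and I would show that the value returned at lines~\ref{line:physical-del-pf-contains-ret4}--\ref{line:physical-del-pf-contains-ret5} agrees with the set contents at a point preceding the remove's CPE. The reads of \texttt{p.next} and \texttt{p.old} at lines~\ref{line:physical-del-pf-contains-reads-start}--\ref{line:physical-del-pf-contains-reads-end} must be shown to yield a consistent snapshot, which is where the double-read validation (\texttt{pDiff}, \texttt{oldDiff} at line~\ref{line:physical-del-pf-contains-ret2}) earns its keep. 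For PD and LD this snapshot is installed atomically by DWCAS, so the case is direct; for the SWCAS variants PD-S and LD-S the successor and \emph{old field} are updated separately, so I would establish a volatile invariant relating the two fields across the extra synchronization steps and show that the validation detects any torn interleaving, forcing a re-read rather than an incorrect return.

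For (ii), durability, I would establish the invariant that when any operation returns, its own CPE has occurred and the CPE of every operation it depends on has occurred; this is enforced by the technique's requirement that an operation persists the links it observes and its own modification, and that the data a link points to is persisted before the link is published. Combined with the fact that a successful update is linearized exactly at its CPE, this guarantees that every operation linearized before a crash is reflected in NVM, so that the recovery procedure, reading persistence bits and \emph{old fields}, reconstructs precisely the set state corresponding to $ops(H)$. I would then argue lock-freedom: the Harris and Fomitchev--Ruppert traversals are lock-free, the per-operation persistence work (a bounded number of flushes, one fence, and CAS operations to set persistence bits and \emph{old fields}) is finite, and the helping used to complete a concurrent operation's CPE is bounded, so no process waits indefinitely on another.

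The main obstacle I expect is the snapshot-consistency argument for persistence-free searches in the SWCAS variants, where the non-atomic update of the successor and \emph{old field} lets a search observe an intermediate state; tying the validation checks at line~\ref{line:physical-del-pf-contains-ret2} to a volatile invariant strong enough to rule out every incorrect return value, while still permitting progress, will be the technically demanding step.
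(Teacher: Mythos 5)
Your proposal is correct and follows the same essential strategy as the paper's proof: relocate the linearization point of a successful update to the RMW that sets the persistence bit, linearize a persistence-free search that meets a non-persistent pointer before the concurrent update, do a case analysis on the return paths of \texttt{ContainsPersistFree} using the \emph{old field} and the insert flag, and inherit lock-freedom from the underlying Harris and Fomitchev--Ruppert synchronization. The organizational device differs, though. You decompose durable linearizability into linearizability of $ops(H)$ plus a separate durability invariant (``every returned operation's CPE, and the CPEs it depends on, have occurred''). The paper instead introduces a paired \emph{volatile abstract set} and \emph{persistent abstract set} (the latter defined as the volatile abstract set reached after a crash-recovery extension), proves that each update changes the volatile set exactly once and before it changes the persistent set, and then chooses every linearization point so that the operation's response matches the \emph{persistent} abstract set at some instant inside its execution interval. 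That single device discharges linearizability and durability simultaneously---an operation linearized before a crash is by construction reflected in what recovery rebuilds---so the paper never needs your step (ii) as a separate lemma; your decomposition is more modular but must separately verify that the relocated linearization points respect real-time order, which the abstract-set argument gets for free. The supporting invariants are the same in both (your ``consistent snapshot of \texttt{p.next} and \texttt{p.old}'' is the paper's invariant that a non-durable \texttt{next} has \texttt{old} equal to its last durable value, combined with the lemma that the \texttt{Persist} DWCAS fails only if an identical one succeeded).

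Two calibration notes. First, you write that a successful update is ``linearized at its CPE (the RMW that sets the persistence bit)''; in the paper's terminology the CPE is the flush/fence itself, and the linearization point is the subsequent RMW that sets the persistence bit---a point strictly after the CPE. This is only a naming slip, but keeping the two events distinct matters when you argue that the effect is already in NVM at linearization time. Second, you correctly flag the SWCAS snapshot-consistency argument as the technically demanding step; be aware that the paper's own proof covers only the DWCAS Physical-Delete list in full and explicitly defers the SWCAS variants, noting that the durability-bit invariant is the part that changes when \texttt{next} and \texttt{old} are no longer updated atomically. So your plan for PD-S and LD-S would actually have to supply an argument the paper does not spell out.
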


We prove \Cref{list-correctness} in the full version of the paper. We can also show that our list implementations are durable linearizable by considering a volatile abstract set (the keys in the list that are reachable in volatile memory) and a persistent abstract set (the keys in the list that are reachable in persistent memory).
By identifying, for each operation, the points at which these sets change, we can show that updates change the volatile abstract set prior to changing the persistent abstract set and that each update changes the the volatile abstract set exactly once.
It follows that the list is always consistent with some persistent abstract set.

If we never invoke a persistence-free \texttt{contains} operation then we can prove that the implementations are SLE linearizable and lock-free. 
Doing so simply requires that we change our arguments regarding when we linearize update operations such that the linearization point is not after the CPE.
Note that of the set implementations that we discuss, those that have persistence-free searches are examples of implementations which are strict linearizable but not SLE linearizable. 
These implementations require that the recovery procedure or operations invoked after a crash take steps which effectively linearize operations.
This is because following a crash, one cannot tell the difference between an operation that has progressed far enough to allow some future operation to help linearize it and an operation that was already linearized.

\begin{table}[t]
\centering
\begin{tabular}{|l|l|l|l|l|}
\hline
\textbf{\begin{tabular}[c]{@{}l@{}}Name\end{tabular}} &
\textbf{\begin{tabular}[c]{@{}l@{}}Synch. Approach\end{tabular}} &
\textbf{\begin{tabular}[c]{@{}l@{}}Synch. Primitive\end{tabular}} &
\multicolumn{2}{l|}{\textbf{\begin{tabular}[c]{@{}l@{}}Min Psyncs Per Insert/Remove\end{tabular}}} \\ \hline
PD              & Fomitchev   & DWCAS  & 1 \hspace{4mm} & 1 \\ \hline
PD-S            & Fomitchev   & SWCAS    & 2 & 1 \\ \hline
LD              & Harris      & DWCAS  & 1 & 2 \\ \hline
LD-S            & Harris      & SWCAS    & 2 & 2 \\ \hline
\end{tabular}
\caption{Our Novel Persistent List Details.}
\label{table:durable-lists-updates}
\vspace{-1cm}
\end{table}

\ifx\fullpaper\undefined
\else
\section{Persistent List Detailed Description}
\label{section:durable-list-code}

Our approach to implementing persistent sets is an extension of the \LPList~technique \cite{david2018log}.
One major problem with the \LPList~technique is that it cannot be used to implement persistence free \texttt{contains} operations.
This is primarily because the \LPList~technique offers no way to determine what type of update operation caused a link to become marked as not persistent. 

To address this issue, we need a mechanism that allows read-only operations to distinguish between links that are not persistent because of an incomplete \texttt{insert} operation and links that are not persistent because of an incomplete \texttt{remove} operation.
We must also be able to safely linearize update operations after the critical persistence step. 
The latter constraint is already possible with the original \LPList~technique, however the former is less trivial. To understand the challenges consider an execution with two processes, $p_1$ and $p_2$ where $p_1$ is performing a \texttt{contains} operation $\pi_1$ concurrently with $p_2$ which is performing a \texttt{remove} operation $\pi_2$. 
Assume that $\pi_1$ is persistence-free.
Let $k$ be the key that $\pi_1$ is searching for and let $k$ also be the key that $\pi_2$ wants to remove.
Suppose that $k$ is already in the set and is contained in the node $n$. 
Let $p_2$ progress until it unlinks $n$ then let $p_2$ sleep indefinitely at some point before the CPE of $\pi_2$. 
$p_1$ will eventually traverse the link updated as part of $\pi_2$ which will lead to the successor of $n$.
The \textit{persistence bit} of the link will indicate that it is has not been written to persistent memory.
The \textit{persistence bit} cannot alone be used to determine the correct response for $\pi_1$.
If a system crash occurs the contents of persistent memory will reflect that $k$ is still the data structure.
$\pi_1$ is persistence-free so it will not complete the CPE of $\pi_2$ and we cannot assume that the CPE of $\pi_2$ will occur as a result of a background flush.
This means that $\pi_1$ must either perform a persistence event or it must be linearized prior to $\pi_2$.
The former violates our assumption.
The latter requires that $\pi_2$ is linearized after its CPE.
Since we are not concerned with guaranteeing SLE linearizability, we can assume that $\pi_2$ is linearized after its CPE.
This means that $p_1$ must be able to determine the key contained in the node that $\pi_2$ unlinked.
However, in volatile shared memory, $n$ is not reachable via a traversal from the root.
Since $p_1$ has no way to determine the key contained in $n$, if $\pi_1$ returns \texttt{false} the execution is not durable linearizable.  
For the case where $\pi_1$ returns true simply reconstruct the example starting from a configuration wherein the node unlinked by $\pi_1$ does not contain $k$ and $k$ is not contained in any node in the list.
This means that there is no safe way to linearize $\pi_1$ without performing a persistence event.

To solve this problem, we add a field to every node in the data structure.
We refer to this field as the \textit{old field} (or \textit{old pointer}).
The \textit{old field} is a pointer to a node and is always initialized to \texttt{null} when a new node is created.
When a remove operation $\pi$ wants to unlink a node $n$ by atomically updating the next pointer of its predecessor $u$, it must also atomically update the \textit{old field} of $u$ to point to $n$. 
This means that $n$ will remain accessible until some point after the CPE of $\pi$.
After the CPE of $\pi$ has occurred the \textit{old field} in $u$ can be reverted back to \texttt{null}.
We also utilize the old field in insert operations to achieve a minimum of 1 psync per insert operation.
When an \texttt{insert} operation wants to insert a new node by updating the next point of some node $n$ it will first set the \textit{old field} in $n$ to the last persistent successor of $n$.
A mark bit is applied to the \textit{old field} when the value is set by an \texttt{insert} operation.
We refer to this as the \textit{insert bit}.
This allows operations to easily determine the type of operation that set the \textit{old field}.

If the \textit{old field} in a node is not \texttt{null} then there is an ongoing concurrent update operation involving the node.
Update operations can use the \textit{old field} to either fail early or help concurrent operations.
\texttt{Contains} operations can use the old field to avoid performing persistence events.

\subsection{SWCAS vs. DWCAS}
The way that we utilize the old field allows us to think of data structure links as a tuple in the form $\langle next, old\rangle$.
Moreover, in a real implementation, both $next$ and $old$ are word sized pointers.
DWCAS is therefore a natural choice if we want to atomically update $\langle next, old\rangle$.
On the other hand, there are some performance concerns related to the use of DWCAS.
Not all processors support hardware implementations of DWCAS and software implementations are typically inferior. 
This suggested that a SWCAS implementation would perform better.
For this reason, we implement different versions of our persistent list using both DWCAS and SWCAS respectively.

\subsection{Pseudo Code Overview}
In this section we will provide an more in depth description of the different implementations of our durable list.
We will focus primarily on the DWCAS implementation of the \textit{Physical-Del} list since this is version performed best in the experimental tests.
Throughout the pseudo code we utilize the functions \texttt{UnmarkPtr, IsDurable, MarkDurable} and others of the form \textit{IsX} or \textit{MarkX}.
These functions represent simple bitwise operations used to remove marks/flags, check if a node is marked/flagged and apply marks/flags.
We omit their full function bodies. 
We also use the function \texttt{Flush}.
On Intel systems \texttt{Flush} can be a single \textit{clflush} instruction, a \textit{clflushopt} and an \textit{sfence} or a \textit{clwb} and an \textit{sfence}.

Every version of our persistent list uses the same \textit{Node} data type.
A \textit{Node} contains four fields: \textit{key, value, next} and \textit{old}.
\textit{Next} and \textit{old} are word sized pointers.
We assume that the size of the \textit{key} and \textit{value} fields allow a single Node to fit on one cache line meaning a \textit{flush} on any field of the node guarantees that all fields are written to persistent memory.
Nodes represent the \textit{critical} data that we want to persist.
The assumption that the critical data fits on a single cache line is common.
The \LPList~list, \LFList~list and \SOFTList~list require similar assumptions.
It is possible that our persistent list could be modified to allow for the case where nodes do not fit onto a single cache line by adopting a strategy similar to \cite{cohen2017efficient}.

Every version of our persistent list utilizes a dummy \textit{head} and \textit{tail} node where the root of the list is always head, head has no predecessor and contains the key equivalent to negative infinity. 
Tail contains the key equivalent to infinity and has no successor.
An empty list consists of only the head and tail nodes where $head.next$ points to $tail$.

\subsubsection{Physical-Del}

\begin{lstlisting}[frame=single, float=t!, label={fig:helpers1}, caption={Persist and Find helper functions for Physical-Delete (DWCAS)}]
def Persist(node, expNext, old) :
    Flush(&node.next)
    durNext = MarkDurable}(expNext)
    if old == null :
        old = node.old
    des = $\langle$node.next, node.old$\rangle$
    exp = $\langle$expNext, old$\rangle$
    new = $\langle$durNext, null$\rangle$
    DWCAS(des, exp, new) //\label{line:physical-del-persist-dwcas}   
    
def Find(key) :
    gp = null
    p = head
    pNext = p.next
    curr = UnmarkPtr(pNext)
    while true :
        if curr.key $\geq$ key :
            break
        gp = p
        p = curr
        pNext = p.next
        curr = UnmarkPtr(pNext)
    if gp $\neq$ null :
        gpNext = gp.next
        if not IsDurable(gpNext) :
            Persist(gp, gpNext, null) //\label{line:search-dur-bit-gp}
    if not IsDurable(pNext) :
        Persist(p, pNext, null) //\label{line:search-dur-bit-parent}
    return $\langle$gp,p,curr$\rangle$
\end{lstlisting}

Algorithms \ref{fig:pf-search-code}-\ref{fig:physical-del-dwcas-helpers2} show the pseudo code for the \physdelList~list which uses DWCAS.
This implementation is based on the volatile list of Fomitchev and Ruppert \cite{fomitchev2004lock}.

With this approach, we utilize the bottom three bits of a node's next pointer for flagging and marking. 
A link in the data structure can be represented as the tuple: $\langle$next, dflag, marked, persistence-bit$\rangle$.
The bottom bit is the persistence bit which as in \LPList-technique indicates if the link is persistent.
However, in our implementations if the persistence bit is set then the node is persistent meaning the persistence bit is applied to a link after it is flushed.
The \textit{marked} bit is used to mark a node as logically deleted.
The \textit{dflag} bit indicates that a remove operation is going to remove the successor of the node.
Each of these marks are initialized to 0. 

This implementation uses DWCAS.
Whenever a node is updated both its \textit{next} and \textit{old} fields are updated atomically.
Any update that sets the persistence bit in the next pointer of a node will also revert the old field to \texttt{null}. 

We will now give an overview of the main functions utilized by the \physdelList~list.
Many of these functions will be very similar to the volatile list in \cite{fomitchev2004lock} with the main differences being related to achieving a persistence-free search.

\myparagraph{Find} 
The \texttt{Find} function is shown in Algorithm \ref{fig:helpers1}. 
\texttt{Find} has a single argument, $key$ and returns the tuple $\langle gp, p, curr\rangle$ where $gp$, $p$ and $curr$ are nodes such that $curr.key \geq key$ and there was a time during the execution of the function where $gp.next$ pointed to $p$ and $p.next$ pointed to $curr$.
As in the \LPList-technique, \texttt{Find} will also confirm that $gp.next$ and $p.next$ are persistent.

\begin{lstlisting}[name=updates,frame=single, float=t!, label={fig:physical-del-dwcas-updates1}, caption={Insert function for Physical-Delete (DWCAS)}]
def Insert(key, val) :
    while true :
        $\langle$gp,p,curr$\rangle$ = Find(key) //\label{line:insert-search}
        pNext = p.next
        if curr.key == key : 
            return false
        if not IsClean(pNext) :
            HelpUpdate(gp, p)//\label{line:insert-clean-check}
        else 
            newNode = CreateNode(key, val, curr)
            durCurr = MarkDurable(curr)
            des = $\langle$p.next, p.old$\rangle$
            exp = $\langle$durCurr, null$\rangle$
            iflagCurr = MarkIflag(curr) //\label{line:insert-iflag}
            new = $\langle$newNode, iflagCurr$\rangle$
            if DWCAS(des, exp, new) ://\label{line:insert-dwcas}
                Persist(p, newNode, iflagCurr) //\label{line:insert-persist}
                return true
\end{lstlisting}
 
\myparagraph{Insert} 
The \texttt{Insert} function has two arguments $key$ and $val$ representing the key to insert and its corresponding value.
Any execution of \texttt{Insert} begins by invoking \texttt{Find} yielding the tuple $\langle gp, p, curr\rangle$.
If the key contained $curr$ is the same as the key to be inserted then we return \texttt{false}.
If the key was not found in the list the insert checks that the $p$ is not marked or dflagged using the function $IsClean$.
This check ensures that the insert can fail early if the DWCAS on \cref{line:insert-dwcas} is guaranteed to fail.
If $p$ is dirty then the insert will attempt to help the concurrent update then retry inserting the key.
If $p$ is not dirty then a new node $n$ is created.
The $CreateNode$ function sets the next pointer of $n$ to $curr$, explicitly flushes $n$ to persistent memory and sets the persistence bit in $n.next$.
Next, on \cref{line:insert-iflag}, we create $iflagCurr$ which is a pointer to $curr$ with the iflag bit set.
The insert then attempts a DWCAS to update $p.next$ to $n$ and $p.old$ to $iflagCurr$.
If the DWCAS is successful the insert explicitly flushes $p.next$, performs a persistence fence and sets the persistence bit via the \texttt{Persist} function on \cref{line:insert-persist}.
If the DWACS fails then the insert will retry.

\myparagraph{Remove}
The \texttt{Remove} function has a single argument $key$ representing the key to be removed.
The initial steps of \texttt{Remove} are similar to \texttt{Insert}.
A remove begins by invoking \texttt{Find} which again returns the tuple $\langle gp, p, curr\rangle$ and ensures the same guarantees explained previously.
If $curr.key$ is not $key$ then the remove returns \texttt{false}.
If the key was found then the remove must confirm that both $p$ and $curr$ are clean.
If either is found to be dirty the remove will attempt to help the concurrent update then retry.
If $p$ and $curr$ are clean then $dflagCurr$ is created.
$dflagCurr$ is a dflagged copy of $curr$.
The remove then attempts a DWCAS on \cref{line:remove-dwcas} to update $p.next$ to $dflagCurr$ and $p.old$ to \texttt{null}. 
(When we say that the DWCAS updates $p.old$ to \texttt{null} it is more accurate to say that the DWCAS confirms that $p.old$ is \texttt{null} since the expected value for $p.old$ is also \texttt{null}).
If the DWCAS succeeds then the remove invokes the \texttt{HelpRemove} function to complete the remove then returns \texttt{true} after returning from \texttt{HelpRemove}. 
If the DWCAS fails then the remove retries.

\begin{lstlisting}[name=updates,frame=single, float=t!, label={fig:physical-del-dwcas-updates2}, caption={Remove function for Physical-Delete (DWCAS)}]              
def Remove(key) :
    while true :
        $\langle$gp,p,curr$\rangle$ = Find(key) //\label{line:remove-search}
        cNext = curr.next
        pNext = p.next
        if curr.key $\neq$ key : //\label{line:remove-curr-key-check}
            return false
        if not IsClean(cNext) : //\label{line:remove-clean-check}
            HelpUpdate(p, curr)
        else if not IsClean(pNext) :
            HelpUpdate(gp, p)
        else :
            dflagCurr = MarkDflag(curr) //\label{line:remove-dflag-mask}
            des = $\langle$p.next, p.old$\rangle$
            exp = $\langle$curr, null$\rangle$
            new = $\langle$dflagCurr, null$\rangle$
            if DWCAS(des, exp, new) : //\label{line:remove-dwcas}
                HelpRemove(p, dflagCurr) //\label{line:remove-invoke-help-remove}
                return true
\end{lstlisting}

\myparagraph{Helper Functions}
Update operations rely on a helping mechanism that we implement with the functions \texttt{HelpUpdate}, \texttt{HelpRemove} and \texttt{HelpMarked}.
The \texttt{HelpUpdate} function has two arguments $parent$ and $dirtyNode$ which are both nodes. 
If $dirtyNode$ is dflagged then \texttt{HelpRemove} is invoked.
If $dirtyNode$ is marked then \texttt{HelpMarked} is invoked.
If $dirtyNode$ is clean \texttt{HelpUpdate} returns.

The \texttt{HelpRemove} function is responsible for marking the successor node of a dflagged node.
It takes two arguments $parent$ and $nodeToDel$.
The \texttt{HelpRemove} function begins by ensuring that the next pointer in $nodeToDel$ is persistent.
If $nodeToDel.next$ does not have the persistence bit set then the \texttt{Persist} function is invoked to write $nodeToDel.next$ to persistent memory and set the persistence bit.
Next, we create $markedSucc$ which is a marked copy of $nodeToDel.next$.
On \cref{line:helpRemove-mark-dwcas} we perform ValDWCAS (a DWCAS that returns the last value at the destination) to update $nodeToDel.next$ to $markedSucc$ and confirm that $nodeToDel.old$ is \texttt{null}.
If the value returned by the ValDWCAS matches the expected value $exp$ or if it is marked then we invoke the \texttt{HelpMarked} function then return otherwise we will need to retry.
Before retrying we check if the value returned by the ValDWCAS is dflagged then we invoke the \texttt{HelpRemove} function with the arguments $nodeTodel$ and the unmarked $nodeToDel.next$.

The \texttt{HelpMarked} function is responsible for physically deleting a node.
The \texttt{HelpMarked} function takes two arguments $parent$ and $nodeToDel$
This requires performing a single DWCAS on \cref{line:helpMarked-dwcas} which attempts to atomically update the next pointer of the $parent$ to the unmarked next pointer of the $nodeToDel$ and the old pointer of $parent$ to $nodeTodel$.
If the DWCAS succeeds the \texttt{Persist} function is invoked to write $parent.next$ to persistent memory, set the persistence bit in $parent.next$ and revert $parent.old$ to \texttt{null}.
There is no need to retry this DWCAS since the only way that it can fail is if another process physically deleted $nodeToDel$.

    
    
    
        
        

\begin{lstlisting}[frame=single, float=t!, label={fig:physical-del-dwcas-helpers1}, caption={HelpUpdate and HelpMarked Function for Physical-Delete (DWCAS)}]
def HelpUpdate(parent, dirtyNode) :
    $dirtyNext = dirtyNode.next$
    $dirtySucc = \func{UnmarkPtr}(dirtyNext)$
    if $\func{IsDflagged}(dirtyNext)$ : //\label{line:helpUpdate-dflag-check}
        $\func{HelpRemove}(dirtyNode, dirtySucc)$ //\label{line:helpUpdate-invoke-help-remove}
    else if $IsMarked(dirtyNext)$ :
        $\func{HelpMarked}(parent, dirtyNode)$ //\label{line:helpUpdate-helpMarked}

def HelpMarked(parent, nodeToDel) :
    $succ = \func{UnmarkPtr}(nodeToDel.next)$
    $expNext = \func{Markdflag}(nodeToDel)$
    $expNext = \func{MarkDurable}(expNext)$
    $des = \langle parent.next, parent.old \rangle$
    $exp = \langle expNext, null\rangle$
    $new = \langle succ, expNext\rangle$
    if $\func{DWCAS}(des, exp, new)$ : //\label{line:helpMarked-dwcas}
        $\func{Persist}(parent, succ, expNext)$ //\label{line:helpMarked-persist}
\end{lstlisting}

\begin{lstlisting}[frame=single, float=t!, label={fig:physical-del-dwcas-helpers2}, caption={HelpRemove Function for Physical-Delete (DWCAS)}]
def HelpRemove(parent, nodeToDel) :
    while $parent.next$ == $nodeToDel$
        $succ = nodeToDel.next$
        $des = \langle nodeToDel.next, nodeToDel.old \rangle$
        if not $\func{IsDurable}(succ)$ then 
            $\func{Persist}(nodeToDel, succ, null)$
        $durSucc = \func{UnmarkPtr(succ)}$ //\label{line:helpRemove-unmark-succ}
        $durSucc = \func{MarkDurable(durSucc)}$
        $markedSucc = \func{MarkDel}(durSucc)$ //\label{line:helpRemove-mark-mask}
        $exp = \langle durSucc, null\rangle$
        $new = \langle markedSucc, null\rangle$
        $last = ValDWCAS(des, exp, new)$ //\label{line:helpRemove-mark-dwcas}
        $lastMarked = \func{IsMarkedForDel}(last.next)$
        if $last == exp$ or $lastMarked$ : 
            $HelpMarked(parent, nodeToDel)$ //\label{line:helpRemove-helpMarked1}
            return 
        else if $\func{IsDflagged}(last.next)$ :
            $next = \func{UnmarkPtr}(last.next)$
            $\func{HelpRemove}(nodeToDel, next)$
\end{lstlisting}

\myparagraph{Contains}
We offer four different variants of the \texttt{Contains} function.
Each of these functions has a single argument $key$.

The simplest of these variants is the \texttt{ContainsPersistAll}.
In this version we traverse the list until we reach a node containing a key greater than or equal to $key$.
During the traversal, if we find a pointer with a persistence bit set to 0 the \texttt{Persist} function is invoked to write the pointer to persistent memory and set the persistence bit.
Finally, we return \texttt{true} if the terminal node of the traversal contains $key$ and \texttt{false} otherwise.
Unsurprisingly this variant performs poorly in practice.

The \texttt{ContainsAsynchPersistAll} function exploits asynchronous flush instructions to require only a single persistence fence.
In this case, during the traversal any pointer that has its persistence bit set to 0 is asynchronously flushed to persistent memory.
A statically allocated array is used to keep track of the nodes that were asynchronously flushed.
(The size of this array depends on the maximum size of the list).
After the traversal a single persistence fence is performed.
Next we iterate the collection of asynchronously flushed nodes.
If the next pointer of the node has not changed since the flush then we set the persistence bit and revert the old field to \texttt{null}.
Finally, we return \texttt{true} if the terminal node of the traversal contains $key$ and \texttt{false} otherwise.

The \texttt{ContainsPersistLast} function is similar to the search function in \cite{david2018log}.
In this version, we do not flush anything during the traversal.
Instead, we check that link that led to the the terminal node of the traversal has its persistence bit set.
If this is not the case then we invoke the \texttt{Persist} function to write the pointer to persistent memory and set the persistence bit.
Again we return \texttt{true} if the terminal node of the traversal contains $key$ and \texttt{false} otherwise.

Finally we have the \texttt{ContainsPersistFree} function. 
The traversal performed by this function is the same as in the \texttt{ContainsPersistLast} function.
If link that led to the the terminal node, $curr$, found during  the traversal has its persistence bit set to 1 then we simply return \texttt{true} if $curr$ contains $key$ and \texttt{false} otherwise on \cref{line:physical-del-pf-contains-ret2}.
If the link does not have its persistence bit set we must read the value stored in the old field of predecessor of $curr$, namely $p.old$.
Unfortunately, there is no way to perform an atomic read of $p.next$ and $p.old$ without using locking or performing a DWCAS.
Both of these options would be expensive.
Fortunately we can avoid this by exploiting the fact that $p.old$ and $p.next$ are atomically updated together.
We must ensure that the value we read for $p.old$ corresponds to the value that we last read for $p.next$. Recall that in this scenario, the \texttt{contains} last read that $p.next$ points to $curr$ and does not have its persistence bit set.
Since we can only atomically read one of $p.old$ and $p.next$ we must reread these fields to ensure that neither field has changed.
This requires reading both $p.old$ and $p.next$ twice then comparing the values.
Lines \ref{line:physical-del-pf-contains-reads-start} through \ref{line:physical-del-pf-contains-reads-end} show these reads and the associated comparisons.
If the first read does not match the second read or if $p.old$ is \texttt{null} then there was a time during the execution of the \texttt{ContainsPersistFree} that the value we read for $p.next$ was persistent so we return \texttt{true} if $curr$ contains $key$ and \texttt{false} otherwise on \cref{line:physical-del-pf-contains-ret2}.
If the values for both reads of $p.old$ and $p.next$ match and $p.old$ is not \texttt{null} then we know that the value we read for $p.old$ corresponds to the value we read for $p.next$ and that $p.next$ is not persistent.
In this case, we check if $p.old$ was iflagged. 
If $p.old$ was iflagged then $p.old$ was set by a concurrent \texttt{Insert} so we can return \texttt{false}.
If $p.old$ was not iflagged we know that $p.next$ is not persistent because of a concurrent remove operation and we know that $p.old$ points to the last persistent value stored in $p.next$.
This also means that there was a time during the execution of the \texttt{ContainsPersistFree} where $curr$ was pointed to by a persistent link so if $curr$ contains $key$ then we can return \texttt{true}.
If $curr$ does not contain $key$ then we return \texttt{true} if the node pointed to by $p.old$ old contains $key$ and \texttt{false} otherwise.

\subsection{Persistent List Correcntess}
\label{appendix:correctenss}
In this section we will provide a proof sketch arguing that the DWCAS implementation of our \physdelList~list is durable linearizable. 
Proving that our \physdelList~list is linearizable and lock-free is very similar to the correctness proof of \cite{fomitchev2004lock}.
It is fairly straightforward to prove lock-freedom and linearizability of the \physdelList~list since the volatile synchronization utilized in the \physdelList~list can be thought of as a simplified version of the linked list in \cite{fomitchev2004lock} which is linearizable and lock-free.
The proof of the DWCAS Harris style list is a natural extension of this proof and the original Harris list correctness proof. 

\myparagraph{Linearization Points}
We will begin by describing how we choose the linearization points for every operation in the \physdelList~list. 
These linearization points are chosen such that updates are linearized at some point after the CPE of the operation.

\begin{definition}[Durable]
At configuration $c$, a pointer $\rho$ is considered \textit{durable} if the durability bit in $\rho$ is set to 1 at $c$.

A node $n$, is considered \textit{durable} if the pointer into $n$ is durable at $c$. 

We say that $n$ is \textit{durably linked} in the volatile data structure if $n$ is durable and $n$ is reachable via a traversal starting from the root at $c$.
\end{definition}

\begin{definition}[Volatile Data Structure]
The \textit{volatile data structure at a configuration $c$}, is the set of nodes that are reachable by a traversal starting from the head.
\end{definition}

\begin{definition}[Volatile Abstract Set]
The \textit{volatile abstract set at a configuration $c$} is the set of keys in nodes in the volatile data structure, at configuration $c$. 
\end{definition}

\begin{definition}[Persistent Data Structure]
Consider an execution $E$ of the \physdelList~list.
Let $c$ be the configuration after the last event in $E$.
Let $E'$ be the crash-recovery extension of $E$.
Let $c'$ be the configuration after the last event in $E'$.
The \textit{persistent data structure at $c$} is equivalent to the volatile data structure at $c'$.
\end{definition}

\begin{definition}[Persistent Abstract Set]
Consider an execution $E$ of the \physdelList~list. 
Let $c$ be the configuration after the last event in $E$.
Let $E'$ be the crash-recovery extension of $E$.
Let $c'$ be the configuration after the last event in $E'$.
The \textit{persistent abstract set at $c$} is the equivalent to the volatile abstract set at $c'$.
\end{definition}

\myparagraph{Insert}
Consider an \texttt{Insert} $\pi$ invoked by process $p$ where the key provided as an argument to $\pi$ is $k$.
Case 1: $\pi$ returns \texttt{false}.
If $\pi$ returns \texttt{false} then there must be a configuration $c$ that exists during the execution of $\pi$ where $k$ is in the persistent abstract set at $c$ and we linearize $\pi$ at a time corresponding to when $k$ is in the persistent abstract set at $c$.
To prove that such a time exists, assume that $k$ is never in the persistent abstract set at any configuration that exists during the execution of $\pi$.
If this were true either $k$ is not in the volatile data abstract set at any configuration that exists between the invocation and response of $\pi$ or $k$ is contained in a node that is in the volatile data structure but not durable at any configuration that exits exists between the invocation and response of $\pi$.
Since the \texttt{insert} returned false $curr.key$ was equal to $k$.
This means that $k$ must have been in the volatile abstract set since we found it via a traversal starting from the head.
The \texttt{Find} invoked by $\pi$ which returned $curr$ guarantees that $curr$ is durable.
This means that there is a configuration that exists between the invocation and response of $\pi$ where $k$ is in the persistent abstract set.
Case 2: $\pi$ returns \texttt{true}.
If $\pi$ returns \texttt{true} then the DWCAS on \cref{line:insert-dwcas} succeeded inserting the node $newNode$ by updating $p.next$.
In this case we linearize $\pi$ at the first successful DWCAS that sets the persistence bit in $p.next$ while $p.next$ points to $newNode$.
This DWCAS must exist at some point before the response of $\pi$ since the DWCAS performed by the \textit{Persist} invoked on \cref{line:insert-persist} can fail if and only if some process other than $p$ successfully performed an identical DWCAS setting the persistence bit in $p.next$.

\myparagraph{Remove}
Consider a \texttt{Remove} $\pi$ invoked by process $p$ where the key provided as an argument to $\pi$ is $k$.
Case 1: $\pi$ returns \texttt{false}.
If $\pi$ returns \texttt{false} then there must be a configuration $c$ that exists during the execution of $\pi$ where $k$ is not in the persistent abstract set at $c$ and we linearize $\pi$ at the time corresponding to when $k$ is not in the persistent abstract set at $c$.
To prove that such a time exists, assume that $k$ is always in the persistent abstract set at any configuration that exists during the execution of $\pi$.
If this were true the node $n$ containing $k$ must always be in the persistent data structure at every configuration that exists between the invocation and response of $\pi$.
If $n$ is in the persistent data structure then $n$ is also in the volatile data structure.
This means that the traversal performed by the \texttt{Find} invoked by $\pi$ will end at $n$ meaning it will be returned as $curr$.
This means that the check on \cref{line:remove-curr-key-check} will fail which is impossible if $\pi$ returns \texttt{false}. 
Case 2: $\pi$ returns \texttt{true}.
If $\pi$ returns \texttt{true} then the DWCAS on \cref{line:helpMarked-dwcas} of the HelpMarked function succeeded in physically deleting the node $nodeToDel$ by updating $parent.next$ to point to the successor of $nodeToDel$.
In this case we linearize $\pi$ at the first successful DWCAS that sets the persistence bit in $parent.next$ while $parent.next$ points to the successor of $nodeToDel$.
This DWCAS must exist at some point before the response of $\pi$ since the DWCAS performed by the \textit{Persist} invoked on \cref{line:helpMarked-persist}  can fail if and only if some process other than $p$ successfully performed an identical DWCAS setting the persistence bit in $p.next$.

\myparagraph{Contains that Persist}
\\Consider a \texttt{ContainsPersistLast} $\pi$ invoked by process $p$ where the key provided as an argument to $\pi$ is $k$.
Case 1: $\pi$ returns \texttt{true}.
If $\pi$ returns \texttt{true} then there must be a configuration $c$ that exists between the invocation and response of $\pi$ where $k$ is in the persistent abstract set at $c$ and we linearize at the time corresponding to when $k$ is in the persistent abstract set at $c$.
The proof is the same as an \texttt{Insert} that returns \texttt{false}.
Case 2: $\pi$ returns \texttt{false}.
If $\pi$ returns \texttt{false} then there must be a configuration $c$ that exists between the invocation and response of $\pi$ where $k$ is not in the persistent abstract set at $c$ and we linearize at the time corresponding to when $k$ is not in the persistent abstract set at $c$.
The proof is the same as an \texttt{Remove} that returns \texttt{false}.

The linearization points and proofs for the other versions of the \texttt{Contains} function that perform persistence events follow the same structure. 

\myparagraph{Contains PersistFree}
\\Consider a \texttt{ContainsPersistFree} $\pi$ invoked by process $p$ where the key provided as an argument to $\pi$ is $k$.
Case 1: $\pi$ returns \texttt{true} at \cref{line:physical-del-pf-contains-ret1}. 
In this case the node $curr$ was found via a traversal starting from the head, the pointer into $curr$ was durable and $curr$ does contain $k$.
This means that there is a time when $k$ was in the persistent abstract set and we linearize at that time.

Case 2: $\pi$ returns \texttt{false} at \cref{line:physical-del-pf-contains-ret1}. 
In this case the node $curr$ was found via a traversal starting from the head and the pointer into $curr$ was durable and $curr$ does not contain $k$.
This means that there is a time when $k$ was not in the persistent abstract set and we linearize at that time.

Case 3: $\pi$ returns \texttt{true} at \cref{line:physical-del-pf-contains-ret2}. 
In this case the node $curr$ was found via a traversal starting from the head and it contains $k$, however, when we first traversed the pointer into $curr$ it was not durable.
After rereading, we found that either the pointer into $curr$ or the old field in the predecessor of $curr$ has changed. 
Apply the same structure as in the proof of \cref{theorem:persist} and note that before the DWCAS the \texttt{Persist} function performs a psync. 
This means that if either of these fields changed then some process must have written $curr$ to persistent memory.
This means that there is a time when $k$ was in the persistent abstract set and we linearize at that time.

Case 4: $\pi$ returns \texttt{false} at \cref{line:physical-del-pf-contains-ret2}. 
This is the same as case 3 except that $curr$ does not contain $k$. This means that there is a time during the execution of $\pi$ when $k$ is not in the persistent abstract set.

Case 5: $\pi$ returns \texttt{false} at \cref{line:physical-del-pf-contains-ret3}.
In this case we have not established the existence of a time where $curr$ is durable and the old field in the predecessor of $curr$ is iflagged.
This means that $curr$ is being inserted by a concurrent operation.
We linearize $\pi$ at any point after its invocation and before the CPE of the concurrent insert.

Case 6: $\pi$ returns \texttt{true} at \cref{line:physical-del-pf-contains-ret4}.
In this case we have not established the existence of a time where $curr$ is durable but the old field in the predecessor of $curr$ is not iflagged and $curr$ contains $k$.
This means that $curr$ is not durable because a concurrent remove has physically deleted the last predecessor of $curr$.
Applying the same argument as in case 3, we establish a time during the execution of $\pi$ where $curr$ is in the persistent data structure meaning $k$ is in the persistent abstract set.

Case 8: $\pi$ returns \texttt{true} at \cref{line:physical-del-pf-contains-ret5}.
In this case we have not established the existence of a time where $curr$ is durable but the old field in the predecessor of $curr$ is not iflagged.
This means that $curr$ is not durable because of an incomplete remove operation. 
Let $p$ be the predecessor of $curr$. 
By \cref{theorem:dur-bit} the node $old$ pointed to by $p.old$ was the last durable value in $p.next$.
If $\pi$ returned \texttt{true} then $p.old$ contained $k$.
Since $p.old$ was not \textit{null} and $p.old$ contains $k$ then there is a time during the execution of $\pi$ when $k$ is in the persistent abstract set and we linearize $\pi$ at that time.

Case 9: $\pi$ returns \texttt{false} at \cref{line:physical-del-pf-contains-ret5}.
This is the same as case 8 except $old$ does not contain $k$.
This means that there is a time during the execution of $\pi$ when $k$ is not in the persistent abstract set and we linearize $\pi$ at that time.

We prove that the \physdelList~list maintains several invariants.
First, we define some necessary terminology used in the proofs.



\begin{definition}[Dflagged]
A node $n$ is considered \textit{dflagged} at configuration $c$ if the \textit{dflag} bit in $n.next$ is set to 1 and $n$ is reachable via a traversal starting from the head in $c$.
\end{definition}

\begin{definition}[Logically Deleted]
A node $n$ is considered \textit{logically deleted} at configuration $c$ if the marked bit in $n.next$ is set to 1 and $n$ is reachable via a traversal starting from the root in $c$.
\end{definition}

\begin{definition}[Physically Deleted]
A node $n$ is considered \textit{physically deleted} at configuration $c$ if $n$ cannot be reached by a traversal starting from the root.
\end{definition}


\begin{definition}[Consistency]
The \physdelList~list is consistent with some persistent abstract set $P$ if $\forall k \in P$, $k$ is in a durable node in the volatile data structure.
\end{definition}

\begin{invariant}{}
A node is never both logically deleted and dflagged.
\label[invariant]{invar:reserved-bits}
\end{invariant}

\begin{invariant}{}
Once a node is marked its next pointer never changes. 
\label[invariant]{invar:marked}
\end{invariant}

\begin{invariant}{}
If the node $n$ is in the volatile data structure at configuration $c$ and $n$ is logically deleted in $c$, then the predecessor of $n$ is dflagged and the successor of $n$ is not logically deleted in $c$.
\label[invariant]{invar:2-step-remove}
\end{invariant}

\begin{invariant}
In a configuration $c$, if $n.next$ is not durable then the $n.old$ is the last durable pointer stored in $n.next$ 
\label[invariant]{invar:dur-bit}
\end{invariant}



\begin{invariant}{}
Consider an execution $E$ of the \physdelList~list.
Let $c$ be the configuration after the last event in $E$.
Let $P_c$ denote the persistent abstract set at $c$ and $V_c$ denote the volatile abstract set at $c$.
$P_t \setminus V_t$ is a subset of the keys that were part of remove operations that have no response in $E$ and $V_t \setminus P_t$ is a subset of the keys that were part of insert operations that have no response in $E$. 
\label[invariant]{invar:persistent-set-2}
\end{invariant}

\begin{theorem}
\Cref{invar:reserved-bits} always holds for any configuration produced by an execution of the \physdelList~list.
\label{theorem:reserved-bits}
\end{theorem}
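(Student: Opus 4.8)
The plan is to prove a slightly stronger statement by induction on the number of steps of the execution: in every configuration produced by an execution of the \physdelList~list, the next pointer of \emph{every} node (reachable or not) has at most one of its \emph{marked} and \emph{dflag} bits set. This stronger claim is easier to maintain inductively because it avoids any reasoning about reachability, and it immediately implies \cref{invar:reserved-bits}, which only constrains nodes reachable from the head. For the base case, the initial configuration contains only the \texttt{head} and \texttt{tail} nodes with $head.next$ pointing to $tail$; since every next pointer is created clean and \texttt{CreateNode} likewise produces a node whose next pointer carries only the persistence bit, neither the \emph{marked} nor the \emph{dflag} bit is set anywhere, so the claim holds.

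For the inductive step I would first observe that the only steps able to alter the \emph{marked} or \emph{dflag} bit of some $n.next$ are the writes performed by the (Val)DWCAS instructions in the update and helper routines, so it suffices to examine each such site. The key lever in every case is that a DWCAS succeeds only when the current value of $n.next$ exactly matches the supplied expected value, so the expected value controls which configurations a write may act on. Concretely: the \texttt{Remove} DWCAS on \cref{line:remove-dwcas} uses the clean expected pointer $\langle curr, null\rangle$ and installs $dflagCurr = \func{MarkDflag}(curr)$, so it fires only when both bits are clear and sets \emph{only} the \emph{dflag} bit; the \texttt{HelpRemove} ValDWCAS on \cref{line:helpRemove-mark-dwcas} uses expected value $durSucc = \func{MarkDurable}(\func{UnmarkPtr}(succ))$, which carries neither conflicting bit, and installs $markedSucc = \func{MarkDel}(durSucc)$, so it fires only when both bits are clear and sets \emph{only} the \emph{marked} bit; the \texttt{HelpMarked} DWCAS on \cref{line:helpMarked-dwcas} replaces a dflagged pointer by an unmarked successor, clearing \emph{dflag} and setting neither bit; the \texttt{Insert} DWCAS on \cref{line:insert-dwcas} installs a pointer to a freshly created clean node; and each \texttt{Persist} DWCAS merely adds the persistence bit to its expected value $expNext$, preserving the \emph{marked}/\emph{dflag} content of $expNext$. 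In every case, applying the inductive hypothesis to the matched expected value shows the post-step pointer again has at most one of the two bits set.

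The routine-but-essential obstacle is exhaustiveness: I must confirm that I have catalogued every write to a next pointer (across \texttt{Find}, \texttt{Insert}, \texttt{Remove}, \texttt{HelpUpdate}, \texttt{HelpRemove}, \texttt{HelpMarked}, \texttt{Persist}, and all four \texttt{Contains} variants, the last of which are either read-only or delegate to \texttt{Persist}), and that the bit-masking helpers \func{MarkDflag}, \func{MarkDel}, \func{MarkDurable}, and \func{UnmarkPtr} affect exactly the bits their names suggest while leaving the persistence bit orthogonal to the \emph{marked}/\emph{dflag} pair. Once that bookkeeping is pinned down each case reduces to a one-line check against the expected value, and restricting the stronger claim to nodes reachable from the head yields \cref{invar:reserved-bits}.
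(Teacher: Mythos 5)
Your proposal is correct and follows essentially the same route as the paper's proof: both argue from the clean initial/freshly-created state and then exhaustively check every DWCAS site that can write a next pointer, concluding that none installs a value with both the \emph{marked} and \emph{dflag} bits set. Your version is somewhat more explicit about the induction and about how the DWCAS expected values constrain which configurations each write can act on, but this is a refinement of detail rather than a different argument.
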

\begin{proof}
The invariant is trivially true for an empty list.
When a new node is created the next pointer in the node is not marked or dflagged.
This cannot change until the node is linked in the list such that it is reachable via a traversal from the head.
The next pointer of any node is atomically updated by the DWCAS on \cref{line:physical-del-persist-dwcas} of the \texttt{Persist} function, \cref{line:insert-dwcas} of the \texttt{Insert} function, \cref{line:remove-dwcas} of the \textit{Remove} function, \cref{line:helpMarked-dwcas} of the \textit{HelpMarked} function, \cref{line:helpRemove-mark-dwcas} of the \textit{HelpRemove} function and \cref{line:physical-del-persist-dwcas} of the \textit{Persist} function.
None of these DWCAS will update a node to be both marked and dflagged.
\end{proof}

\begin{theorem}
\cref{invar:marked} always holds for any configuration produced by an execution of the \physdelList~list.
\end{theorem}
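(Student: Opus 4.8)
The plan is to enumerate every instruction that can write to a node's \textit{next} field and to show that, once its marked bit is set, none of them can subsequently succeed; this mirrors the classical argument for Harris-style lists but must additionally account for the durability bit. Writes to a \textit{next} field occur only through the (Val)DWCAS instructions on \cref{line:physical-del-persist-dwcas} (in \texttt{Persist}), \cref{line:insert-dwcas} (in \texttt{Insert}), \cref{line:remove-dwcas} (in \texttt{Remove}), \cref{line:helpMarked-dwcas} (in \texttt{HelpMarked}), and \cref{line:helpRemove-mark-dwcas} (in \texttt{HelpRemove}); background and explicit flushes leave the volatile configuration unchanged by assumption, so they need not be considered. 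First I would establish an auxiliary fact that does most of the work: the marked bit and the durability bit are always set together. The only instruction that sets the marked bit is the ValDWCAS on \cref{line:helpRemove-mark-dwcas}, whose new value $markedSucc = \func{MarkDel}(durSucc)$ is built from $durSucc = \func{MarkDurable}(\func{UnmarkPtr}(succ))$, so any marked \textit{next} pointer is simultaneously durable.

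Next I would dispatch four of the five instructions directly. The DWCAS on \cref{line:insert-dwcas} and on \cref{line:remove-dwcas} use expected values $\langle durCurr, null\rangle$ and $\langle curr, null\rangle$ respectively, both \emph{unmarked}, so neither can match a node whose \textit{next} is already marked and both must fail. The DWCAS on \cref{line:helpMarked-dwcas} writes the \emph{parent}'s \textit{next} field, and its expected value is dflagged; by \cref{invar:reserved-bits} a dflagged node is never marked, so this step never writes a marked node's \textit{next}. Finally, the ValDWCAS on \cref{line:helpRemove-mark-dwcas} is the marking step itself: its expected value $\langle durSucc, null\rangle$ is unmarked, so it succeeds only on a not-yet-marked node, making it the unique transition \emph{into} the marked state rather than a modification after it.

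The remaining and most delicate case is the \texttt{Persist} DWCAS on \cref{line:physical-del-persist-dwcas}, since it is the step that manipulates durability bits and is reached from several call sites. I would argue that in every call the argument $expNext$ is unmarked, so its expected value $\langle expNext, old\rangle$ cannot match an already-marked pointer. For the calls from \texttt{Insert} (\cref{line:insert-persist}) and \texttt{HelpMarked} (\cref{line:helpMarked-persist}) this is immediate, as $expNext$ is the freshly-created $newNode$ or the explicitly unmarked $succ$. For the calls where $expNext$ is the node's own current successor---the \texttt{HelpRemove} call and the two \texttt{Find} calls on \cref{line:search-dur-bit-gp} and \cref{line:search-dur-bit-parent}---each is guarded by $\neg\func{IsDurable}(\cdot)$, so it runs only when that pointer is not durable; by the marked-implies-durable fact, a non-durable pointer is unmarked. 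Hence \texttt{Persist} too can never alter a marked \textit{next} field.

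I expect the main obstacle to be precisely this last case: ruling out that setting a durability bit could ever touch a \textit{next} field that is already marked. The resolution hinges on the auxiliary fact that marking and persisting are performed by a single atomic DWCAS, so a marked pointer is already durable and the durability-setting code path is never entered for it. Combining the five cases, no successful write to a marked node's \textit{next} field is possible in any reachable configuration, which establishes \cref{invar:marked}.
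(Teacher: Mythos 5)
Your proof is correct and follows essentially the same approach as the paper's, which simply observes that the expected next-pointer value of every DWCAS is explicitly constructed to be unmarked. Your treatment of the \texttt{Persist} case via the auxiliary ``marked implies durable'' fact is a welcome extra level of care that the paper's two-sentence argument glosses over.
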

\begin{proof}
No DWCAS will ever modify a marked pointer.
The expected value of every DWCAS in the form $\langle n, o\rangle$ is always explicitly defined such that $n$ is unmarked.
\end{proof}

\begin{theorem}
\label{theorem:dur-bit}
\Cref{invar:dur-bit} always holds for any configuration produced by an execution of the \physdelList~list. 
\end{theorem}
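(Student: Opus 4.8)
The plan is to prove \Cref{invar:dur-bit} by induction on the length of the execution, exploiting the fact that in the \physdelList~list every modification to a node's $next$ and $old$ fields is performed by a single DWCAS that updates \emph{both} fields atomically. First I would observe that durability is entirely a function of the software persistence bit, which is never altered by a background flush; hence the only events that can affect the invariant are the successful DWCAS operations on \cref{line:physical-del-persist-dwcas} (\func{Persist}), \cref{line:insert-dwcas} (\func{Insert}), \cref{line:remove-dwcas} (\func{Remove}), \cref{line:helpMarked-dwcas} (\func{HelpMarked}) and \cref{line:helpRemove-mark-dwcas} (\func{HelpRemove}). A failed DWCAS leaves the configuration unchanged, so it suffices to show that each of these five, when it succeeds, carries a configuration satisfying the invariant to another one satisfying it. For the base case, the initial list has $head.next$ durable and every $old$ equal to \texttt{null}, and \func{CreateNode} produces a node with durable $next$ and $old=\texttt{null}$; in each case the hypothesis ``$n.next$ is not durable'' is false, so the invariant holds vacuously.

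For the inductive step I would split the five DWCAS according to the durability of $next$ in the resulting configuration. In \func{Persist} the new value is $\langle durNext, \texttt{null}\rangle$, in \func{Remove} it is $\langle dflagCurr, \texttt{null}\rangle$, and in \func{HelpRemove} it is $\langle markedSucc, \texttt{null}\rangle$; since dflagging and marking only toggle reserved bits and never clear the persistence bit, $next$ remains durable, so the hypothesis of the invariant is again false and there is nothing to prove. The only two operations that leave $next$ non-durable are \func{Insert} and \func{HelpMarked}, and these are the crux. For a successful \func{Insert}, the expected value $\langle durCurr, \texttt{null}\rangle$ forces $p.next$ to have been the durable pointer to $curr$ with $p.old=\texttt{null}$ just before; afterwards $p.next$ is the non-durable $newNode$ and $p.old=iflagCurr$, which points to exactly the node $curr$ that was the last durable value of $p.next$ (with the iflag recording that the non-durable state is due to an insert). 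For a successful \func{HelpMarked}, the expected value $\langle expNext, \texttt{null}\rangle$ has $expNext$ the dflagged durable pointer to $nodeToDel$, and afterwards $parent.old=expNext$, again precisely the last durable value of $parent.next$.

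The hard part will not be any single case but pinning down the claim that $old$ remains the last durable value of $next$ throughout the whole window during which $next$ stays non-durable. I would argue this from the atomic coupling of the two fields: every DWCAS other than \func{Persist} has expected $old=\texttt{null}$, so none of them can succeed while the non-null $old$ just installed by \func{Insert} or \func{HelpMarked} is present; and the only DWCAS that \emph{can} succeed in that state is the \func{Persist} of \cref{line:physical-del-persist-dwcas} (invoked directly on \cref{line:insert-persist} or indirectly through \func{Find}, whose \func{Persist} re-reads the current $old$ when passed \texttt{null}), which simultaneously restores the persistence bit and resets $old$ to \texttt{null}, returning to the vacuous case. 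I would also invoke \Cref{invar:marked} to confirm that a marked $next$ is never the target of a further $next$-changing DWCAS, so the durable/marked configurations produced by \func{HelpRemove} are stable. The remaining point demanding care is purely bookkeeping: ``the last durable pointer stored in $n.next$'' must be read up to the reserved iflag and dflag bits, since \func{Insert} stores $iflagCurr$ rather than the bare durable pointer to $curr$; keeping the statement and the case analysis consistent on this point is the only genuinely delicate step.
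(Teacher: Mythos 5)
Your proof is correct and rests on the same observation as the paper's: the only DWCAS events that install a non-\texttt{null} \textit{old} field are those on \cref{line:insert-dwcas} and \cref{line:helpMarked-dwcas}, and in each the new \textit{old} value coincides (up to reserved bits) with the expected, durable \textit{next} value. Your version is in fact more complete than the paper's three-sentence argument, since you also make explicit the preservation step (all other DWCAS expect $old=\texttt{null}$, so only \func{Persist} can fire during the non-durable window, and it restores the vacuous case) and you correctly flag that \func{Insert} stores $iflagCurr$ rather than literally $durCurr$, a discrepancy the paper's phrasing ``$x_2$ is explicitly defined to be equal to $e_1$'' glosses over.
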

\begin{proof}
The old field in any node $n$ is updated to a non-\texttt{null} value by the DWCAS on \cref{line:insert-dwcas} of the \texttt{Insert} function or the DWCAS on \cref{line:helpMarked-dwcas} of the \texttt{HelpMarked} function.
Let $\langle e_1, \texttt{null}\rangle$ be the expected value of the DWCAS that updated $n.old$ to a non-\texttt{null} value and $\langle x_1, x_2\rangle$ be the new value. 
In both cases the $x_2$ is explicitly defined to be equal to $e_1$ and $e_1$ is explicitly defined to be a durable pointer.
\end{proof}

\begin{theorem}
\Cref{invar:2-step-remove} always holds for any configuration produced by an execution of the \physdelList~list. 
\end{theorem}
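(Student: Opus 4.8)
The plan is to prove \cref{invar:2-step-remove} by induction on the number of events in the execution, taking as the core inductive hypothesis only the \emph{predecessor} half of the invariant: every reachable, logically deleted node has a dflagged predecessor. I would first dispatch the \emph{successor} half as an immediate corollary of this hypothesis together with \cref{invar:reserved-bits}. Indeed, if some reachable logically deleted node $n$ had a reachable logically deleted successor $s$, then, since marking does not change a pointer's target, $n$ is the predecessor of $s$ in the volatile data structure; applying the predecessor half to $s$ forces $n$ to be dflagged, and because $n$ is also logically deleted this contradicts \cref{invar:reserved-bits}. Thus it suffices to show the predecessor half is preserved by every step, and the full invariant follows.

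For the induction, the base case is the empty list, which contains no logically deleted node and so satisfies the invariant vacuously. In the inductive step I would enumerate the DWCAS steps that can alter a next pointer — those on \cref{line:remove-dwcas}, \cref{line:insert-dwcas}, \cref{line:helpMarked-dwcas}, \cref{line:helpRemove-mark-dwcas}, and the \texttt{Persist} DWCAS on \cref{line:physical-del-persist-dwcas} — and argue none leaves a reachable logically deleted node with a non-dflagged predecessor. The \texttt{Persist} DWCAS only sets the persistence bit and reverts the old field, preserving the marked and dflag bits and all reachability, so it is irrelevant. The \texttt{Insert} and \texttt{Remove} steps both expect next-pointer value $\langle curr, \texttt{null}\rangle$ with $curr$ a clean pointer, so their predecessor is not dflagged; by the inductive hypothesis their target $curr$ is therefore not logically deleted, which means neither step marks a node nor detaches a logically deleted one, while \texttt{Remove} only \emph{adds} a dflag. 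Hence the only step that can create a new logically deleted node is the marking ValDWCAS on \cref{line:helpRemove-mark-dwcas}.

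The crux, and the step I expect to be the main obstacle, is showing the predecessor is genuinely dflagged at the \emph{instant} the marking ValDWCAS takes effect, since the \texttt{HelpRemove} loop guard $parent.next = nodeToDel$ is not atomic with the ValDWCAS. I would argue that the predecessor's dflag bit can be cleared only by a \texttt{HelpMarked} DWCAS (\cref{line:helpMarked-dwcas}), and that this DWCAS is invoked only on an \emph{already} logically deleted node; since a successful marking ValDWCAS requires $nodeToDel.next$ to still equal the unmarked $durSucc$, the node is not yet marked immediately before the ValDWCAS succeeds, so no such \texttt{HelpMarked} can have fired and the predecessor is still dflagged when the mark is installed. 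Finally, for \texttt{HelpMarked} itself I would show it cannot break the invariant despite clearing a dflag: the same DWCAS simultaneously bypasses the marked node (making it unreachable, so it leaves the invariant) and re-points the predecessor at the former successor $succ$; applying the inductive hypothesis and \cref{invar:reserved-bits} to the pre-step configuration shows $succ$ is not logically deleted, because otherwise its predecessor — the marked node being deleted — would have to be dflagged. Combining the marking case with the physical-deletion case preserves the predecessor half, and the corollary from the first paragraph then yields the successor half, completing the proof.
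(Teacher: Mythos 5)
Your proposal is correct and follows essentially the same route as the paper's proof: the marking ValDWCAS in \texttt{HelpRemove} is the only step that logically deletes a node, the loop guard witnesses the dflag on the predecessor, only the \texttt{HelpMarked} DWCAS can clear that dflag and it simultaneously unlinks the marked node, and \cref{invar:reserved-bits} rules out a logically deleted successor. If anything you are more careful than the paper, which glosses over the non-atomicity of the loop guard relative to the ValDWCAS and argues the successor half directly from the failing expected value of the relevant DWCAS rather than, as you do, deriving it as a corollary of the predecessor half.
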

\begin{proof}
If a node $n$ is in the volatile data structure at configuration $c$ then $n$ is reachable via a traversal starting from the head.
This means that the both the successor and predecessor of $n$ are in the volatile data structure at $c$.
$n$ can only become logically deleted by the DWCAS on \cref{line:helpRemove-mark-dwcas} of the \texttt{HelpRemove} function.
If an execution of \texttt{HelpRemove} reached this DWCAS then the condition of the \texttt{while} loop in \texttt{HelpRemove} must have been true.
This means that there was a time during the execution of \texttt{HelpRemove} where the predecessor of $n$ was dflagged.
The only DWCAS that removes the dflag is the DWCAS on \cref{line:helpMarked-dwcas} of the \texttt{HelpMarked} function.
This DWCAS also physically deletes the successor of the dflagged node. 
This means that if the DWCAS succeeds, $n$ will no longer be in the volatile data structure and while $n$ is in the volatile data structure its predecessor is dflagged.
In order for the successor of $n$ to be logically deleted the DWCAS on \cref{line:helpRemove-mark-dwcas} of \texttt{HelpRemove} must succeed. 
The expected value of this DWCAS is explicitly unflagged and unmarked on \cref{line:helpRemove-unmark-succ}.
Since $n$ is logically deleted the DWCAS will never succeed.
\end{proof}

\begin{theorem}
\label{theorem:persist}
The DWCAS on \cref{line:physical-del-persist-dwcas} of the \texttt{Persist} function invoked by process $p$ will fail if and only if some process other than $p$ successfully completes an identical DWCAS.
\label{theorem:persist-failed}
\end{theorem}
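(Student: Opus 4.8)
The plan is to prove both directions of the iff by isolating a single \emph{closure} property: while a node's \texttt{next} field holds the non-durable value $expNext$, the only successful DWCAS that can modify that node is an identical copy of the \texttt{Persist} DWCAS on \cref{line:physical-del-persist-dwcas}; everything else then follows from the atomicity of DWCAS. First I would record the two facts that keep the expected and new tuples distinguishable: the new value $\langle durNext, \texttt{null}\rangle$ carries the persistence bit in its \texttt{next} component (since $durNext = \func{MarkDurable}(expNext)$), whereas the expected value $\langle expNext, old\rangle$ does not, so the two tuples always differ in their \texttt{next} component and in particular no node matches both.

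The core step is the closure lemma, which I would prove by inspecting every DWCAS site in the \physdelList~list. I would argue that the \texttt{Insert} DWCAS (\cref{line:insert-dwcas}), the \texttt{Remove} DWCAS (\cref{line:remove-dwcas}), the \texttt{HelpMarked} DWCAS (\cref{line:helpMarked-dwcas}), and the \texttt{HelpRemove} DWCAS (\cref{line:helpRemove-mark-dwcas}) each carry the persistence bit in the \texttt{next} component of their expected value: these operations either take their expected \texttt{next} from the tuple returned by \texttt{Find} (which guarantees the relevant link is durable) or explicitly apply \func{MarkDurable} when constructing the expected tuple. Hence none of them can match a node whose \texttt{next} equals the non-durable $expNext$, so none can fire while \texttt{next} $= expNext$. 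The only remaining site is a \texttt{Persist} DWCAS; if one matches $\langle expNext, old\rangle$ then its expected tuple is exactly $\langle expNext, old\rangle$, and since its new \texttt{next} is forced to $\func{MarkDurable}(expNext) = durNext$ and its new \texttt{old} to \texttt{null}, it is identical to $p$'s. This establishes the closure property.

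With the lemma in hand both directions are short. For the forward direction, suppose $p$'s DWCAS fails; the node held $\langle expNext, old\rangle$ at the moment the caller last observed \texttt{next} $= expNext$ (recall \texttt{Persist} re-reads \texttt{old} when its argument is \texttt{null}), and by the closure lemma the only event that can move the node off $\langle expNext, old\rangle$ while \texttt{next} is still $expNext$ is an identical \texttt{Persist} DWCAS; since $p$'s own attempt failed, this successful one belonged to another process. For the reverse direction, suppose some $q \neq p$ completes an identical DWCAS, so at $q$'s step the node reads $\langle expNext, old\rangle$ and transitions to $\langle durNext, null\rangle$. Granting that the node never returns to $\langle expNext, old\rangle$, atomicity forces exactly one of the identical DWCAS to succeed, and since $q$'s does, $p$'s fails.

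The main obstacle is the no-revert claim required by the reverse direction: I must rule out an ABA in which the node, after reaching $\langle durNext, null\rangle$, is later driven back to the exact tuple $\langle expNext, old\rangle$, which would let $p$'s DWCAS succeed after $q$'s. I would discharge this using the list's invariants, in particular \cref{invar:dur-bit}: a non-durable \texttt{next} is always shadowed by the correct \texttt{old}, every destructive write to \texttt{next} installs a freshly marked, flagged, or newly allocated pointer distinct from the specific non-durable pointer $expNext$, and an unlinked node is never reinserted, so the precise pair $\langle expNext, old\rangle$ is not recreated before $p$ acts. A secondary but necessary bookkeeping point is pinning down exactly which bits \func{UnmarkPtr} strips (marks and flags, but not the persistence bit), so that the uniform claim that every non-\texttt{Persist} expected \texttt{next} value is durable holds across all four update sites.
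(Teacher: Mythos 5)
Your proof takes essentially the same route as the paper's: both hinge on the single observation that every DWCAS outside \texttt{Persist} has an expected \texttt{next} component that is explicitly durable (and an expected \texttt{old} of \texttt{null}), so none of them can succeed while the link still holds the non-durable value $expNext$, leaving an identical \texttt{Persist}-style DWCAS as the only step that can interfere. Two minor remarks: your site enumeration omits the DWCAS in \texttt{ContainsAsynchPersistAll}, which the paper disposes of by noting that it is itself identical to $d$; and you are more careful than the paper about the reverse direction and the ABA/no-revert issue, both of which the paper's proof leaves implicit.
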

\begin{proof}
Consider an execution of the \texttt{Persist} function by process $p$.
Let $n$ be the node such that the destination field of the DWCAS on \cref{line:physical-del-persist-dwcas} of the \texttt{Persist} by process $p$ is $\langle n.next, n.old\rangle$.
Call this DWCAS $d$.
The expected value of $d$ is $\langle x, o\rangle$ where $x$ is a non-durable pointer.
$d$ sets the persistence bit in $n.next$ and reverts $n.old$ to \texttt{null}.
Assume $p$ fails $d$.
This means that the actual value of $n.next$ is either $x'$ where $x'$ is equivalent to $x$ with the persistence bit set or some other value $y \neq x$.

In this case some other process $p'$ must have concurrently updated $n$ when $n.next$ was $x$.
This means that $p'$ must have performed an update involving $n$ or an asynchronous-persist-all \texttt{Contains}.

If $p'$ successfully completed the the DWCAS in asynchronous persist-all \texttt{Contains} while $n.next$ was $x$ then it is easy to see that this DWCAS is identical to $d$.
The DWCAS in every other function (excluding \texttt{Persist}) has an expected value that is explicitly defined to be a durable next pointer and a \texttt{null} old pointer.
This means that these DWCAS will always fail if the $n.next$ is not durable. 
Since $x$ is not durable, any process that attempts a DWCAS on $n$ in any of the update or helper functions will fail. 
In each case, if the process fails the DWCAS it must retry the operation.
This will eventually lead to the process invoking \texttt{Persist} with the first two arguments being $n$ and $x$.
Thus one of these processes will successfully complete a DWCAS identical to $d$.
\end{proof}

%


\begin{theorem}
Update operations of the \physdelList~list change the volatile abstract set exactly once.
\label{theorem:vol-set-once}
\end{theorem}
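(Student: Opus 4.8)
The plan is to classify every event that can alter the \emph{volatile abstract set} and then show that, for each successful update, exactly one such event is attributable to it. Since the volatile abstract set at a configuration is determined entirely by which keys sit in nodes reachable from \texttt{head} via next pointers, and \texttt{head} is fixed, the set can change only when some node's next pointer is modified in a way that changes reachability. Every next-pointer modification in the \physdelList~list is performed by one of the DWCAS steps on \cref{line:insert-dwcas}, \cref{line:remove-dwcas}, \cref{line:helpRemove-mark-dwcas}, \cref{line:helpMarked-dwcas}, or \cref{line:physical-del-persist-dwcas}. First I would dispense with the three that never change reachability: the \texttt{Persist} DWCAS (\cref{line:physical-del-persist-dwcas}) only sets the persistence bit and reverts the \emph{old} field, the dflag DWCAS (\cref{line:remove-dwcas}) only sets the dflag bit, and the marking DWCAS (\cref{line:helpRemove-mark-dwcas}) only sets the mark bit; in all three cases \func{UnmarkPtr} of the pointer is unchanged, so the target node is the same and, crucially, a logically deleted (marked) node is still reachable and hence remains in the volatile data structure. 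This leaves only the linking DWCAS (\cref{line:insert-dwcas}), which splices a fresh node carrying key $k$ into the reachable chain, and the physical-deletion DWCAS (\cref{line:helpMarked-dwcas}), which unlinks the target node from the reachable chain, as the sole events that add or remove a key.

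Next I would set up the \emph{attribution}: associate each successful linking DWCAS with the \texttt{Insert} that allocated the new node, and each successful physical-deletion DWCAS with the \texttt{Remove} that dflagged the predecessor of the unlinked node (equivalently, the operation whose key resides in that node). This handles the helping mechanism cleanly, since a step a process executes on behalf of another operation is attributed to that operation rather than to the helper. For a successful \texttt{Insert}($k$) I would argue \emph{at least once} from the control flow (the operation returns \texttt{true} only in the iteration in which its own DWCAS on \cref{line:insert-dwcas} succeeds) and \emph{at most once} from the fact that it returns immediately afterwards, so the only further step it takes, the \texttt{Persist} on \cref{line:insert-persist}, leaves the set unchanged. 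For a successful \texttt{Remove}($k$), \emph{at least once} follows because the operation returns \texttt{true} only after the physical deletion of its node has been completed (by it or a helper) via \cref{line:helpMarked-dwcas}, and \emph{at most once} follows from DWCAS semantics together with \Cref{invar:2-step-remove} and \Cref{invar:marked}: once \textit{parent.next} no longer points to the dflagged node, every subsequent identical DWCAS fails, so the node is unlinked by exactly one successful DWCAS. I would also note that an unsuccessful update performs neither a linking nor a physical-deletion DWCAS for its own key, so it contributes no change.

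The main obstacle I expect is the remove case, for two reasons. The first is the counterintuitive placement of the change: because the volatile data structure is defined by reachability rather than by the mark bit, the key leaves the set at \emph{physical} deletion (\cref{line:helpMarked-dwcas}), not at logical deletion (\cref{line:helpRemove-mark-dwcas}); I must justify carefully, using the definitions of the volatile data structure and of a logically deleted node, that the intervening dflag and mark steps genuinely preserve reachability of the node being removed. The second is ensuring the ``at most once'' guarantee survives concurrent helping: I would lean on \Cref{invar:2-step-remove} to guarantee that only the designated predecessor is ever dflagged for a given node, and on the value-returning DWCAS semantics to guarantee that the helping chain in \texttt{HelpRemove} and \texttt{HelpMarked} produces a unique successful unlink. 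A supporting fact I would need from the underlying Fomitchev--Ruppert structure is that a clean, reachable node stays reachable until it is itself unlinked, which I would cite from the linearizability argument for the volatile list; this guarantees that the predecessor $p$ used by the linking DWCAS is reachable at the moment it succeeds, so the added key truly enters the volatile abstract set.
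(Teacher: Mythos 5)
Your proposal takes essentially the same approach as the paper's proof: both identify the linking DWCAS in \texttt{Insert} (\cref{line:insert-dwcas}) and the physical-deletion DWCAS in \texttt{HelpMarked} (\cref{line:helpMarked-dwcas}) as the only atomic steps that change the volatile abstract set, with unsuccessful updates contributing no change. Your write-up simply makes explicit the classification of the remaining DWCAS steps, the attribution under helping, and the at-most-once argument, all of which the paper's much terser proof leaves implicit.
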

\begin{proof}
We can identify each of the atomic instructions that update the volatile abstract set.
Note that unsuccessful update operations do not change either the volatile data structure nor the volatile abstract set.
For insert operations the volatile abstract set is updated at the success of the DWCAS on \cref{line:insert-dwcas} of the \texttt{Insert} function.
For remove operations the volatile abstract set is updated at the success of the DWCAS on \cref{line:helpMarked-dwcas} of the \texttt{HelpMarked} function.
These atomic instructions also correspond to the points at which the volatile data structure is updated.
\end{proof}

\begin{theorem}
Update operations of the \physdelList~list change the volatile abstract set before changing the persistent abstract set.
\label{theorem:volitile-before-persist}
\end{theorem}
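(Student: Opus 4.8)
The plan is to match, for each successful update, the single event at which it alters the volatile abstract set against the single event at which it alters the persistent abstract set, and to show the former strictly precedes the latter. \Cref{theorem:vol-set-once} already pins down the first event: a successful \texttt{Insert} changes the volatile abstract set at the DWCAS on \cref{line:insert-dwcas}, and a successful \texttt{Remove} changes it at the DWCAS on \cref{line:helpMarked-dwcas}. Unsuccessful updates change neither set, so there is nothing to prove for them, and the crux is therefore to characterize when the persistent abstract set changes and to show that this happens only after the corresponding structural DWCAS.

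First I would argue that the persistent abstract set of a successful update changes exactly at its CPE. Unwinding the definitions, the persistent abstract set at a configuration $c$ is the volatile abstract set reached by the crash-recovery extension from $c$, and by the definition of the CPE an identical operation $\pi'$ invoked after that recovery flips its response precisely across the CPE. For an \texttt{Insert} of $k$, $\pi'$ succeeds (so $k$ is absent from the persistent set) before the CPE and fails (so $k$ is present) after it; for a \texttt{Remove} of $k$ the situation is mirrored. Hence the CPE is the unique event moving $k$ into or out of the persistent abstract set.

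Next I would locate the CPE after the structural DWCAS and show the persistent set is frozen in between. The new pointer value (\texttt{newNode} for an insert, the unmarked successor for a remove) only exists in volatile memory once the structural DWCAS of \cref{line:insert-dwcas} or \cref{line:helpMarked-dwcas} succeeds, so no persistence event can make it durable before that DWCAS; thus the CPE, and with it the change to the persistent abstract set, necessarily follows the volatile change. For the interval between the two, I would invoke \cref{invar:dur-bit}: immediately after the structural DWCAS the updated next pointer is not durable and the \textit{old field} holds its last durable value -- \texttt{iflagCurr} (the pre-insert successor) for an insert, and the still-linked \texttt{nodeToDel} for a remove. Consequently any crash in this interval recovers the pre-update structure, so $k$ retains its pre-update membership in the persistent abstract set until the durability bit is set by the \texttt{Persist} DWCAS on \cref{line:physical-del-persist-dwcas} (reached from \cref{line:insert-persist} or \cref{line:helpMarked-persist}). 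Combining the two bounds yields the claim for both operation types; helping is harmless because, by \cref{theorem:persist-failed}, the persistence-bit DWCAS can only succeed on the very next pointer installed by the structural DWCAS, so its position after that DWCAS is preserved regardless of which process performs it.

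The main obstacle I anticipate is this middle step: rigorously tying the change in the persistent abstract set to the CPE rather than to some earlier background flush of the not-yet-durable value. The resolution rests on the fact that the durability bit is part of the pointer value and that \textit{next} and \textit{old} reside on one cache line and are updated atomically, so a background flush of the structural DWCAS's output copies both the non-durable next pointer and the preserved \textit{old field} together; \cref{invar:dur-bit} then guarantees that recovery still follows the \textit{old field} and leaves the persistent abstract set unchanged until the explicit persistence-bit DWCAS executes.
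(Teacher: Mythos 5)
Your proof is correct and takes essentially the same route as the paper's: both identify the volatile change with the structural DWCAS of \cref{line:insert-dwcas} or \cref{line:helpMarked-dwcas} (via \cref{theorem:vol-set-once}) and the persistent change with the CPE, and conclude the former precedes the latter. The only difference is that you explicitly justify the ordering (the new pointer value cannot become durable before the DWCAS that creates it) and the frozen interval via \cref{invar:dur-bit} and \cref{theorem:persist-failed}, whereas the paper asserts "the DWCAS by definition always happens before the CPE" without elaboration.
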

\begin{proof}
Consider an execution $E$ of the \physdelList~list.
Let $c$ be the configuration after the last event in $E$.
The persistent data structure at $c$ contains all of the durable nodes in volatile data structure at $c$ as well as any non-durable node in volatile data structure at $c$ where the node is not durable because of a remove operation.

Update operations modify the volatile abstract set before they update the persistent abstract set.
Insert operations modify the volatile abstract set at the success of the DWCAS on \cref{line:insert-dwcas} of the \texttt{Insert} function.
Remove operations modify the volatile abstract set at the success of the DWCAS on \cref{line:helpMarked-dwcas} of the \texttt{HelpMarked} function.
In both cases the DWCAS by definition always happens before the CPE of the update operation.

If a crash occurs after a remove operation $\pi$ succeeds the DWCAS on \cref{line:helpMarked-dwcas} of \texttt{HelpMarked} but before its CPE then the persistent abstract set will still contain the key that $\pi$ removed from the volatile abstract set.
More precisely, let $R$ be the set of remove operations that were incomplete in $E$.
Let $k_{\pi}$ be the key that was input to $\pi \in R$.
For any $\pi \in R$ that completes the DWCAS on \cref{line:helpMarked-dwcas} of \texttt{HelpMarked} but the CPE of $\pi$ has not occurred $k_{\pi}$ will not be in the volatile abstract set at $c$ but $k_{\pi}$ will be in the persistent abstract set at $c$. 

Similarly, if a crash occurs after an insert operation $\pi$ succeeds the DWCAS on \cref{line:insert-dwcas} of \texttt{Insert} but before its CPE then the the volatile abstract set will contain the key inserted by $\pi$ but the persistent abstract set will not.
\end{proof}

\begin{theorem}
\Cref{invar:persistent-set-2} holds for any configuration produced by an execution of the \physdelList~list.
\end{theorem}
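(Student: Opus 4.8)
The plan is to prove the invariant by induction on the number of events in the execution $E$, tracking how the volatile abstract set $V_c$ and the persistent abstract set $P_c$ evolve as events are appended. The base case is the empty execution starting from the empty list, where $V_c = P_c = \emptyset$, so both $P_c \setminus V_c$ and $V_c \setminus P_c$ are empty and the invariant holds vacuously. For the inductive step I would assume the invariant holds at the configuration $c$ reached after some prefix and argue that appending the next event $e$, reaching a configuration $c'$, preserves it.

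The crux is to identify precisely the events that modify each set. By \cref{theorem:vol-set-once}, the only events that change the volatile abstract set are the successful DWCAS on \cref{line:insert-dwcas} of \texttt{Insert} (which adds a key) and the successful DWCAS on \cref{line:helpMarked-dwcas} of \texttt{HelpMarked} (which removes a key), and each update is responsible for exactly one such change. The persistent set $P_c$, being the volatile abstract set of the crash-recovery extension, changes only at the CPE of an update, i.e.\ the psync that sets the persistence bit via the DWCAS on \cref{line:physical-del-persist-dwcas} of \texttt{Persist}. By \cref{theorem:volitile-before-persist}, for every update the volatile change strictly precedes the persistent change, so each key transits $V$ before it transits $P$.

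With these facts in hand I would argue both containments directly. For $k \in V_c \setminus P_c$: since $k \in V_c$, some insert $\pi$ performed its volatile DWCAS, but since $k \notin P_c$ the CPE of $\pi$ has not yet occurred (a removed-but-still-persistent key would instead land in $P_c \setminus V_c$). Because an operation returns only after ensuring its own CPE has occurred, $\pi$ can have no response in $E$, placing $k$ among the keys of pending inserts. The case $k \in P_c \setminus V_c$ is symmetric: $k$ was removed from $V$ by some remove $\pi$ whose CPE has not yet occurred, so $\pi$ is pending; here I would additionally invoke \cref{invar:dur-bit} to confirm that the unlinked node remains recoverable (hence in $P_c$) precisely until the CPE reverts the predecessor's \emph{old field}.

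The main obstacle will be accounting for helping. Because removes are completed cooperatively and the CPE may be executed by a helper rather than by the invoking process, I must be careful that ``the operation has no response'' is the correct notion and that \cref{theorem:persist-failed} guarantees the CPE is effected exactly once across all helpers, so that attributing a given discrepancy to a single pending operation is unambiguous. A secondary subtlety is that $P_c$ is defined through the recovery procedure, so I would need to confirm that recovery reconstructs exactly the durably linked nodes together with the nodes left non-durable by an in-progress remove; this is what makes the CPE the genuine transition point for $P$ and licenses the event-by-event bookkeeping above.
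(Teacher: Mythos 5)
Your proposal is correct and follows essentially the same route as the paper: the paper's own proof is a one-line appeal to \cref{theorem:volitile-before-persist} (whose proof already contains the case analysis attributing any $P\setminus V$ or $V\setminus P$ discrepancy to pending removes or inserts whose CPE has not occurred), and your event-by-event induction is just a more explicit rendering of that same argument via \cref{theorem:vol-set-once} and the volatile-before-persistent ordering. The extra care you take about helping and about $P_c$ being defined through recovery is sound but not a departure from the paper's reasoning.
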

\begin{proof}
This follows from \cref{theorem:volitile-before-persist}.
\end{proof}

\begin{theorem}
\texttt{Insert}, \texttt{Remove} and \texttt{Contains} are lock-free. 
\label{theorem:lock-free}
\end{theorem}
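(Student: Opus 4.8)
The plan is to prove lock-freedom by a system-wide progress argument: assuming some process takes infinitely many steps, I will show that infinitely many operations complete. First I would dispense with \texttt{Contains}, since none of its variants contains a retry loop. The traversal loop only advances \texttt{curr} along \texttt{next} pointers toward the sentinel tail (whose key is $+\infty$); the persistence-free path performs a constant number of rereads on \cref{line:physical-del-pf-contains-reads-start}--\cref{line:physical-del-pf-contains-reads-end} and then returns; and any invocation of \texttt{Persist} is a single non-looping DWCAS. Hence a \texttt{Contains} either completes in a finite number of its own steps, or its traversal fails to terminate only because infinitely many nodes with key less than its argument are linked ahead of it. Each such link is a successful \texttt{Insert} DWCAS on \cref{line:insert-dwcas}, so even this case already yields infinitely many completed operations.

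The core of the argument concerns \texttt{Insert} and \texttt{Remove}, whose \texttt{while true} loops are the only sources of unbounded retrying. Suppose, toward a contradiction, there is an execution $E$ in which processes take infinitely many steps but only finitely many operations complete. After the last completion, consider the infinite suffix $E_s$; every step in $E_s$ belongs to an \texttt{Insert} or \texttt{Remove} (or to a help routine invoked from one) that never returns and therefore retries its loop infinitely often. I would charge each retry to a successful, structure-changing DWCAS by some process. A retry occurs only when (i) the predecessor is found dirty and \texttt{HelpUpdate} is invoked, or (ii) the operation's own DWCAS on \cref{line:insert-dwcas} or \cref{line:remove-dwcas} fails. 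In case (ii), by the semantics of DWCAS the destination changed after the expected value was read, which requires a successful DWCAS by another process; in case (i), the help routine either itself performs a successful flag/mark/physical-delete DWCAS (\cref{line:helpRemove-mark-dwcas}, \cref{line:helpMarked-dwcas}) or observes that the targeted field has already changed, again witnessing another process's success.

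It then remains to show that infinitely many successful DWCASes force infinitely many completions, which is where the claim reduces to the volatile synchronization of Fomitchev and Ruppert \cite{fomitchev2004lock}. Each successful \texttt{Insert} DWCAS on \cref{line:insert-dwcas} is immediately followed by a non-blocking \texttt{Persist} and a \texttt{return true}, so it completes an operation; the remove protocol is the bounded sequence dflag $\to$ mark $\to$ physical delete, and by \cref{invar:marked} a marked \texttt{next} pointer never changes while by \cref{invar:2-step-remove} a deleted node is never relinked, so each remove contributes only a constant number of successful structural DWCASes before its physical-deletion DWCAS on \cref{line:helpMarked-dwcas} completes it. The one genuinely new ingredient is the persistence-setting DWCAS inside \texttt{Persist}, and I must argue it cannot sustain an infinite retry chain on its own. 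This follows from \cref{theorem:persist}, which guarantees that \texttt{Persist} fails only when an identical DWCAS succeeds, so a given non-durable pointer value is made durable at most once, after which the persistence bit is stable until the next structural change; thus persistence DWCASes are bounded by the number of structural DWCASes and are not an independent source of progress-free work. The main obstacle, exactly as in the original list, is excluding ABA-style infinite retrying in which a field changes and reverts without net progress; I would close this gap by inheriting the Fomitchev--Ruppert invariants (fresh allocation of inserted nodes and permanence of marking) and verifying that the extra \texttt{old} field and persistence bit, updated atomically with \texttt{next} by every DWCAS, do not reintroduce such cycles.
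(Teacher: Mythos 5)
Your proposal is correct and follows essentially the same route as the paper's proof: \texttt{Contains} and the shared traversal have no retry loops, and every retry of an \texttt{Insert}/\texttt{Remove} loop (or of \texttt{HelpRemove}) is charged to a successful DWCAS by a concurrent process. You are in fact somewhat more careful than the paper, which stops at ``a failed DWCAS means some other process completed a successful DWCAS'' and leaves implicit the closing step you spell out --- that each operation performs only a bounded number of successful structural and persistence DWCASes (the latter via \cref{theorem:persist}) before returning, so infinitely many successful DWCASes force infinitely many completed operations.
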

\begin{proof}
\texttt{Insert}, \texttt{Remove} and \texttt{Contains} all perform the same traversal of the list beginning from the head.
In each case the traversal is a \texttt{while} loop that after finding a node contain a key greater or equal to the search key.
Each iteration of the loop executes O(1) steps.
If no concurrent process successfully performs the DWCAS on \cref{line:insert-dwcas} of \texttt{Insert} or \cref{line:helpMarked-dwcas} of \texttt{HelpMarked} then the volatile data structure and the volatile abstract set is unchanged.
This means that the loop will terminate in a finite number of steps.
If a concurrent process successfully performs the DWCAS on \cref{line:insert-dwcas} of \texttt{Insert} then the loop could require one more iteration.
This means that the traversal will complete in a finite number of steps or another process completes a DWCAS updating the volatile abstract set.

In the case of \texttt{Contains} operations or the \texttt{Find} function, the instructions performed after the traversal execute O(1) steps.

Update operations have a retry loop.
One iteration of the loop executes O(1) steps. 
For \texttt{Insert} another iteration of the loop is required if the node $p$ node returned by the \texttt{Find} function is marked or flagged or if the DWCAS on \cref{line:insert-dwcas} fails.
If the DWCAS in \texttt{Insert} fails then some other concurrent process must have completed a DWCAS on $p$.
If $p$ is dirty then the \texttt{Insert} will help the concurrent update.
The helper function \texttt{HelpMarked} executes O(1) steps.
The helper function \texttt{HelpRemove} has a retry loop. 
A single iteration of this loop executes O(1) steps.
Another iteration of the loop is required if the DWCAS on \cref{line:helpRemove-mark-dwcas} fails and the value returned is not marked.
Since the condition of the loop checks that the node $nodeToDel$ is still the successor of the node $parent$, failing the DWCAS means that some other concurrent operation successfully performed a DWCAS on $parent$.
If no other process performs a DWCAS on $parent$ then the loop in \texttt{HelpRemove} will terminate.

For \texttt{Remove} another iteration of the loop is required if the node $p$ or the node $curr$ returned by the \texttt{Find} function is marked or flagged or if the DWCAS on \cref{line:remove-dwcas} fails.
If the DWCAS in \texttt{Remove} fails then some other concurrent process must have completed a DWCAS on $p$.
If $p$ is dirty or $node$ is dirty the \texttt{Remove} will help complete the concurrent update requiring the same steps as described for the case of \texttt{Insert}.
\end{proof}

\begin{theorem}
The \physdelList~list is always consistent with some persistent abstract set $P$.
\label{theorem:consistency}
\end{theorem}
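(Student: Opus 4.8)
The plan is to derive \Cref{theorem:consistency} directly from the abstract-set machinery already established, rather than reasoning about the recovery procedure from scratch. Fix an execution $E$ of the \physdelList~list, let $c$ be the configuration after its last event, and let $P_c$ and $V_c$ denote the persistent and volatile abstract sets at $c$. By the definition of consistency it suffices to show that every $k \in P_c$ is contained in a durable node of the volatile data structure, which for the recovery-relevant cases means a node that is the last durable value on the path realized in persistent memory.

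First I would split $P_c$ according to \Cref{invar:persistent-set-2}, which tells us that $P_c$ and $V_c$ agree except on keys belonging to pending operations: $V_c \setminus P_c$ consists only of keys of incomplete inserts (linked in volatile memory but whose CPE has not occurred), and $P_c \setminus V_c$ consists only of keys of incomplete removes (unlinked in volatile memory but whose CPE has not occurred). Consequently every $k \in P_c$ falls into exactly one of two cases: $k \in P_c \cap V_c$, or $k \in P_c \setminus V_c$. For $k \in P_c \cap V_c$ the node $n$ holding $k$ is reachable from the head by a next-pointer traversal, so it lies in the volatile data structure, and it remains only to argue $n$ is durable. Here I would invoke the characterization of persistent-data-structure membership established in the proof of \Cref{theorem:volitile-before-persist} together with \Cref{theorem:vol-set-once}: a head-reachable node is in the persistent data structure only if it is durable or non-durable because of a remove; since $k \in V_c$ it is not being removed in a way that would place it in $P_c \setminus V_c$, so $n$ must be durable.

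The delicate case is $k \in P_c \setminus V_c$, which I expect to be the main obstacle. Such a $k$ sits in a node $n$ that a pending remove has physically unlinked via the DWCAS on \cref{line:helpMarked-dwcas}, so $n$ is no longer reachable by a head traversal over next pointers, yet $k$ survives a crash because the predecessor $p$'s next pointer is not yet durable. The key tool is \Cref{invar:dur-bit} (via \Cref{theorem:dur-bit}): because $p.next$ is non-durable, $p.old$ holds the last durable value of $p.next$, and that value is exactly $n$. Thus $n$ is durable and is precisely the node recovery restores on the path through $p$, so it serves as the required durable node for $k$. I would package this by observing that the definition of the persistent data structure already counts such old-field-reachable, last-durable nodes, so $n$ qualifies.

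Combining the two cases shows every $k \in P_c$ resides in a durable node reachable in persistent memory, which is exactly the consistency condition. The principal care needed, and the step I expect to require the most scrutiny, is reconciling the head-only, next-pointer reachability in the definition of the volatile data structure with the old-field reachability that keeps removed-but-unpersisted keys alive until their CPE; \Cref{invar:dur-bit} is precisely the bridge that guarantees these keys remain in durable, recoverable nodes, and \Cref{theorem:vol-set-once} together with \Cref{theorem:volitile-before-persist} ensures the two abstract sets never drift apart by more than the pending operations already enumerated in \Cref{invar:persistent-set-2}.
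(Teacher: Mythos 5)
Your proposal is correct and rests on the same foundations as the paper's own argument, but you work out considerably more than the paper does. The paper's proof of \cref{theorem:consistency} is a two-sentence appeal: it cites \cref{theorem:vol-set-once} (each update changes the volatile abstract set exactly once) and \cref{theorem:volitile-before-persist} (volatile before persistent), and concludes that $P$ reflects every completed successful update plus some pending ones. You take the same two theorems but add an explicit decomposition of $P_c$ via \cref{invar:persistent-set-2} into $P_c \cap V_c$ and $P_c \setminus V_c$, and you correctly isolate the only genuinely delicate case: a key whose node has been physically unlinked by a pending \texttt{remove} whose CPE has not occurred. Your use of \cref{invar:dur-bit} (through \cref{theorem:dur-bit}) to argue that the predecessor's \textit{old field} still names the last durable value of its next pointer is exactly the right bridge, and it is a step the paper's proof silently skips. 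You are also right to flag the residual tension: taken literally, the paper's definition of consistency demands that the node be ``in the volatile data structure,'' i.e., head-reachable by next pointers, which fails for an unlinked-but-not-yet-persisted node; your resolution via old-field reachability and the persistent-data-structure definition is informal but captures what the definition must mean for the theorem to hold. In short, your proof is a strictly more careful version of the paper's, and the extra care is warranted.
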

\begin{proof}
From \cref{theorem:vol-set-once} we know that the volatile abstract set is changed exactly once by any update operation. 
It follows from \cref{theorem:volitile-before-persist} that the \physdelList~list will be consistent with $P$ since every completed successful update operation will be reflected in $P$ along with some of the pending update operation.
\end{proof}

\begin{theorem}
The PD~list is durable linearizable and lock-free.
\label{theorem:list-correctness}
\end{theorem}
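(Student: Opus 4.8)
The plan is to prove the two claims separately, leaning entirely on the machinery already developed. Lock-freedom is immediate from \Cref{theorem:lock-free}, which establishes that \texttt{Insert}, \texttt{Remove}, and \texttt{Contains} each guarantee system-wide progress. For durable linearizability it suffices, by definition, to exhibit for $ops(H)$ (the history of any execution $E$ with all crash events removed) a linearization that respects the happens-before order and obeys the sequential specification of the set. My strategy is to take the per-operation linearization points already assigned above and show they induce such a linearization, both inside each crash-free epoch and across crash boundaries.

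First I would argue correctness within a single crash-free epoch. By \Cref{theorem:vol-set-once}, every successful update changes the volatile abstract set exactly once, at a concrete DWCAS, so the update linearization points are distinct and totally ordered, and each lies strictly between its operation's invocation and response. The \texttt{Contains} linearization points similarly fall within the operation interval; for the persistence-free variant this was verified case by case above, including the subtle cases (e.g.\ \cref{line:physical-del-pf-contains-ret3} and \cref{line:physical-del-pf-contains-ret5}) where the point is chosen within the interval relative to a concurrent update's CPE. Placing operations in the order of these points then yields, within the epoch, a sequential history that preserves real-time order and matches the sequential specification, since each operation's response was shown to agree with membership of its key in the persistent abstract set at its linearization point.

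The crash-boundary argument is the heart of the proof and the main obstacle. Here I would use the persistent abstract set as the bridge state connecting consecutive epochs. The key observations are: (a) every operation completing before a crash has already executed its CPE, because in all linearization-point definitions the response follows the CPE, so by \Cref{theorem:volitile-before-persist} its effect is present in the persistent abstract set at the crash; (b) by \Cref{invar:persistent-set-2}, the persistent abstract set differs from the volatile abstract set only by keys belonging to operations pending at the crash---persisted-but-unreturned inserts account for $V \setminus P$, and persisted-but-unreturned removes account for $P \setminus V$; and (c) recovery restores the volatile data structure to equal the persistent data structure (by the definition of the persistent data structure), so the state from which the next epoch begins coincides with the persistent abstract set at the crash. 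I would then complete $ops(H)$ by inserting responses for exactly those pending operations whose CPE occurred before the crash and dropping the rest, which \Cref{invar:persistent-set-2} shows is consistent; the resulting set state after the pre-crash operations equals the persistent abstract set, which equals the post-recovery volatile state seen by the next epoch.

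Finally I would stitch the per-epoch sequential histories together in real-time order. Because the bridge state matches on both sides of every crash, the concatenation is a single sequential history equivalent to a completion of $ops(H)$ that respects happens-before and satisfies the sequential specification throughout, which establishes durable linearizability. I expect the genuinely delicate step to be justifying the completion choice at each crash: showing that the keys in $P \setminus V$ and $V \setminus P$ are \emph{exactly} accounted for by a valid subset of pending operations, which reduces to a careful application of \Cref{invar:persistent-set-2} together with \Cref{theorem:consistency}.
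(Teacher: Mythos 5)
Your proposal is correct and follows essentially the same route as the paper, which proves this theorem by combining \Cref{theorem:lock-free}, \Cref{theorem:consistency}, and the per-operation linearization-point assignments (themselves justified via \Cref{theorem:vol-set-once}, \Cref{theorem:volitile-before-persist}, and \Cref{invar:persistent-set-2}). You simply spell out the epoch-stitching and crash-boundary completion argument that the paper leaves implicit in its one-sentence assembly.
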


\Cref{theorem:consistency} and \cref{theorem:lock-free} and the way in which we choose the linearization points for the \physdelList~list collectively prove \cref{theorem:list-correctness}.

Proving the SWCAS implementations is significantly more involved due to the extra volatile memory synchronization. 
A proof of the SWCAS implementations could be constructed similarly to the proof of the \physdelList~list.
The main difference would be the proof of \cref{theorem:dur-bit} since a node's old pointer is updated independently of its next pointer.

\fi
\section{Evaluation}
\label{section:eval}

We present an experimental analysis of our persistent list compared to existing persistent lists on various workloads.
We test our variants of the \texttt{contains} operation separately meaning no run includes more than one of the variants.\footnote{Source code: \url{https://gitlab.com/Coccimiglio/setbench}}
To distinguish between our implementations of the \texttt{contains} operation we prefix the names of our persistent list algorithms with the abbreviation of a \texttt{contains} variant (for example PFLD refers to one of our persistent lists which utilized only Persistence-Free searches and the Logical-Deletion algorithm).
\ifx\fullpaper\undefined
Due to space constraints we only present the best performing implementations of our persistent list.
\else
We present the best performing implementations of our persistent list.
\fi
We test the performance of these lists in terms of throughput (operations per second). 
We also examine the psync behaviour of these algorithms. 
Specifically, we track the number of psyncs that are performed by searches and the number of psyncs that are performed by update operations. 

All of the experiments were run on a machine with 48 cores across 2 Intel Xeon Gold 5220R 2.20GHz processors which provides 96 available threads (2 threads per core and 24 cores per socket). 
The system has a 36608K L3 cache, 1024K L2 cache, 64K L1 cache and 1.5TB of NVRAM. 
The NVRAM modules installed on the system are Intel Optane DCPMMs.
We utilize the same benchmark as \cite{brown2020non} for conducting the empirical tests.
Keys are accessed according to a uniform distribution.
We prefill the lists to 50\% capacity before collecting measurements.
Each test consisted of ten iterations where each individual test ran for ten seconds.
The graphs show the average of all iterations.
Libvmmalloc was the persistent memory allocator used for all algorithms.

\myparagraph{Throughput}
Figure \ref{fig:vs-others-clflush} shows the throughput of our best persistent list variants compared to the existing algorithms.
Since the DWCAS implementation of our list out performed the SWCAS implementation we compare only our DWCAS implementations.
\SOFTList~ performs best when there is high contention in read dominant workloads and consistently best for non-read dominant workloads.

\noindent\textbf{\underline{Lesson learned:}} Persisting more information in update operations is generally more costly but persistence free searches do not seem to provide major performance improvements.

\begin{figure}[t!]
    \begin{subfigure}{1.0\linewidth}
        \centering
        \includegraphics[width=0.95\linewidth]{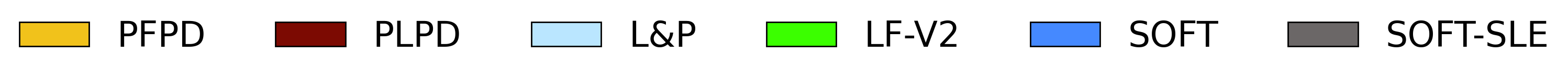}\\
        Legend for all plots in Section \ref{section:eval}
    \end{subfigure}
    \begin{subfigure}{0.24\linewidth}
        \centering
        \includegraphics[width=1\linewidth]{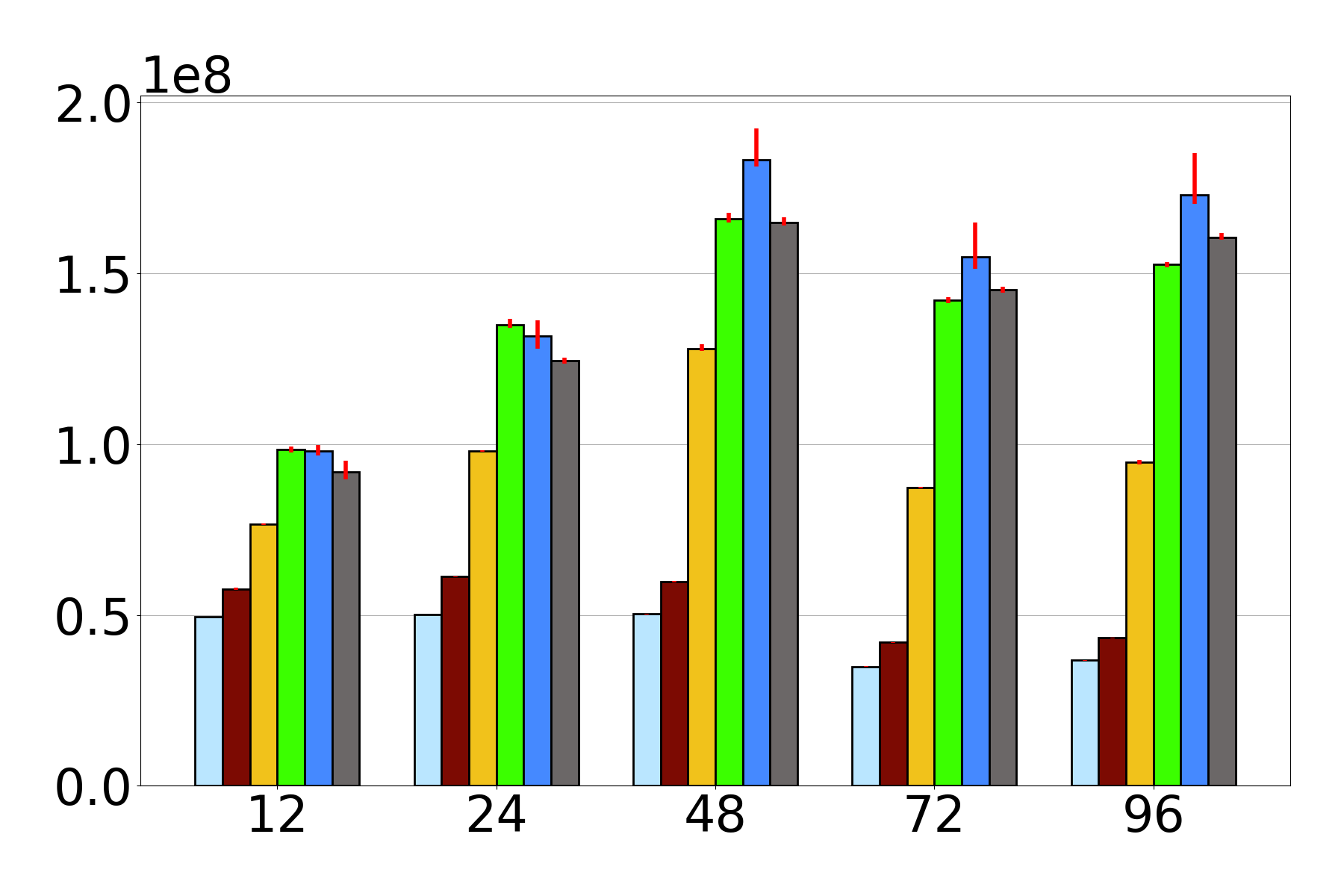}
        \vspace{-7mm}
        \captionsetup{justification=centering}
        \caption{99\% Search K=50}
        \label{subfig:vs-others-a}
    \end{subfigure}
    \begin{subfigure}{0.24\linewidth}
        \centering
        \includegraphics[width=1\linewidth]{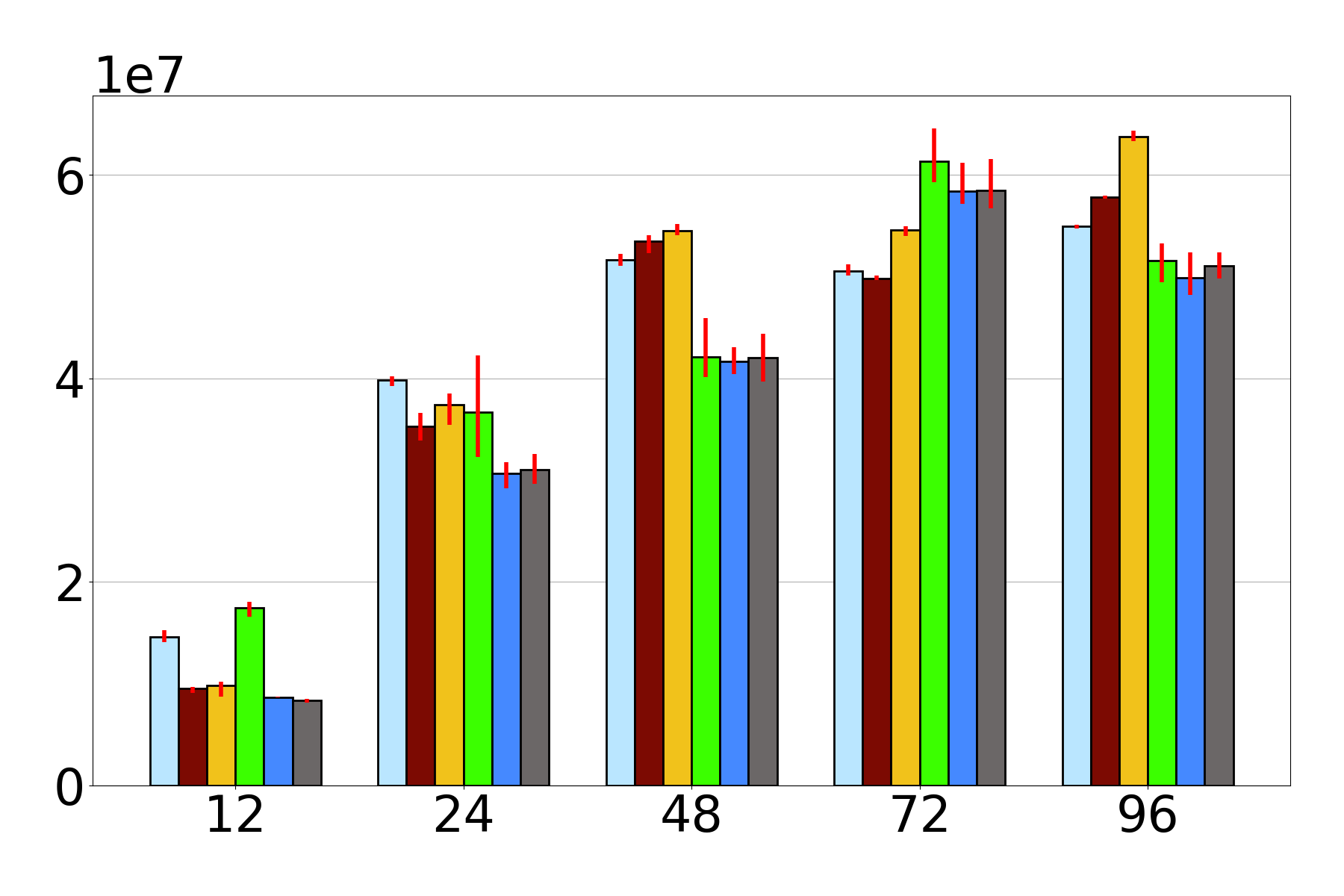}
        \vspace{-7mm}
        \captionsetup{justification=centering}
        \caption{99\% Search K=500}
        \label{subfig:vs-others-c}
    \end{subfigure}
    \begin{subfigure}{0.24\linewidth}
        \centering
        \includegraphics[width=1\linewidth]{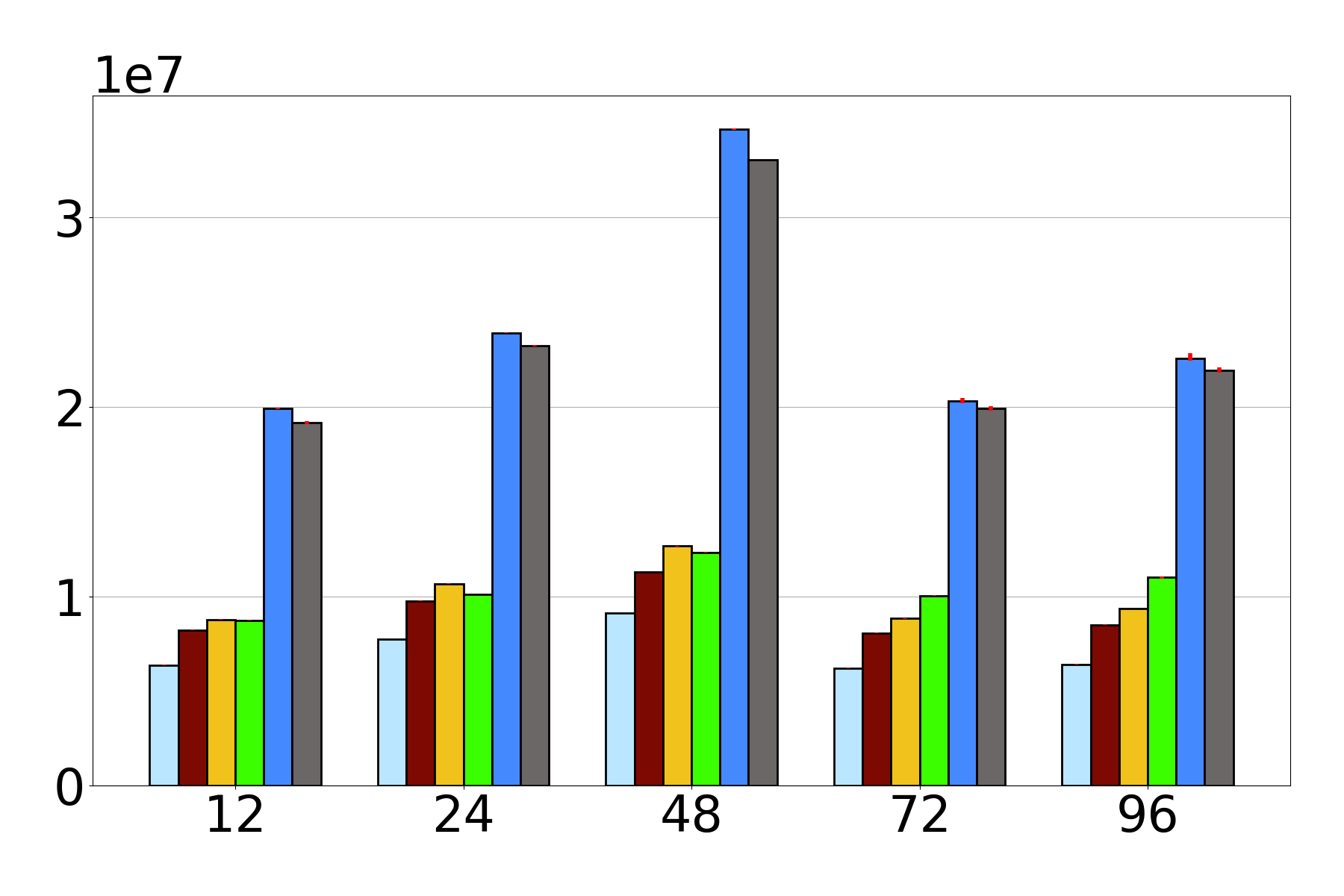}
        \vspace{-7mm}
        \captionsetup{justification=centering}
        \caption{50\% Search K=50}
        \label{subfig:vs-others-d}
    \end{subfigure}
    \begin{subfigure}{0.24\linewidth}
        \centering
        \includegraphics[width=1\linewidth]{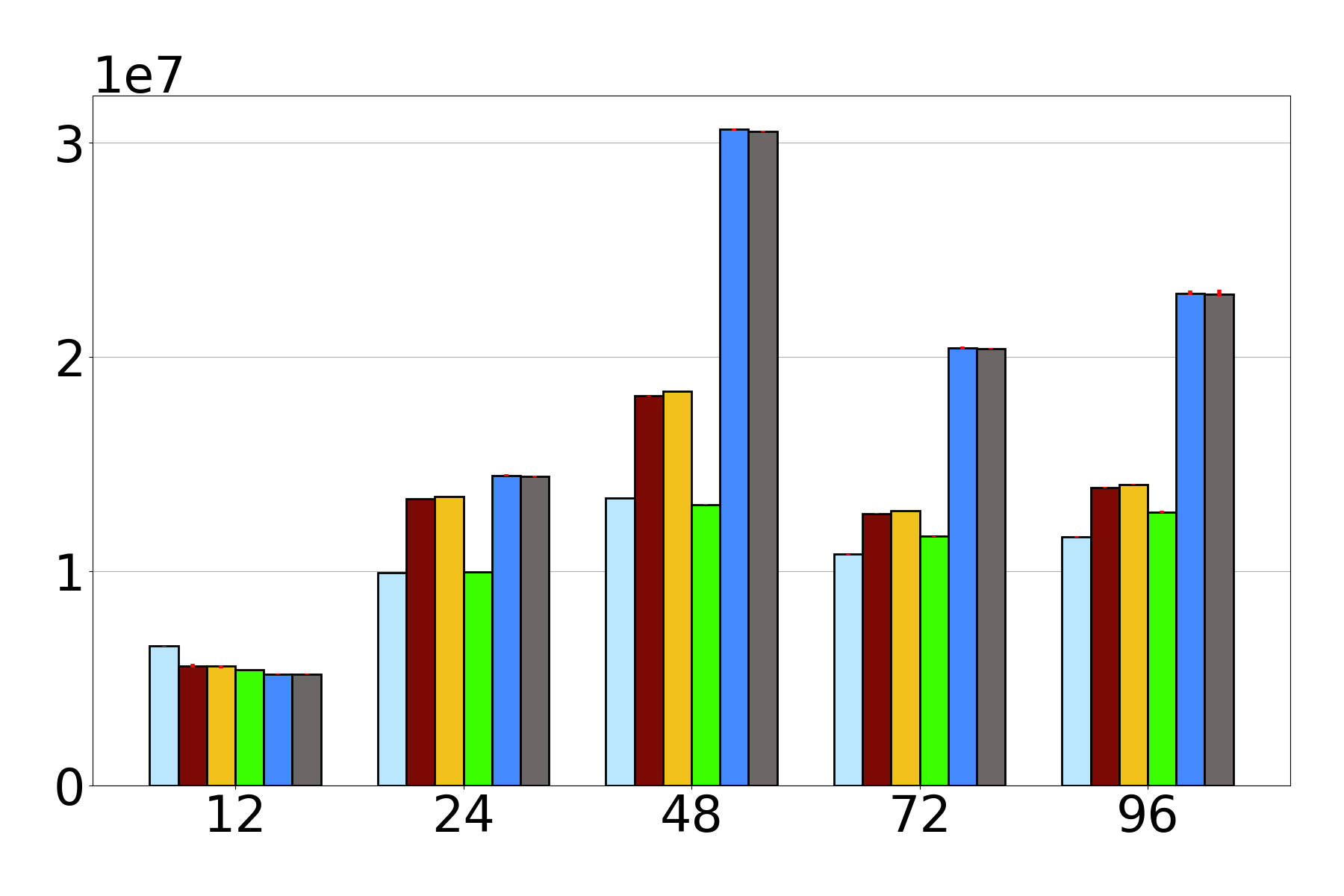}
        \vspace{-7mm}
        \captionsetup{justification=centering}
        \caption{50\% Search, K=500}
        \label{subfig:vs-others-f}
    \end{subfigure}
    \caption{Persistent list throughput. X-axis: number of concurrent threads. Y-axis: operations/second. K is the list size.}
    \label{fig:vs-others-clflush}
\end{figure}

\myparagraph{Psync Behaviour}
The recent trend to persist less data structure state has influenced implementations of persistent objects focused on minimizing the amount of psyncs required per operation.
We know that SLE linearizable algorithms cannot have persistence-free searches.
From \cite{cohen2018inherent} we also know that update operations require at least 1 psync.
Of the persistent lists that we consider, the persistent lists from \cite{zuriel2019efficient} are unique in that the the maximum number of psyncs per update operation is bounded.
To better understand the cost incurred by psyncs, we track the number of psyncs performed by read-only operations (searches) and the number of psyncs performed by update operations. 
Note that for updates this includes unsuccessful updates which might not need to perform a psync.
\Cref{fig:flushes-per-op} shows the average number of psyncs per search and the average number of psyncs per update operation.
We observe that searches rarely perform a psync in any of the algorithms that do not have persistence-free searches.
On average, update operations do not perform more than the minimum number of required psyncs.

\noindent\textbf{\underline{Lesson learned:}} Algorithmic techniques such as \textit{persistence bits} for reducing the number of psyncs are highly effective. On average, there are very few redundant psyncs in practice.

\myparagraph{Recovery}
It is not practical to force real system crashes in order to test the recovery procedure of any algorithm.
It is possible that one could simulate a system crash by 
running the recovery procedure as a standalone algorithm on an artificially created memory configuration.
This is problematic because the recovery procedure of a durable linearizable algorithm is often tightly coupled to some specific memory allocator (this is true of the existing algorithms that we consider).  
This makes a fair experimental analysis of the recovery procedure difficult.
It is easier to describe the worst case scenario for recovering the data structure for each of the algorithms.
To be specific, we describe the worst case persistent memory layout produced by the algorithm noting how this relates to the performance of the recovery procedure.

The \LFList~list does not persist data structure links.
As a result, there is no way to efficiently discover all valid nodes meaning the recovery procedure might need require traversing all of the memory.
The allocator utilized by Zuriel et al partitions memory into \textit{chunks}.
We can construct a worse case memory layout for the recovery procedure as follows:
Suppose that we completely fill persistent memory by inserting keys into the list.
Now remove nodes such that each chunk of memory contains only one node at an unknown offset from the start of the chunk.
To discover all of the valid nodes the recovery procedure must traverse the entire memory space.
The \SOFTList~list also does not persist data structure links.
The requirements of the recovery procedure for \SOFTList~list is the same as the \LFList~list.
We can construct the worst case memory layout for the recovery procedure in the same way as we did for the \LFList~list yielding the same asymptotic time complexity.
The \LPList~list can utilize an empty recovery procedure.
The actual recovery procedure for the list implemented by the authors of \cite{david2018log} does extra work related to memory reclamation. 

We utilize DWCAS and asynchronous flush instructions to achieve a minimum of one psync per \texttt{insert} operation.
There are some subtleties with this implementation that result in a recovery complexity which is O($N+n$) for a list containing $N$ nodes and a maximum of $n$ concurrent processes. 
Implementations that use SWCAS (or DWCAS allowing for a minimum of two psyncs per \texttt{insert}) can utilize an empty recovery procedure. 

\noindent\underline{\textbf{Lesson learned:}} If structure is persisted, recovery can be highly efficient. 
Without any persisted structure, recovery must traverse large regions (or even all) of shared memory. 

\begin{figure}[t!]
    \begin{subfigure}{0.24\linewidth}
        \centering
        \includegraphics[width=1\linewidth]{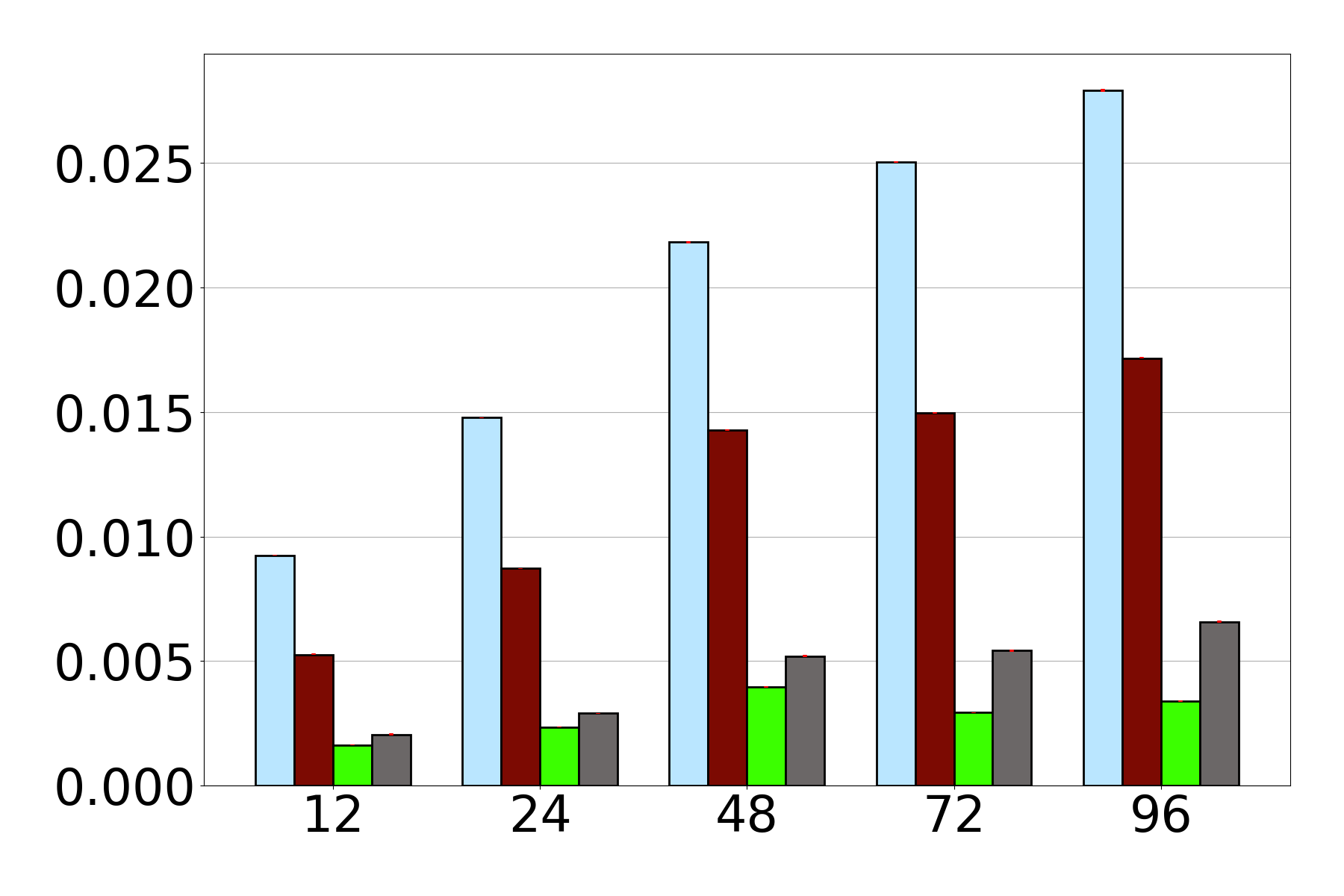}
        \vspace{-7mm}
        \caption{99\% Search}
        \label{subfig:flushes-per-contains-a}
    \end{subfigure}
    \begin{subfigure}{0.24\linewidth}
        \centering
        \includegraphics[width=1\linewidth]{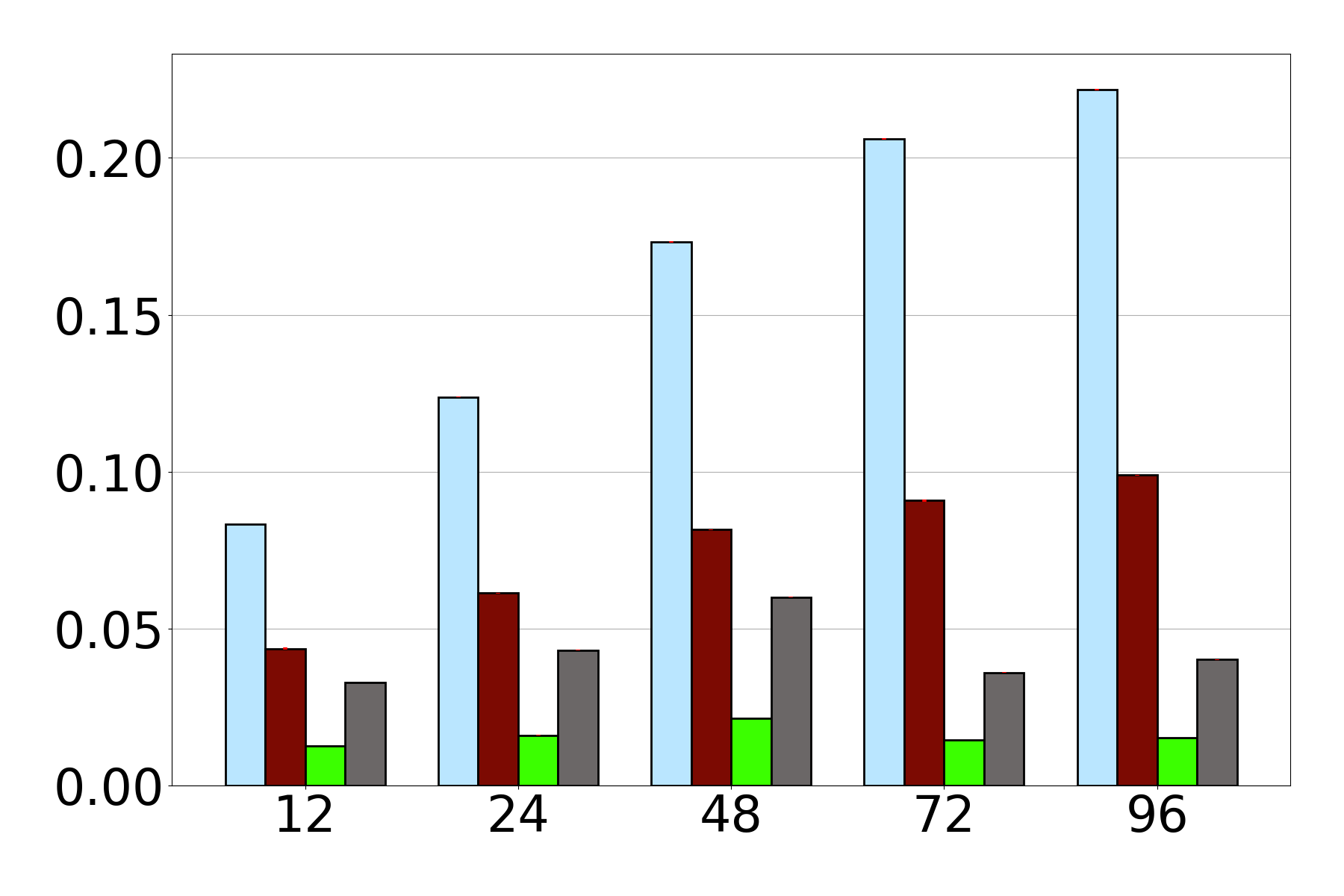}
        \vspace{-7mm}
        \caption{50\% Search}
        \label{subfig:flushes-per-contains-b}
    \end{subfigure}
    \begin{subfigure}{0.24\linewidth}
        \centering
        \includegraphics[width=1\linewidth]{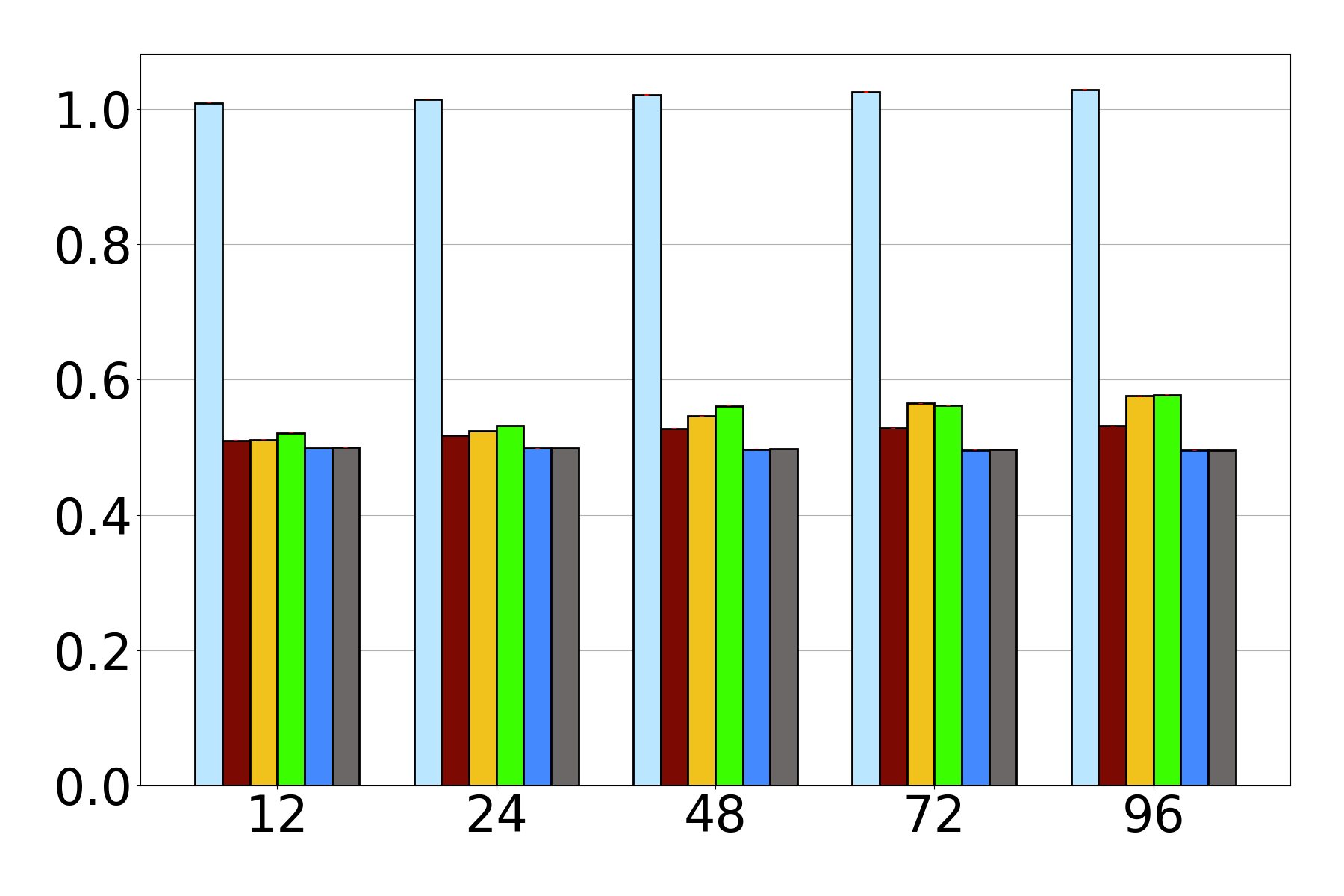}
        \vspace{-7mm}
        \caption{99\% Search}
        \label{subfig:flushes-per-update-a}
    \end{subfigure}
    \begin{subfigure}{0.24\linewidth}
        \centering
        \includegraphics[width=1\linewidth]{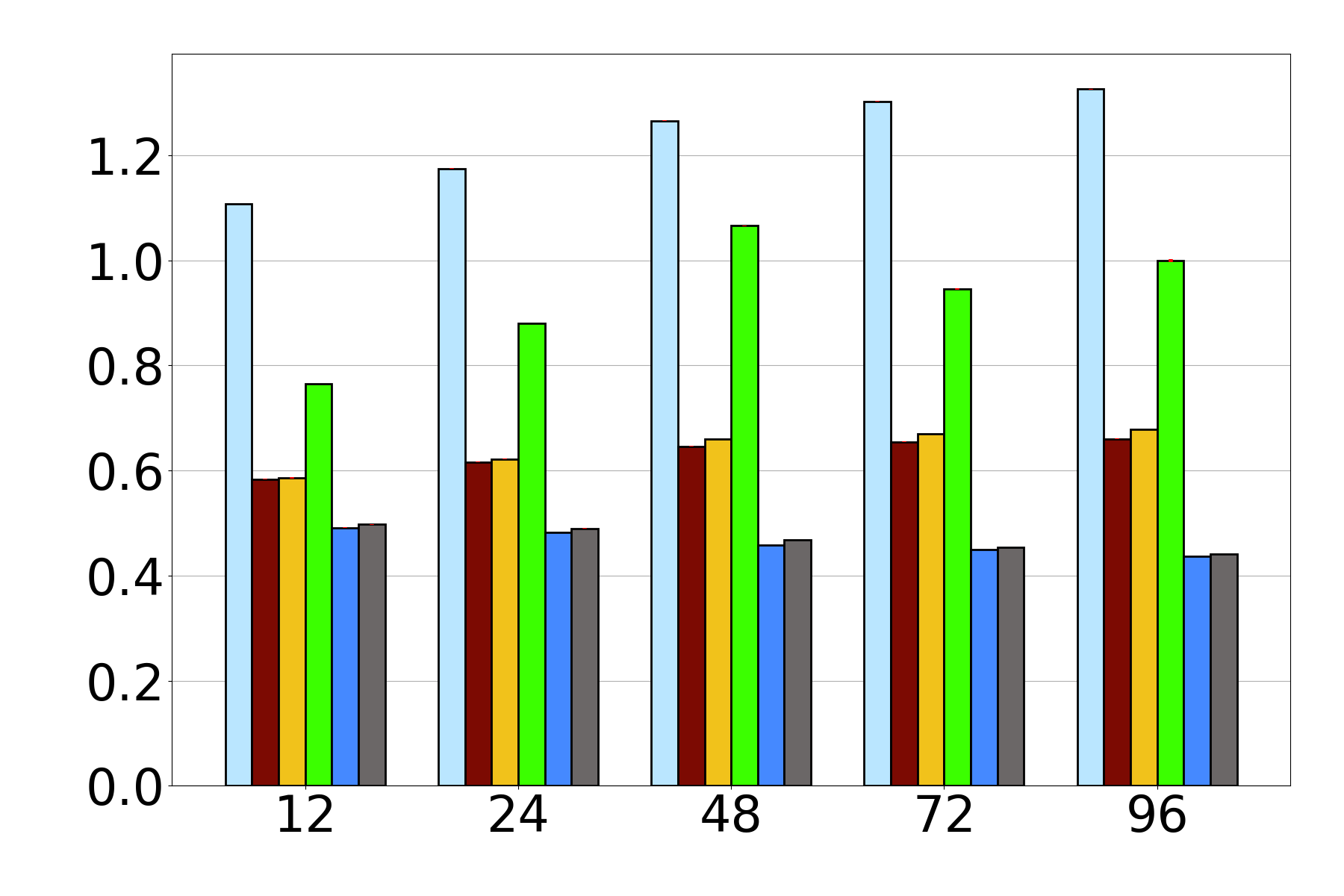}
        \vspace{-7mm}
        \caption{50\% Search}
        \label{subfig:flushes-per-update-b}
    \end{subfigure}
    \caption{Psync Behaviour. X-axis: number of concurrent threads. (a), (b) Y-axis: average psyncs/search, (c), (d) Y-axis: average psyncs/update. List size is 50.}
    \label{fig:flushes-per-op}
    \vspace{-3mm}
\end{figure}

\myparagraph{SLE linearizable vs. Durable linearizable Sets}
We have seen that there exists a theoretical separation between SLE linearizable and durable linearizable objects.
For persistent lists we observe that this separation does not lead to significant performance differences in practice. 
4 of the algorithms (Figure~\ref{fig:vs-others-clflush}) are SLE linearizable.
Specifically, our PLPD list, the L\&P list, LF list, and SOFT-SLE.
The SOFT list and our PFPD list which both use persistence-free searches are durable linearizable.
The high cost of a psync and the impossibility of persistence-free searches in a SLE linearizable lock-free algorithm would suggest that the SLE linearizable algorithms that we test should perform noticeably worse. 
In practice, it is true that for most of the tested workloads, the algorithms that have persistence-free searches perform best (primarily SOFT).
However, for many workloads, performance of SLE linearizable algorithms are comparable to the durable linearizable algorithms.
In fact, for some workloads, the SLE linearizable lists perform better than the durable linearizable alternatives.

\noindent\underline{\textbf{Lesson learned:}} SLE linearizable algorithms can be fast in practice, despite theoretical tradeoffs.
\section{Discussion}
We prove that update operations in durable linearizable lock-free sets will perform at least one redundant psync.
We motivate the importance of ensuring limited effect for sets and defined strict limited effect (SLE) linearizability for sets.
We prove that SLE linearizable lock-free sets cannot have persistence-free reads.
We implement several persistent lists and evaluate them rigorously.
Our experiments demonstrate that SLE linearizable lock-free sets can achieve comparable or better performance to durable linearizable lock-free sets despite the theoretical separation.
For the algorithms and techniques that we examined, supporting persistence-free reads is what separates the durable linearizable sets from the SLE linearizable.
However, the SLE linearizable sets rarely perform a psync during a read.
For those researchers that value ensuring limited effect for sets but are unsure about the performance implications, we recommend beginning with SLE linearizable implementations since a SLE linearizable implementation may not have much overhead and it may be sufficient for the application.
Our work also exposes that psync complexity is not a good predictor of performance in practice, thus motivating need for better metrics to compare persistent objects.

In this work we focused specifically on sets because we wanted to understand the psync complexity of a relatively simple data structure like sets.
We think that there is clear potential to generalize our theoretical results to other object types or classes of object types and perform similar empirical analysis of persistent algorithms for those objects, thus bridging the gap between theory and practice.

\subsubsection{Acknowledgements} 
This work was supported by: the Natural Sciences and Engineering Research Council of Canada (NSERC) Collaborative Research and Development grant: CRDPJ 539431-19, the Canada Foundation for Innovation John R. Evans Leaders Fund with equal support from the Ontario Research Fund CFI Leaders Opportunity Fund: 38512, NSERC Discovery Launch Supplement: DGECR-2019-00048, NSERC Discovery Program grant: RGPIN-2019-04227, and the University of Waterloo.

\bibliographystyle{splncs04}
\bibliography{refs}

\def\noopsort#1{} \def\No{\kern-.25em\lower.2ex\hbox{\char'27}}
  \def\no#1{\relax} \def\http#1{{\\{\small\tt
  http://www-litp.ibp.fr:80/{$\sim$}#1}}}
\begin{thebibliography}{10}
\providecommand{\url}[1]{\texttt{#1}}
\providecommand{\urlprefix}{URL }
\providecommand{\doi}[1]{https://doi.org/#1}

\bibitem{AF03}
Aguilera, M.K., Frolund, S.: Strict linearizability and the power of aborting.
  Tech. rep., HP Laboratories Palo Alto (2003)

\bibitem{attiya2011laws}
Attiya, H., Guerraoui, R., Hendler, D., Kuznetsov, P., Michael, M.M., Vechev,
  M.: Laws of order: expensive synchronization in concurrent algorithms cannot
  be eliminated. ACM SIGPLAN Notices  \textbf{46}(1),  487--498 (2011)

\bibitem{avni2016persistent}
Avni, H., Brown, T.: Persistent hybrid transactional memory for databases.
  Proceedings of the VLDB Endowment  \textbf{10}(4),  409--420 (2016)

\bibitem{ben2020separation}
Ben{-}Baruch, O., Ravi, S.: The limits of helping in non-volatile memory data
  structures. In: Devismes, S., Petit, F., Altisen, K., Luna, G.A.D., Anta,
  A.F. (eds.) Stabilization, Safety, and Security of Distributed Systems - 24th
  International Symposium, {SSS} 2022, Clermont-Ferrand, France, November
  15-17, 2022, Proceedings. Lecture Notes in Computer Science, vol. 13751, pp.
  84--98. Springer (2022)

\bibitem{golab15}
Berryhill, R., Golab, W.M., Tripunitara, M.: Robust shared objects for
  non-volatile main memory. In: 19th International Conference on Principles of
  Distributed Systems, {OPODIS} 2015, December 14-17, 2015, Rennes, France. pp.
  20:1--20:17 (2015)

\bibitem{brown2020non}
Brown, T., Prokopec, A., Alistarh, D.: Non-blocking interpolation search trees
  with doubly-logarithmic running time. In: Proceedings of the 25th ACM SIGPLAN
  Symp. on Principles and Practice of Parallel Programming. pp. 276--291 (2020)

\bibitem{coburn2011nv}
Coburn, J., Caulfield, A.M., Akel, A., Grupp, L.M., Gupta, R.K., Jhala, R.,
  Swanson, S.: Nv-heaps: Making persistent objects fast and safe with
  next-generation, non-volatile memories. ACM SIGARCH Computer Architecture
  News  \textbf{39}(1),  105--118 (2011)

\bibitem{cohen2017efficient}
Cohen, N., Friedman, M., Larus, J.R.: Efficient logging in non-volatile memory
  by exploiting coherency protocols. Proceedings of the ACM on Programming
  Languages  \textbf{1}(OOPSLA),  1--24 (2017)

\bibitem{cohen2018inherent}
Cohen, N., Guerraoui, R., Zablotchi, I.: The inherent cost of remembering
  consistently. In: Proceedings of the 30th on Symp. on Parallelism in
  Algorithms and Architectures. pp. 259--269 (2018)

\bibitem{correia2020persistent}
Correia, A., Felber, P., Ramalhete, P.: Persistent memory and the rise of
  universal constructions. In: Proceedings of the Fifteenth European Conference
  on Computer Systems. pp. 1--15 (2020)

\bibitem{david2018log}
David, T., Dragojevic, A., Guerraoui, R., Zablotchi, I.: Log-free concurrent
  data structures. In: 2018 $\{$USENIX$\}$ Annual Technical Conference
  ($\{$USENIX$\}$$\{$ATC$\}$ 18). pp. 373--386 (2018)

\bibitem{fomitchev2004lock}
Fomitchev, M., Ruppert, E.: Lock-free linked lists and skip lists. In:
  Proceedings of the twenty-third annual ACM Symp. on Principles of distributed
  computing. pp. 50--59 (2004)

\bibitem{friedman2020nvtraverse}
Friedman, M., Ben-David, N., Wei, Y., Blelloch, G.E., Petrank, E.: Nvtraverse:
  in nvram data structures, the destination is more important than the journey.
  In: Proceedings of the 41st ACM SIGPLAN Conf. on Programming Language Design
  and Impl. pp. 377--392 (2020)

\bibitem{QueueFriedman18}
Friedman, M., Herlihy, M., Marathe, V.J., Petrank, E.: A persistent lock-free
  queue for non-volatile memory. In: Krall, A., Gross, T.R. (eds.) Proceedings
  of the 23rd {ACM} {SIGPLAN} Symp. on Principles and Practice of Parallel
  Programming, PPoPP 2018, Vienna, Austria, February 24-28, 2018. pp. 28--40.
  {ACM} (2018)

\bibitem{friedman2021mirror}
Friedman, M., Petrank, E., Ramalhete, P.: Mirror: making lock-free data
  structures persistent. In: Proceedings of the 42nd ACM SIGPLAN International
  Conference on Programming Language Design and Implementation. pp. 1218--1232
  (2021)

\bibitem{harris-set}
Harris, T.L.: A pragmatic implementation of non-blocking linked-lists. In:
  DISC. pp. 300--314 (2001)

\bibitem{HS08-book}
Herlihy, M., Shavit, N.: The art of multiprocessor programming. Morgan Kaufmann
  (2008)

\bibitem{HW90}
Herlihy, M., Wing, J.M.: Linearizability: A correctness condition for
  concurrent objects. ACM Trans. Program. Lang. Syst.  \textbf{12}(3),
  463--492 (1990)

\bibitem{izraelevitz2016linearizability}
Izraelevitz, J., Mendes, H., Scott, M.L.: Linearizability of persistent memory
  objects under a full-system-crash failure model. In: International Symp. on
  Distributed Computing. pp. 313--327. Springer (2016)

\bibitem{izraelevitz2019basic}
Izraelevitz, J., Yang, J., Zhang, L., Kim, J., Liu, X., Memaripour, A., Soh,
  Y.J., Wang, Z., Xu, Y., Dulloor, S.R., Zhao, J., Swanson, S.: Basic
  performance measurements of the intel optane dc persistent memory module.
  arXiv preprint arXiv:1903.05714  (2019)

\bibitem{kolli2016high}
Kolli, A., Pelley, S., Saidi, A., Chen, P.M., Wenisch, T.F.: High-performance
  transactions for persistent memories. In: Proceedings of the Twenty-First
  International Conference on Architectural Support for Programming Languages
  and Operating Systems. pp. 399--411 (2016)

\bibitem{peng2019system}
Peng, I.B., Gokhale, M.B., Green, E.W.: System evaluation of the intel optane
  byte-addressable nvm. In: Proceedings of the International Symp. on Memory
  Systems. pp. 304--315 (2019)

\bibitem{ramalhete2021efficient}
Ramalhete, P., Correia, A., Felber, P.: Efficient algorithms for persistent
  transactional memory. In: Proceedings of the 26th ACM SIGPLAN Symp. on
  Principles and Practice of Parallel Programming. pp. 1--15 (2021)

\bibitem{wang2018easy}
Wang, T., Levandoski, J., Larson, P.A.: Easy lock-free indexing in non-volatile
  memory. In: 2018 IEEE 34th International Conference on Data Engineering
  (ICDE). pp. 461--472. IEEE (2018)

\bibitem{wei2021flit}
Wei, Y., Ben-David, N., Friedman, M., Blelloch, G.E., Petrank, E.: Flit: A
  library for simple and efficient persistent algorithms. arXiv preprint
  arXiv:2108.04202  (2021)

\bibitem{zuriel2019efficient}
Zuriel, Y., Friedman, M., Sheffi, G., Cohen, N., Petrank, E.: Efficient
  lock-free durable sets. Proceedings of the ACM on Programming Languages
  \textbf{3}(OOPSLA),  1--26 (2019)

\end{thebibliography}

\end{document}